\newtheorem{lemma}{Lemma}
\newtheorem{remark}{Remark}
\numberwithin{equation}{section}
\def\bea{\begin{eqnarray}}
\def\eea{\end{eqnarray}}
\def\be{\begin{equation}}
\def\ee{\end{equation}}
\def\ba{\begin{align}}
\def\ea{\end{align}}
\def\bse{\begin{subequations}}
\def\ese{\end{subequations}}
\newcommand{\nn}{\nonumber}
\def\PE{{\rm PE }}
\DeclareMathOperator{\Td}{Td}
\DeclareMathOperator{\ch}{ch}
\DeclareMathOperator{\rk}{rk}
\DeclareMathOperator{\Coh}{Coh}
\DeclareMathOperator{\Stab}{Stab}
\DeclareMathOperator{\Pic}{Pic}
\def\({\left(}
\def\){\right)}
\def\[{\left[}
\def\]{\right]}
\def\<{\left\langle}
\def\>{\right\rangle}
\def\hf{{1\over 2}}
\renewcommand\v{\mathsf v}
\newcommand{\eps}{\epsilon}
\newcommand{\I}{\mathrm{i}}
\newcommand{\cA}{\mathcal{A}}
\newcommand{\cC}{\mathcal{C}}
\newcommand{\cD}{\mathcal{D}}
\newcommand{\cE}{\mathcal{E}}
\newcommand{\cH}{\mathcal{H}}
\newcommand{\cO}{\mathcal{O}}
\newcommand{\Z}{{\mathbb Z}}
\newcommand{\IR}{\mathds{R}}
\newcommand{\IZ}{\mathds{Z}}
\newcommand{\IQ}{\mathds{Q}}
\newcommand{\IH}{\mathds{H}}
\newcommand{\IP}{\mathds{P}}
\def\scM{\mathscr{M}}
\def\hq{\hat q}
\def\CY{\mathfrak{Y}}
\def\bOm{\overline{\Omega}}
\def\bOmPi{\lefteqn{\overline{\phantom{\Omega}}}\Omega^\Pi}
\def\tOm{\widetilde{\Omega}}
\def\hint{h^{\rm (int)}}
\def\han{h^{\rm (an)}}
\def\hh{h^{(0)}}
\def\vths#1{\vartheta^{(#1)}}
\def\Gi#1{G^{(#1)}}
\def\mm{\ell_0}
\def\rmz{{\rm z}}
\def\gmax{g_{\rm max}}
\def\bOmH{\bOm_H}
\def\chiOD{\chi_{\cD}}
\def\DDb{D6-$\overline{\rm D6}$\ }
\def\tvths#1{\tilde\vartheta^{(#1)}}
\newcommand{\q}{\mbox{q}}
\newcommand\PT{\operatorname{PT}}
\newcommand\DT{\operatorname{DT}}
\newcommand\dt{\operatorname{DT}}
\def\GV{{\rm GV}}
\newcommand{\GVg}[2][Q]{{\GV}^{(#2)}_{#1}}
\newcommand\beq[1]{\begin{equation}\label{#1}}
\newcommand\eeq{\end{equation}}
\newcommand\beqa{\begin{eqnarray*}}
\newcommand\eeqa{\end{eqnarray*}}
\newtheorem*{rep@theorem}{\rep@title}
\newcommand{\newreptheorem}[2]{%
\newenvironment{rep#1}[1]{%
 \def\rep@title{#2 \ref{##1}}%
 \begin{rep@theorem}}%
 {\end{rep@theorem}}}
\newtheorem{Thm}{Theorem}
\newtheorem{Thm*}{Theorem}
\newtheorem{Prop}[Thm]{Proposition}
\newtheorem{Lem}[Thm]{Lemma}
\newtheorem{thm-int}{Theorem}
\theoremstyle{definition}
\newtheorem{Def-s}[Thm]{Definition}
\newenvironment{enumerate*}{\begin{enumerate}}{\end{enumerate}}
\title{
Quantum geometry and mock modularity
}
\author{Sergei Alexandrov}
\address{Laboratoire Charles Coulomb (L2C), Universit\'e de Montpellier,
CNRS, \\ F-34095, Montpellier, France}
\email{sergey.alexandrov@umontpellier.fr}
\author{Soheyla Feyzbakhsh}
\address{ Department of Mathematics, Imperial College, London SW7 2AZ, United Kingdom}
\email{s.feyzbakhsh@imperial.ac.uk}
\author{Albrecht Klemm}
\address{Bethe Center for Theoretical Physics and 
Hausdorff Center for Mathematics,\\ Universit\"at Bonn, D-53115, Germany}
\email{aoklemm@th.physik.uni-bonn.de}
\author{Boris Pioline}
\address{Sorbonne Universit\'e, CNRS, 
Laboratoire de Physique Th\'eorique et Hautes Energies, \\
Campus Pierre et Marie Curie, 4 place Jussieu, F-75005, Paris, France}
\email{pioline@lpthe.jussieu.fr}
\begin{document}
\setlength{\parskip}{0.2cm}

\begin{abstract}
In previous work, we used new mathematical relations between 
Gopakumar-Vafa (GV) invariants and rank 0 Donaldson-Thomas (DT) invariants 
to determine the first few terms in the generating series of Abelian D4-D2-D0 indices for a class of  compact one-parameter Calabi-Yau threefolds. This allowed us to obtain striking checks of S-duality, namely the prediction that these series should be vector-valued weakly holomorphic modular forms under $SL(2,\IZ)$.
In this work, we extend this analysis to the case of D4-D2-D0 indices with two units of
D4-brane charge, where S-duality instead predicts that the corresponding generating series should be mock modular with a specific shadow. For the  degree 10 hypersurface in weighted projective space $\IP_{5,2,1,1,1}$, and the degree 8 hypersurface in $\IP_{4,1,1,1,1}$, where GV invariants can be computed to sufficiently high genus, we find that the first few terms indeed match a unique mock modular form with the required
properties, which we determine explicitly. Turning the argument around, we obtain new boundary conditions on the holomorphic ambiguity of the topological string amplitude, which in principle allow to determine it completely up to genus 95 and 112, respectively, i.e. almost twice the maximal genus obtainable using gap and ordinary Castelnuovo vanishing conditions. 
\end{abstract}

\maketitle

\newpage
\tableofcontents

\setlength{\parskip}{0.2cm}

\section{Introduction}

One of the most striking predictions of the duality between type IIA strings and M-theory is that 
a large class of BPS black holes in type IIA strings compactified on a Calabi-Yau threefold $\CY$ (namely those with vanishing D6-brane charge)
are BPS black strings in disguise, obtained by wrapping an M5-brane on a suitable divisor $\cD\subset \CY$~\cite{Maldacena:1997de}. Consequently, the generating series of BPS indices counting D4-D2-D0 black hole microstates are identified with the elliptic genus of the black string 
superconformal field theory, and should therefore enjoy modular invariance~\cite{Maldacena:1997de,deBoer:2006vg,Gaiotto:2006wm,
Gaiotto:2007cd,Manschot:2007ha}. From the mathematical viewpoint
interpreting the physical BPS indices as the Donaldson-Thomas (DT) invariants $\Omega_\sigma(\gamma)$
(where $\sigma$ denotes a stability condition on the derived category of coherent sheaves 
$D^b\Coh( \CY)$, determined by the complexified K\"ahler structure on  
$\CY$), the origin of these modular properties is still largely mysterious, except in the special case when $\cD$ is a vertical divisor in an 
elliptic  \cite{Klemm:2012sx,Oberdieck:2016nvt} or K3 fibration \cite{Bouchard:2016lfg}. Moreover, when the divisor $\cD$ is reducible, a more careful analysis \cite{Alexandrov:2016tnf,Alexandrov:2017qhn,Alexandrov:2018lgp}  reveals that the generating series should in fact be a (vector valued, higher depth) mock modular form, with a specific modular anomaly determined recursively from the generating series attached to the components $\cD_i$ into which $\cD=\sum_i \cD_i$ decomposes (see \cite{Manschot:2010xp,Alim:2010cf,Dabholkar:2012nd,Cheng:2017dlj} for related work).

Since the appropriate space of modular forms is finite-dimensional, the full generating series of D4-D2-D0 indices is in principle 
fixed by a finite number of Fourier coefficients, for example the polar coefficients associated to negative powers of $\q$, the fugacity conjugate to D0-brane charge. This strategy  
was applied long ago to determine  D4-D2-D0 indices with one unit 
of D4-brane charge (also called Abelian) for the quintic in 
$\IP^4$, as well as for a few other CY threefolds  with one K\"ahler modulus (i.e. $b_2(\CY)=1$) in \cite{Gaiotto:2006wm,Gaiotto:2007cd,Collinucci:2008ht,VanHerck:2009ww}. In these works, a candidate for the generating series of D4-D2-D0 invariants with $\cD=H$ (where $H$ is the generator of the one-dimensional lattice $H_4(\CY,\IZ)=\IZ H$) 
was found from heuristic computations of the lowest coefficients in the $q$-expansion. 
This approach was revisited in \cite{Alexandrov:2022pgd}, where it was pointed out that similar physical arguments for other
one-parameter CY threefolds lead to coefficients incompatible with modularity.
Instead, in our last work \cite{Alexandrov:2023zjb}, we exploited recent 
mathematical progress in Donaldson-Thomas theory 
to compute rigorously the polar coefficients 
as well as a large number of non-polar coefficients, and found a precise match with a unique 
vector-valued modular form for most hypergeometric CY threefolds, providing striking evidence for S-duality in string theory (and correcting, in a few instances, the naive physical predictions from~\cite{Gaiotto:2006wm,Gaiotto:2007cd,Collinucci:2008ht,Alexandrov:2022pgd}). 

Our goal in this work is to extend the analysis of \cite{Alexandrov:2023zjb} to the case of D4-D2-D0 invariants 
with 2 units of D4-brane charge (i.e. $\cD=2H$), and identify the mock modular forms representing their generating series. 
The first step in this direction was carried out in  \cite{Alexandrov:2022pgd} where the generating series $h_{2,\mu}$
of these invariants was decomposed into the sum of an ordinary, vector-valued weak modular form $\hh_{2,\mu}$ 
and a specific mock modular form $\han_{2,\mu}$, constructed out of 
the standard generating series 
of Hurwitz class numbers \cite{Zagier:1975} and having the same modular anomaly as $h_{2,\mu}$,
\be
h_{2,\mu}=\hh_{2,\mu}+\han_{2,\mu}.
\label{hhh}
\ee
This reduced the problem to finding the ordinary weak modular form $\hh_{2,\mu}$, and thus again to determining
its polar coefficients. In this work we shall achieve this goal
for some of the simplest one-modulus CY threefolds,
by generalizing the relations between various topological invariants derived in \cite{Alexandrov:2023zjb}.
The CY threefolds that we will consider are
the degree 10 hypersurface in weighted projective space $\IP_{5,2,1,1,1}$, which we denote by $X_{10}$, 
and the degree 8 hypersurface in $\IP_{4,1,1,1,1}$, denoted by $X_8$.

The key idea behind the mathematical results~\cite{Toda:2011aa, Feyzbakhsh:2020wvm,Feyzbakhsh:2021rcv,Feyzbakhsh:2021nds,Feyzbakhsh:2022ydn}, on which 
our previous work \cite{Alexandrov:2023zjb} is based, is to study wall-crossing in a complex 
one-dimensional family of weak stability conditions on the bounded derived category of coherent sheaves 
$D^b\Coh\CY$, which is the correct mathematical description of BPS states 
in type IIA string theory compactified on $\CY$. This family of weak stability conditions agrees with the physical stability condition in the infinite volume limit (i.e. when the K\"ahler modulus $\rmz \to\I\infty$), but has a much simpler chamber structure. Moreover, for sufficiently `small' charge $\gamma$ (see \eqref{condsmall} below for the precise smallness criterium), it follows from the Bayer-Macr\`\i-Toda (BMT) inequality conjectured in  \cite{bayer2011bridgeland} that there exists a chamber $\sigma(\gamma)$ where the Donaldson-Thomas invariant $\Omega_{\sigma(\gamma)}(\gamma)$ vanishes.
The BMT inequality has been proven for the quintic in $\IP^4$ \cite{li2019stability} and a handful of other models, but it is widely believed to hold for any CY threefold, as it is a very natural step in the construction of Bridgeland stability conditions.\footnote{While the
physical origin of the BMT inequality is still mysterious, in \S\ref{sec_weak} we offer a physical interpretation of one of its consequences in terms of the entropy of single-centered black holes. See also \cite{Halder:2023kza} for a similar interpretation in the context of five-dimensional black holes.}
Starting from this vanishing chamber, one can in principle compute 
$\Omega_{\sigma}(\gamma)$ for any (weak) stability condition using the wall-crossing formula
of \cite{ks,Joyce:2008pc}. 

Specifically, by considering wall-crossing for charge
$\gamma$  with $-1$ unit of D6-brane charge and D2-D0 charge $(Q,m)$ sufficiently close to the Castelnuovo bound (see \eqref{CastPT} below), 
one can express the index at large volume as a sum of wall-crossing contributions where the anti-D6-brane emits D4-D2-D0 bound states with charges $\gamma_i$, which are the states that we aim to count. The index $\Omega(\gamma)$ at large volume is otherwise known as the Pandharipande-Thomas invariant $\PT(Q,m)$, and is computable in terms of the Gopakumar-Vafa invariants $\GVg{g}$, which determine the topological string partition function at large volume. The latter are in turn computable by integrating
the holomorphic anomaly equations, so long as sufficiently many conditions are available at each genus to fix the holomorphic ambiguity. In \cite{Alexandrov:2023zjb}, we restricted to cases where 
the charge $\gamma$ is so close to the Castelnuovo bound that the only D4-D2-D0 charges $\gamma_i$ contributing to wall-crossing had unit D4-brane charge. This provides an efficient way of computing the 
Abelian D4-D2-D0 invariants for small enough (but not necessarily polar) D0-brane charge. Conversely, having computed sufficiently many of these Abelian D4-D2-D0 invariants so as to uniquely pin down the modular generating series, one can input the resulting infinite series of invariants to compute PT invariants at larger charge $m$, or GV invariants at higher genus $g$, and hence obtain new boundary conditions for the direct integration method. 

\begin{table} [t]
\begin{centering}
$$
\begin{array}{|l|r|r|r|r|r|r|r|r|r|r|r|r|r|r|r|r|}
\hline \CY  & \chi& \kappa  &c_{2} 
& n_1  & C_1 & n_2  & C_2  
& g_{\rm integ} & g_{\rm mod}^{(1)} & g_{\rm mod}^{(2)}  & g_{\rm avail}\\   \hline
X_5(1^5)   
& -200   &5   &  50 
& 7 & 0 & 36 & 1 
& 53 & 69 &  80  & 64
\\
X_6(1^4,2)  
& -204& 3 & 42
& 4 & 0 & 19 & 1 
& 48 & 66 & 84 & 48
\\
X_8(1^4,4)  
&-296  &2 & 44  
&  4 & 0  & 14 & 1 
& 60 & 84 & 112 & 64
\\
X_{10}(1^3,2,5)  
&  -288&  1 & 34
& 2   & 0 & 7 & 0 
& 50 & 70 & 95 & 67
\\
\hline
\end{array}
\vspace{0.2cm}
$$
\caption{Relevant data for the CY hypersurfaces  of degree 5, 6, 8 and 10.  
The columns $n_1,C_1$ and $n_2,C_2$ indicate the number of polar terms and modular constraints
on the generating series of D4-D2-D0 invariants with $r=1$ and $r=2$, respectively, taken from \cite{Alexandrov:2022pgd}.
The  column $g_{\rm integ}$  indicates the maximal genus for which GV invariants $\GV_Q^{(g)}$ can be determined by standard
direct integration,  using only the usual regularity conditions and the known expression for GV invariants saturating the bound
$g\leq g_{\rm max}(Q)$ for $Q=0 \mod \kappa$. The columns $g_{\rm mod}^{(r)}$ with $r=1$ and 2 indicate the reachable genus assuming that D4-D2-D0 indices with D4-brane charge up to $r$ are known.
The column $g_{\rm avail}$ indices the genus up to which 
GV invariants have been effectively computed at this stage.
\label{table1}}
\end{centering}
\end{table}

In the present work, we shall weaken the restrictions on $\gamma$ such that D4-D2-D0 charges $\gamma_i$ with two units of D4-brane charge can also contribute, giving us access to D4-D2-D0 indices with $\cD=2H$. In this case, the primitive wall-crossing formula of \cite{Denef:2007vg} is no longer sufficient, instead one needs the  general wall-crossing formula of \cite{ks,Joyce:2008pc}. Moreover, one requires GV invariants of even higher genus than in the Abelian case in order to compute all polar terms. For the two models $X_8$ and $X_{10}$ considered in this work, it turns out that our current knowledge of GV invariants enables us to compute a sufficient number of terms so that we can identify a candidate mock modular form uniquely and obtain several very non-trivial checks. Conversely, by inputting the infinite series
of  invariants with two units of D4-brane charge, we obtain an infinite number of new conditions on the topological string partition function, which in principle allow us to push   the maximal genus attainable by the direct integration method  by a factor of roughly $1.8$, see Table 
\ref{table1} and Figure \ref{figXGV}. Unfortunately, our current knowledge of GV invariants does not allow us to 
carry out this program for other one-parameter CY threefolds 
at this point.

The remainder of this work is organized as follows. In \S\ref{sec_review}, we recall the notations from \cite{Alexandrov:2023zjb} and the main
relations satisfied (or expected to be satisfied) by DT, PT and GV invariants, 
specializing to the case of CY threefolds
with Picard rank 1 (hence a single K\"ahler parameter). 
In passing, in \S\ref{sec_weak} we give a physical interpretation
of the BMT inequality (or rather, a consequence thereof) in terms of the entropy of single-centered black holes. 
In \S\ref{sec_main}, we spell out the main theorem on which this work is based, giving an explicit relation between PT invariants and
D4-D2-D0 indices with at most two units of D4-brane charge.
In \S\ref{sec_X108}, we apply this tool to compute the first few terms
in the generating series of D4-D2-D0 indices with two units of 
D4-brane charge for two of the simplest models, namely $X_{10}$ and $X_8$, and identify 
in both cases a unique vector-valued mock modular form which
reproduces them. In \S\ref{sec_disc} we discuss our results and mention some open directions.
The mathematical proof of the main theorem and related results are presented in \S\ref{sec_proofs}.

\subsection*{Acknowledgements}
The authors are grateful to Nava Gaddam, Jan Manschot and Thorsten Schimannek for collaboration
on the previous works \cite{Alexandrov:2022pgd,Alexandrov:2023zjb}
which underlie the present one.
The research of BP is supported by Agence Nationale de la Recherche under contract number ANR-21-CE31-0021.
SF acknowledges the support of EPSRC postdoctoral fellowship EP/T018658/1.

\section{Brief review on BPS indices for Picard rank one CY threefolds}
\label{sec_review}

\subsection{Charge vectors and Donaldson-Thomas invariants}

In this work, we follow the same notations as in \cite{Alexandrov:2023zjb}. In particular,
$\CY$ is a smooth projective Calabi-Yau threefold with rank one Picard lattice $\Lambda=H^{1,1}(\CY)\simeq H^{2}(\CY,\IZ) = H \IZ$. We denote $\kappa=\int_{\CY} H^3$
the degree and $c_2=\int_{\CY} H.c_2(T\CY)$ the second Chern class of the tangent bundle.
The dual lattice $\Lambda^*=H^4(\CY,\IZ)$ is then generated by $H^2/\kappa$. 
Given a coherent sheaf $E$, we identity the Chern character $\ch(E)$ with the vector of rational numbers
\be
\label{defC0123}
[C_0,C_1,C_2,C_3](E) := \int_{\CY} [H^3 \ch_0(E)\,,\, H^2.\ch_1(E)\,,\, H.\ch_2(E)\,,\, \ch_3(E)] \in \IQ^4\,,
\ee
such that $\ch=(C_0+C_1 H + C_2 H^2+ C_3 H^3)/\kappa$.
Its components satisfy the following quantization conditions
\be
C_0\in \kappa \IZ,
\qquad
C_1\in  \kappa \IZ,
\qquad
C_2 \in \IZ+\frac{C_1^2}{2 \kappa},
\qquad
C_3 \in \IZ - \frac{c_2}{12\kappa}\, C_1.
\label{defCvec}
\ee
Similarly, we identity the electromagnetic charge (or Mukai vector) 
\be
\gamma(E):=\ch(E) \sqrt{\Td(T\CY)} = p^0 + p^1 H -  \frac{q_1}{\kappa}\, H^2
+ \frac{q_0}{\kappa}\,H^3
\ee
with the vector of rational numbers  $(p^0,p^1,q_1,q_0)$. 
The Chern and Mukai vectors are related by
\be
\label{Mukaibasis}
p^0 = \frac{C_0}{\kappa}\, ,
\qquad
p^1 = \frac{C_1}{\kappa}\, ,
\qquad
q_1
= - C_2 - \frac{c_2}{24\kappa}\, C_0,
\qquad
q_0= C_3+\frac{c_2}{24\kappa}\, C_1,
\ee
such that
\be
\label{quant}
p^0\in\IZ,
\qquad
p^1\in\IZ,
\qquad
q_1\in \IZ+\frac{\kappa}{2} \,(p^1)^2 - \frac{c_2}{24}\, p^0,
\qquad
q_0\in \IZ-\frac{c_2}{24}\, p^1.
\ee
In this basis, the Dirac pairing takes the Darboux form
\be
\langle \gamma, \gamma'\rangle 
:= \int_{\CY}
\gamma(E')^\vee \, \gamma(E) = q_0 p'^0 + q_1 p'^1 - q'_1 p^1 - q'_0 p^0\, .
\ee
Under tensoring with the line bundle  $\cO_\CY(k H)$, with Chern vector $[\kappa,\kappa k,\kappa k^2,\kappa k^3]$, 
the components of the Mukai vector transform as
\be
\label{specflowD6}
\begin{split}
p^0\mapsto p^0,
\qquad &
p^1\mapsto p^1+k p^0,
\qquad
q_1 \mapsto q_1 - \kappa k\, p^1  -\frac{\kappa k^2}{2}\, p^0\, ,
\\
&\,q_0 \mapsto q_0 -k q_1 +\frac{\kappa k^2}{2}\,  p^1 +
\frac{ \kappa k^3}{6}\, p^0 \, .
\end{split}
\ee
We  refer to this transformation as a spectral flow, and denote the transformed sheaf as $E(k)$. These definitions
and quantization conditions extend to objects $E=(\dots\rightarrow \cE^{-1}
\rightarrow \cE^{0} \rightarrow \cE^{1}\rightarrow \dots)$ 
in the derived category of coherent sheaves $\cC=D^b\Coh\CY$, with Chern
vector $\ch E =\sum_k (-1)^k \ch(\cE^k)$. As usual, we denote by $E[n]$
the shifted object $(\cE^{k-n})_{k\in\IZ}$, with Chern vector 
$\ch E[n]=(-1)^n \ch E$.

Given any charge vector $\gamma$ satisfying the quantization conditions \eqref{quant},  and any stability condition $\sigma=(Z,\cA)\in\Stab\cC$ (where $Z$ is a central charge function and $\cA$,
known as heart, is an Abelian subcategory of $\cC$ subject to certain conditions),
we denote by $\bOm_\sigma(\gamma)\in \IQ$ the (rational) DT invariant associated
to the moduli stack of semi-$\sigma$-stable objects of charge $\gamma$, and set 
\be
\label{defntilde}
\Omega_\sigma(\gamma) := \sum_{d|\gamma}  \frac{\mu(d)}{d^2}\,  \bOm_\sigma(\gamma/d),
\ee
where $\mu(d)$ is the Moebius function (equal to $+1$ ($-1$)  if $d$ is a square-free positive integer with an even (odd) number of prime factors, and zero otherwise). When the stability condition is generic with respect
to $\gamma$ (such that any  $\sigma$-semi-stable object is $\sigma$-stable),  
$\Omega_\sigma(\gamma)$ is conjecturally integer. If it furthermore lies 
along the slice of $\Pi$-stability condition, then $\Omega_\sigma(\gamma)$
is expected to coincide with the physical BPS index. The variation of $\bOm_\sigma(\gamma)$ with respect to $\sigma$ is governed by the standard wall-crossing formula \cite{Joyce:2008pc,ks}.

\subsection{Weak stability conditions and BMT inequality}
\label{sec_weak}

Rather than working with physical stability conditions (whose mathematical existence remains conjectural in general), it turns out to be convenient to use a family of weak stability conditions
with central charge function\footnote{The parameter $a$ is often
traded for $w=\frac12(a^2+b^2)$. then the weak stability condition is denoted by $\nu_{b,w}$. \label{fooabw}}
\be
Z_{b,a}(\gamma) = 
-a \(C_2-b C_1+\frac{b^2}{2} \,C_0\) + \frac{a^3}{2}\, C_0 + \I\, a^2 (C_1-b C_0) 
\ee
parametrized by $b+\I a\in \IH$, and heart $\cA_{b}$ obtained by tilting
the Abelian category $\Coh(\CY)$ with respect to the slope function $\mu_b=\frac{C_1}{C_0}-b$
(see \cite[\S 2]{Alexandrov:2023zjb} for more details). 
The advantage is that the chamber structure along this family is much simpler, with walls being nested semi-circles in the upper-half plane. 

Moreover, the conjectural BMT inequality (which lies at the basis of the construction of Bridgeland stability conditions) implies that the index $\bOm_{b,a}(\gamma)$ vanishes unless
\be
\label{BMTineq0}
L_{b,a}(\gamma) := \frac12 (C_1^2-2C_0 C_2) (a^2+b^2) + (3 C_0 C_3-C_1 C_2) b+(2C_2^2-3 C_1 C_3)\geq 0\, .
\ee
The condition $C_1^2-2C_0 C_2\geq 0$ is furthermore required by the classical Bogomolov-Gieseker inequality. In particular, the condition \eqref{BMTineq0}
rules out an open region above the real axis $a=0$ provided the discriminant 
of $L_{b,0}(\gamma)$ with respect to $b$ is strictly positive. This discriminant
is given by the quartic polynomial
\be
\label{defP4}
\begin{split}
P_4(\gamma) := &\,  (3 C_0 C_3-C_1 C_2) ^2 - 2  (C_1^2 -2 C_0 C_2) ( 2 C_2^2-3 C_1 C_3)
 \\
 =&\,
8 C_0 C_2^3  + 6 C_1^3 C_3 + 9 C_0^2 C_3^2  -3 C_1^2 C_2^2 - 18 C_0 C_1 C_2 C_3  \ .
\end{split}
\ee
For `small' charges, defined by the conditions 
\be
\label{condsmall}
C_1^2-2C_0 C_2> 0\, ,
\qquad 
P_4(\gamma)>0\, ,
\ee
the inequality \eqref{BMTineq0} therefore provides an empty chamber
in the vicinity of the real axis $a=0$ where the index $\bOm_{b,a}(\gamma)$ vanishes (assuming that the BMT inequality holds true). The DT invariants then can be obtained in other regions by applying the wall-crossing formulae (which also hold for weak stability conditions).

In order to interpret the condition $P_4(\gamma)>0$ physically, it is useful to adjust
the normalization of $P_4$ and rewrite it in terms of the electromagnetic charges:
\be
\label{P4I4}
-\frac{P_4(\gamma)}{9\kappa^2} 
=I_4(\gamma)
+\frac{c_2}{36\kappa} \(\kappa (p^1)^2 + 2 p^0 q_1 \)^2 
+ \frac{(c_2 p^0)^2}{432\kappa}  \(\kappa (p^1)^2 + 2 p^0 q_1 \)
+ \frac{c_2^3 (p^0)^4}{15552\kappa}\, ,
\ee
where 
\be
\label{defI4}
I_4(\gamma):=\frac{8 }{9 \kappa }\, p_0 q_1^3 -\frac{2\kappa}{3} \,  q_0 (p^1)^3  - ( p^0 q_0)^2
 +\frac{1}{3} \,(p^1 q_1)^2  -2 p^0 p^1  q_0 q_1
\ee
is the quartic polynomial controlling the existence 
of single-centered black holes in the large volume limit\footnote{By large volume limit, we mean the limit in which the charges $(p^0,p^1,q_1,q_0)$ scale as $(1,\lambda,\lambda^2,\lambda^3)$ with $\lambda\to\infty$, such that the attractor point lies at large volume.
Under this scaling, $I_4(\gamma)$ is homogeneous of degree 6, while the remaining 
terms in \eqref{P4I4} are of subleading degree 4, 2 and 0.} \cite{Shmakova:1996nz} (see for example \cite[(3.6)]{Manschot:2011xc}). 
Thus, so long as the subleading $c_2$-dependent terms in \eqref{P4I4} are smaller in absolute value than 
$|I_4(\gamma)|$, the condition $P_4(\gamma)>0$, which ensures the existence of an empty chamber in the family of weak stability conditions, is complementary to the condition $I_4(\gamma)>0$ which ensures
the existence of single-centered black holes in the large volume limit.\footnote{A similar interpretation in terms of the extremality bound for 5D black holes was proposed recently in \cite{Halder:2023kza}. It is equivalent to ours by virtue of the
4D/5D correspondence \cite{Gaiotto:2005gf}.} 
We leave it as
an open problem to understand physically the origin of the $c_2$-dependent terms in \eqref{P4I4}, or find an improved version of the BMT inequality where these contributions would be absent.\footnote{String theory does predict $c_2$-dependent corrections to the black hole entropy, but those should produce quadratic terms in $I_4(\gamma)$, rather than the quartic terms in \eqref{P4I4}.}

\subsection{DT, PT, GV invariants}

By exploiting wall-crossing along the family of weak stability conditions, it turns out that one can relate the rank 0 DT invariants counting D4-D2-D0 BPS states to rank $\pm 1$
invariants counting bound states of a single (anti) D6-brane with arbitrary D2-D0 charge.
Specifically, we define the standard DT invariant  \cite{MR1818182} counting ideal sheaves $E$ with $\ch(E)=1-\beta -m H^3/\kappa$ as 
\be
\label{noteDT}
{\mathrm I}_{m,\beta}=\DT(\beta. H,m)
=\Omega_{b<0,a\to+\infty}[\kappa,0,-\beta.H,-m]
\ee 
and the Pandharipande-Thomas (PT) invariant counting 
stable pairs \cite[\S 3]{Toda:2011aa}
$E=(\cO_\CY \stackrel{s}{\rightarrow} F)^\vee$ with $\ch(E)=-1+\beta-m H^3/\kappa$ as
\be
\label{notePT}
{\mathrm P}_{m,\beta}=\PT(\beta. H,m)
=\Omega_{b>0,a\to+\infty}[-\kappa,0,\beta.H,-m].
\ee
As indicated in \eqref{noteDT} and \eqref{notePT}, these two invariants coincide with 
the DT invariant $\Omega_{b,a}(\gamma)$ in the limit $a\to +\infty$, for suitable choice of the sign of $b$,
while square brackets indicate that the argument is expressed in terms of 
the Chern vector \eqref{defC0123}. Whereas the notations 
${\mathrm I}_{m,\beta}$, ${\mathrm P}_{m,\beta}$ have become standard in the mathematics literature, we prefer to use $\DT(Q,m)$, $\PT(Q,m)$ with scalar argument $Q=\beta.H\geq 0$.
In \cite{Alexandrov:2023zjb} (see also \cite{Liu:2022agh} for related results for the quintic $X_5$), 
we showed that these invariants vanish
unless $m$ satisfies the Castelnuovo-type bound
\be
\label{CastPT}
m\geq - \left\lfloor \frac{Q^2}{2\kappa} + \frac{Q}{2}\right\rfloor.
\ee

Given these lower bounds on $Q$ and $m$,
we can define the generating series
\be
\label{defZDT}
\begin{split}
Z_{DT} (y,\q):=&\, \sum_{Q,m} \DT(Q,m)\,
y^Q\,  \q^{m},
\\
Z_{PT} (y,\q):=&\, \sum_{Q,m} \PT(Q,m)\,
y^Q\,  \q^{m}.
\end{split}
\ee
In terms of these formal series,  the DT/PT relation conjectured
in \cite{pandharipande2009curve} and proven in \cite{toda2010curve,bridgeland2011hall} takes the simple form
\be
Z_{DT} (y,\q)= M(-\q)^{\chi_{\scriptstyle\CY}} \, Z_{PT} (y,\q),
\label{eqn:DTPTrelation}
\ee
where $M(\q)=\prod_{k>0}(1-\q^k)^{-k}$ is the Mac-Mahon function.
The MNOP relation~\cite{gw-dt,gw-dt2,Pandharipande:2011jz}  
then allows to compute $\DT(Q,m)$ and $\PT(Q,m)$
in terms of the GV invariants $\GV^{(g)}(Q')$ with $Q'\leq Q$ and $g\leq 1-m$. For our purposes, it is more convenient to use the plethystic form of this relation,
\be
\label{PTGVpleth}
Z_{PT}(y,\q)= \PE\[
\sum_{Q>0} \sum_{g=0}^{\gmax(Q)} (-1)^{g+1} \GVg{g}
 \(1-x\)^{2g-2} x^{(1-g)} y^Q\](-\q,y),
\ee
where $\PE$ denotes the plethystic exponential 
\be
\PE[f](x,y)=\exp\(\sum_{k=1}^\infty\frac{1}{k}\, f(x^k,y^k)\).
\ee
The GV invariants $\GVg{g}$ count embedded curves $C$ of genus $g$ and class $[C]=Q H^2/\kappa$ \cite{Gopakumar:1998ii,Gopakumar:1998jq}. As a consequence of the PT/GV relation \eqref{PTGVpleth}, they vanish unless
\be
\label{defgmax}
g\leq g_{\rm max}(Q) :=  \left\lfloor \frac{Q^2}{2\kappa} + \frac{Q}{2}\right\rfloor + 1,
\ee
or equivalently
\be
\label{defqmin}
Q \geq Q_{\rm min}(g) :=  \left\lceil -\frac{\kappa}{2} + \frac12 \sqrt{\kappa(8(g-1)+\kappa)}\right\rceil .
\ee
For genus $g=0$, the invariants $\GVg{0}$ are computable by mirror symmetry techniques and are equal to the DT invariants $\Omega_\infty(0,0,Q, m)$ counting 
stable coherent sheaves supported on $C$ with arbitrary Euler class $m\in \IZ$ such that 
$(Q, m)$ are coprime. For higher genus $g\geq 1$, they are computable by 
integrating the holomorphic anomaly equations for the topological string partition function
(a method also known as `direct integration') \cite{Huang:2006hq}.

\subsection{D4-D2-D0 indices}
\label{sec_modconj}

Finally, we come to the main invariants of interest, namely the rank 0 DT invariants 
counting D4-D2-D0 bound states.
For a fixed charge $\gamma$ with $p^0=0$,  the index $\bOm_{b,a}(\gamma)$ 
reaches a finite value  as $a\to +\infty$,
denoted by $\bOm_{\infty}(\gamma)$, which  
coincides with the index
$\bOmH(\gamma)$ counting Gieseker-semi-stable sheaves.
For CY threefolds with Picard rank one, this index also agrees with the `large volume attractor index' 
(also called MSW index in 
\cite{Alexandrov:2012au,Alexandrov:2016tnf,Alexandrov:2017qhn,Alexandrov:2018lgp})
\be
\label{lvolatt}
\bOm_{\infty}(0,r,q_1,q_0) = \lim_{\lambda\to +\infty}
\bOmPi_{-\frac{q_1}{\kappa r}+\I \lambda  r}(0,r,q_1,q_0)\,,
\ee
where $\bOmPi_{\rmz=b+\I t}(\gamma)$ denotes the DT invariant along the $\Pi$-stability slice
(see \cite[\S 2]{Alexandrov:2023zjb} for precise definitions of these objects). 
The index $\bOm_{\infty}(0,r,q_1,q_0)$ is preserved
under spectral flow \eqref{specflowD6} with $k\in\IZ$, which leaves
the D4-brane charge $r$ invariant, as well
as the reduced D0-brane charge
\be
\label{defqhat}
\hq_0 :=
q_0 -\frac{q_1^2}{2\kappa r}
\ee
and the class of $\mu:=q_1 - \frac12\, \kappa r^2$ in $\Lambda^*/\Lambda=\IZ/(\kappa r \IZ)$. Accordingly, we denote
\be
\label{defOmrmu}
\bOm_{r,\mu}(\hq_0)=\bOm_\infty(\gamma)=\bOmH(\gamma)\, ,
\ee
and refer to these invariants as charge $r$ D4-D2-D0 indices (or Abelian D4-D2-D0 indices, when $r=1$).

The index \eqref{defOmrmu} is invariant under
$\mu\mapsto -\mu$ and periodic under $\mu\mapsto\mu+\kappa r$. 
Furthermore, 
$\Omega_{r,\mu}(\hq_0)$ vanishes unless  the reduced charge $\hq_0$
is bounded from above by \cite[Lemma B.3]{Feyzbakhsh:2021rcv}
\be
\hq_0 \leq \hq_0^{\rm max}:=\frac{1}{24}\, \chi(\cD_r),
\label{qmax}
\ee
where $\chi(\cD_r)$ is the topological Euler characteristic of the divisor $\cD_r$ Poincar\'e dual to $rH$ \cite[(3.8)]{Maldacena:1997de}
\be
\chi(\cD_r) = \kappa r^3+c_{2}r.
\label{defchiD}
\ee
Note that  the reduced charge $\hq_0$ is such that the combination
\be
\label{nfromhq0}
n = \frac{\chi(\cD_r)}{24} -\frac{\mu^2}{2\kappa r} - \frac{r \mu}{2} - \hq_0
\ee
is an integer. Due to this, it will be convenient to introduce yet another notation for D4-D2-D0
indices
\be
\tOm_{r,\mu}(n) := \bOm_{r,\mu}\left( \frac{\chi(\cD_r)}{24} -\frac{\mu^2}{2\kappa r} - \frac{r \mu}{2} - n \right),
\label{def-tOm}
\ee
which now satisfies 
\be
\tOm_{r,\mu}(n) = \tOm_{r,-\mu}(n+r\mu)
=\tOm_{r,\mu+k\kappa r}
\left(n-k\mu-\frac{1}{2}\, k\kappa r(r+k) \right),
\qquad
k\in\IZ.
\label{prop-tOm}
\ee
In particular, these relations imply
\be
\tOm_{r,\mu}(n) =\tOm_{r,-\mu-\kappa r^2}(n).
\label{prop-tOm2}
\ee

Since the reduced D0-brane charge is bounded from above for fixed D4-brane charge $r>0$ and D2-brane charge 
$q_1=\mu+\frac12\, \kappa r^2$, one can define the generating series of rational invariants
\be
h_{r,\mu}(\tau) :=\sum_{\hq_0 \leq \hq_0^{\rm max}}
\bOm_{r,\mu}(\hq_0)\,\q^{-\hq_0 }
= \q^{\frac{\mu^2}{2\kappa r} + \frac{r \mu}{2} -\frac{\chi(\cD_r)}{24}}
 \sum_{n\geq - \frac{\mu^2}{2\kappa r} - \frac{r \mu}{2} }
\tOm_{r,\mu}(n)\,\q^{n},
\label{defhDT}
\ee
where $\q=e^{2\pi\I \tau}$. 
Since $\mu$ takes values in  $\IZ/(\kappa r \IZ)$, \eqref{defhDT}
defines a vector with $\kappa r$ entries (half of which being redundant due to the symmetry under $\mu\mapsto -\mu$). 
For $r=1$, the case studied in \cite{Alexandrov:2023zjb}, the charge vector was 
necessarily primitive and there was no distinction between 
the `rational' DT invariant $\bOm_{1,\mu}(\hq_0)$ and the `integer' DT invariant
$\Omega_{1,\mu}(\hq_0)$.
In the case $r=2$ of interest in this paper, it is crucial to use the rational DT invariants in \eqref{defhDT} in order
to get nice (mock) modular properties \cite{Manschot:2010xp}.

\subsection{Modular properties of generating functions}

By exploiting the constraints of S-duality in string theory, it was argued 
in \cite{Alexandrov:2012au,Alexandrov:2016tnf,Alexandrov:2018lgp,Alexandrov:2019rth} 
that the generating series $h_{r,\mu}$  must possess specific modular properties
under $SL(2,\IZ)$ transformations of the parameter $\tau$.
More precisely, $h_{r,\mu}$ should transform  as a weakly holomorphic vector valued mock modular form of depth $r-1$, with a specific modular anomaly. This means that there is
an explicit non-holomorphic completion $\widehat{h}_{r,\mu}$, determined in terms of the
generating series $h_{r',\mu'}$ with $0<r'<r$, such that $\widehat{h}_{r,\mu}$ transforms
as a standard vector valued modular form of weight $-3/2$ 
with the following matrices for $T:\tau\mapsto \tau+1$ and $S:\tau\mapsto -1/\tau$
\cite[Eq.(2.10)]{Alexandrov:2019rth} (see also \cite{Gaiotto:2006wm, deBoer:2006vg, Denef:2007vg, Manschot:2007ha})
\be
\begin{split}
M_{\mu \nu}(T)=&\,
e^{\frac{\pi\I}{\kappa r}(\mu+\frac{\kappa r^2}{2} )^2 +\frac{\pi\I r}{12}\, c_{2} }
\,\delta_{\mu\nu},
\\
M_{\mu\nu}(S)=&\,
\frac{(-1)^{\chi_{r\cD}}}{\sqrt{\I \kappa r}}\,
e^{-\frac{2\pi\I}{\kappa r}\,\mu \nu},
\end{split}
\label{multsys-h2}
\ee
where 
\be
\label{defL0}
\chi_{r\cD}:=\chi(\cO_{r\cD})=\frac{\kappa r^3}{6}+\frac{r c_2}{12}
\ee
and $\mu,\nu\in \IZ/(\kappa r \IZ)$. These properties characterize the Weil representation attached to the lattice
$\IZ[\kappa]$ with quadratic form
$m\mapsto \kappa r m^2$, up to an overall factor of the multiplier system of $\eta^{c_2 r}$
where $\eta$ is the Dedekind eta function. We denote by $\scM_r(\CY)$ the space of weakly holomorphic vector-valued modular forms with these transformation properties under $SL(2,\IZ)$. Its dimension was computed in
\cite[\S A]{Alexandrov:2022pgd}, and is equal to $n_r-C_r$, where $n_r$ is the number of polar terms (which grows like $r^4$) and
$C_r$ is the number of modular constraints\footnote{These constraints
come from holomorphic cusp forms in dual weight, see \cite{Bantay:2007zz,Manschot:2008zb} for details.} (which grows linearly in $r$). The numbers $n_r$ and $C_r$ for $r=1,2$
are tabulated in Table \ref{table1}.

For $r>1$, $h_{r,\mu}$ is {\it not} an element of $\scM_r(\CY)$ (due to the modular anomaly), but one could add to it  
an element $\hh_{r,\mu}\in \scM_r(\CY)$ without affecting its transformation properties. In other words, in order to determine $h_{r,\mu}$, it suffices to find
a holomorphic $q$-series $h^{({\rm an})}_{r,\mu}$ with the same anomalous modular transformation as $h_{r,\mu}$, and fix the modular ambiguity $h^{(0)}_{r,\mu}$ by computing at least $n_r-C_r$ coefficients in the $q$-expansion of the $h_{r,\mu}$'s. 

The first step in this procedure was carried out for $r=2$ in \cite[\S 5]{Alexandrov:2022pgd}. There it was shown that $h_{2,\mu}$ can be decomposed as 
\be
\begin{split}
h_{2,\mu}
=&\, \hh_{2,\mu}+(-1)^{\mu+\kappa} \sum_{\mu_1=0}^{\kappa-1}
\Gi{\kappa}_{\mu-2\mu_1+\kappa}\, h_{1,\mu_1}\,h_{1,\mu-\mu_1}\,,
\end{split}
\label{defnorm-G}
\ee
where $\hh_{2,\mu}\in \scM_2(\CY)$, $h_{1,\mu_1}$ is the generating series of Abelian D4-D2-D0 invariants, and $\Gi{\kappa}_\mu$ 
is a vector valued mock modular form with a specific shadow depending only on the degree $\kappa$. For $\kappa=1$, $\Gi{\kappa}_\mu$ can be chosen to coincide 
with the generating series $H_\mu$ of Hurwitz class 
numbers \cite{Zagier:1975}
\be
\Gi{1}_\mu=H_\mu,
\label{solG1}
\ee
with the following $q$-expansion 
\be
\label{H01}
\begin{split}
H_0(\tau) =&\, -\frac{1}{12} + \frac{\q}{2}+\q^2+\frac{4 \q^3}{3}+\frac{3 \q^4}{2}+2 \q^5+2 \q^6+2 \q^7+3 \q^8+\frac{5
\q^9}{2}+2 \q^{10}
+\dots\, ,
\\
H_1(\tau) =&\, \q^{\frac34} \left( \frac{1}{3}+\q+\q^2+2
\q^3+\q^4+3 \q^5+\frac{4 \q^6}{3}+3 \q^7+2 \q^8+4 \q^9+\q^{10}
 + \dots \right).
\end{split}
\ee
For $\kappa>1$ and prime we take instead
\be
\Gi{\kappa}_\mu=\hf\, (V_\kappa[H])_\mu,
\label{GkapH}
\ee
where $V_\kappa$ is a generalized Hecke-like operator introduced in \cite{Bouchard:2016lfg,Bouchard:2018pem}
\be
(V_\kappa[\phi])_\mu(\tau)=\kappa\sum_{a,d>0 \atop ad=\kappa}d^{-2}\sum_{b=0}^{d-1}
\delta^{(1)}_{\mu/a} \, e^{\frac{\pi\I b}{2 a\kappa}\,\mu^2}
\phi_{\mu/a} \(\frac{a\tau+b}{d}\),
\label{defHecke-exp}
\ee
where $\delta_x^{(n)}=1$ if $x=0\mod n$ and 0 otherwise.
Thus, we only need to fix the holomorphic modular ambiguity $\hh_{2,\mu}\in \scM_2(\CY)$. 

An overcomplete basis for $\scM_2(\CY)$ can be constructed as in \cite{Alexandrov:2022pgd}, by introducing the
vector-valued, weight $1/2$ theta series
\be
\label{theta2}
\tvths{\kappa}_{\mu}(\tau):=
 \sum_{{k}\in \IZ+\frac{\mu}{2\kappa}}
\q^{\kappa\,k^2} 
\ee
and by multiplying by sufficient powers of the Dedekind function $\eta$, Eisenstein series $E_4$ and $E_6$, and the Serre derivative $D$ (acting as $q\partial_q-\frac{w}{12}E_2$ on modular forms of weight $w$)
so as to obtain the desired modular weight and leading power of $\q$:
\be
\hh_{2,\mu}(\tau) =
\sum_{\ell=0}^{\mm}
\sum_{k=0}^{k_\ell} a_{\ell,k}\,
E_4^{\lfloor w_\ell/4 \rfloor-\eps_\ell-3k}(\tau)\, E_6^{2k+\eps_\ell}(\tau)\,
\frac{D^\ell \tvths{\kappa}_{\mu}(\tau)}{\eta^{32\kappa + 2c_2}(\tau)}\, ,
\label{decomp-modform2}
\ee
Here the integers $w_\ell, k_\ell, \eps_\ell$  are given by
\be
w_\ell=16\kappa+c_2-2-2\ell,
\qquad 
k_\ell=\lfloor w_\ell/12 \rfloor-\delta^{(12)}_{w_\ell-2},
\qquad
\eps_\ell=\delta^{(2)}_{w_\ell/2-1},
\ee
while $\mm$ should be chosen sufficiently large so that
$\sum_{\ell=0}^{\mm}(k_\ell+1)$ is not smaller than the
dimension of the space $\scM_2(\CY)$. The coefficients $a_{\ell,k}$
can be fixed by computing the first few terms in the $\q$-expansion
of $\hh_{2,\mu}$.

\section{$r=2$ D4-D2-D0 invariants from wall-crossing}
\label{sec_main}

In this section, we spell out a Theorem,  extending \cite[Thm 1]{Alexandrov:2023zjb}, which expresses stable pair invariants $\PT(Q,m)$ in terms
of stable pair invariants $\PT(Q',m')$ of lower degree $Q'<Q$ {\it and} D4-D2-D0 invariants $\tOm_{r,\mu}(n)$ with $r\leq 2$. This result is proven in Appendix \ref{sec_proofs} by studying wall-crossing for a rank $-1$ class $\v$ in the space of weak stability conditions introduced in \S\ref{sec_weak}. 
In \S\ref{subsec-mainth} we rewrite it using notations \eqref{notePT} and 
\eqref{def-tOm} which make it easier in applications.
In \S\ref{subsec-exprD4} we reformulate the resulting formula
so that it expresses D4-D2-D0 invariants $\tOm_{2,\mu}(n)$ with $r=2$ units of D4-brane charge 
in terms of PT invariants. Once the generating series of $\tOm_{2,\mu}(n)$ is found, 
it also provides new boundary conditions for the direct integration method,
as we explain in \S\ref{sec_bcmock}.

\subsection{Main theorem}
\label{subsec-mainth}

Let us define the function $f_2:\IR^+\to \IR$ by\footnote{Note that in the range $x<4$, $f_2(x)$ coincides with the function $f(x)$ defined in \cite[(4.10)]{Alexandrov:2023zjb}, which we denote by $f_1(x)$ in this work.}
\be
\label{deffx}
f_2(x)\ \coloneqq \ \left\{\!\!\!\begin{array}{cc} x+ \frac{1}{2} & \text{if $0 <x < 1$,}
\\
\vspace{.1 cm}
\sqrt{2x+\frac{1}{4}} & \text{if $1 \leq x < \frac{15}{8}$}\, ,
\\
\vspace{.1 cm}
\frac{2}{3}\,x+ \frac{3}{4} & \text{if $\frac{15}{8} \leq x< \frac{9}{4}$}\, ,
\\
\vspace{.1 cm}
\frac{1}{3}\,x+ \frac{3}{2} & \text{if $\frac{9}{4} \leq x< 3$}\, ,
\\
\vspace{.1 cm}
\frac{1}{2}\,x+ 1 & \text{if $3 \leq x <4$}\, ,
\\
\vspace{.1 cm}
\frac{3}{2}\,\sqrt{x} & \text{if $4 \leq x <9$}\, ,
\\
\vspace{.1 cm}
\frac{1}{3}\, x+ \frac{3}{2}  & \text{if $x \geq 9$}\, .
\end{array}\right.
\end{equation}
Further, for any $(Q,m)\in \IZ_+\times \IZ$, let 
\be
\label{defxa}
x:=\frac{Q}{\kappa}\,,
\qquad
\alpha:=-\frac{3m}{2Q}\,.
\ee
Theorem \ref{thm-rank2} in Appendix \ref{sec_proofs} states that whenever $x>0$ and $f_2(x)<\alpha$,
the stable pair invariant $\PT(Q,m)$ is expressed in terms of invariants
$\PT(Q',m')$ with $Q'<Q$ and D4-D2-D0 invariants $\tOm_{r,\mu}(n)$ with $r\leq 2$.
Physically, this result follows by studying wall-crossing for 
the charge vector $\gamma=\(-1,0,-Q-\frac{c_2}{24},-m\)$ corresponding to an anti-D6-brane. 
The precise formula reads 
\be
\PT(Q,m) = P_1(Q,m) + P_2(Q,m) + P_3(Q,m),
\ee
where the three contributions come from the anti-D6-brane emitting a single Abelian D4-brane, 
a pair of Abelian D4-branes, or a single charge 2 D4-brane, respectively,
and are given as follows:\footnote{Translating equations in \S\ref{subsec-thm} 
into the form written below, we took into account that 
$\dt(0,rH,\beta,m)=\tOm_{r,\mu}(n)$ where $\mu=-\beta.H-\hf\kappa r^2$
and $n=\frac16 \kappa r^3-m$. Furthermore, we repeatedly used the property \eqref{prop-tOm2}.}

\subsubsection{Contribution from a single Abelian D4-brane:}

\be
\label{thmS11}
P_1 =   \sum_{(Q', m')} (-1)^{\chi_1(Q', m')}
\chi_1(Q', m') \, \PT_{w(Q')}(Q',m') \, \tOm_{1,Q-Q'}(m-m'+Q')\,,
\ee 	
where $\chi_1(Q', m')=m -m'+ Q+Q'-\chiOD$ with $\chiOD$ defined in \eqref{defL0}. 
The sum runs over pairs of integers $(Q', m')$ such that
\be
\begin{split}
\label{Qpbound}
0 \leq &\; Q'  \leq Q + \kappa \left(\frac12-\alpha\right),
\\
-\frac{Q'^2}{3\kappa} - \frac{Q'}{3} w(Q')
\leq&\; m' \leq  m+\frac{1 }{2\kappa}\, (Q-Q')^2
+ \frac{1}{2}\,(Q+Q') \, ,
\end{split}
\ee
where $w(Q')=\frac{1}{\kappa}\, (2Q-3Q')-1$. This contribution is similar to \cite[Thm 1]{Alexandrov:2023zjb}, except that $\PT(Q',m')$ 
on the right-hand side
is replaced by the invariant $\PT_{w(Q')}(Q',m')$ counting tilt-semistable objects at 
$(b,w)=(2,w(Q'))$ in the space of weak stability conditions (see footnote \ref{fooabw}). 
The latter is in turn expressed in terms of the standard PT and Abelian D4-D2-D0 invariants as follows (Proposition \ref{lem.rank -1} in the Appendix):
\be
\begin{split}
\PT_w(Q,m)
=&\,\PT(Q,m)-
\sum_{Q',m'} (-1)^{\chi_1(Q', m')} \chi_1(Q', m') \PT(Q',m')\, \tOm_{1,Q-Q'}(m-m'+Q'),
\end{split}
\ee
where the sum runs over pairs of integers $(Q',m')$ such that 
\be
\begin{split}
0 \leq &\;  Q'  <\frac{\kappa}{2}\, (1-w) + \frac{Q}{2}\, ,
\\
-\frac{Q'^2}{2\kappa} - \frac{Q'}{2} 
\leq &\; m'  \leq
m+\frac{1}{2\kappa} \,(Q-Q')^2 + \frac{1}{2}\,(Q+Q') .
\end{split}
\ee

\subsubsection{Contribution from a pair of Abelian D4-branes:}
\be
\begin{split}
P_2 = &\,  \sum_{Q',n,n'}
(-1)^{\chi_2(Q',n)+\chi_2(Q',n')+1}\chi_2(Q',n)\,\chi_2(Q',n')\, 
\tOm_{1,Q'}(n) \, \tOm_{1,Q'}(n')\\
&\, \qquad
\times
\PT(Q-2Q'+3\kappa,m-n-n'+2Q-6Q'+5\kappa),
\end{split}
\ee
where $\chi_2(Q',n)=n+Q+Q'-\chiOD$. 
The sum runs over triplets of integers $(Q',n,n')$ such that
\be
\begin{split}
\kappa\(\alpha+\hf\)
& \le  Q' \le \hf\, (Q+3\kappa),
\\
n,n' & \ge -\frac{Q'^2}{2\kappa}-\frac{Q'}{2}\, ,
\\
n +n' &\leq  m+\frac{(Q-2Q')^2}{2\kappa}+\frac{11}{2}\, Q- 13Q'+11\kappa .
\end{split}
\ee

\subsubsection{Contribution from a single charge 2 D4-brane:}
\be
\begin{split}
P_3 
=&\,  \sum_{Q',m'}  (-1)^{\chi_3(Q',m')}\chi_3(Q',m')\,
\PT(Q',m')\, \tOm_{2,Q-Q'}(m-m'+2Q'),
\end{split}
\ee
where $\chi_3(Q',m') = m-m'+2(Q+Q') -\chi_{2\cD}$. The sum
runs over pairs of integers $(Q',m')$ such that 
\be
\begin{split}
0 \leq &\;  Q'  \leq 
Q+2\kappa(1-\alpha),
\\
-\frac{Q'^2}{2\kappa}  - \frac{Q'}{2}
\leq &\; m' \leq
m+\frac{1}{4\kappa}\,(Q-Q')^2 +Q+Q'.
\label{P3-cond}
\end{split}
\ee

\subsection{Computing charge 2 D4-D2-D0 indices from PT invariants}
\label{subsec-exprD4}

We first remark that for $x<4$, or for $x\geq 4$ but $\alpha>\frac12 x+1$, the result above  reduces to Theorem 1 in \cite{Alexandrov:2023zjb}. In particular, one may check that $P_2=P_3=0$ and $\PT_{w(Q')}(Q,m)=\PT(Q,m)$ in the formula for $P_1$. As explained in loc. cit., one can then isolate the contribution from $(Q',m')=(0,0)$
in $P_1$ to express the Abelian D4-D2-D0 index $\tOm_{1,Q}(m)$ in terms of PT invariants and Abelian D4-D2-D0 indices of lower degree.

Similarly here, we observe that when $\alpha\leq \frac12 x+1$, the range of $Q'$ in \eqref{P3-cond} allows for the contribution of $Q'=0$, in which case the second condition in \eqref{P3-cond} reduces to $0\leq m'\leq m+\frac{Q^2}{4\kappa}+Q=(-\frac23\alpha+\frac{x}{4}+1)Q$. Since the only non-vanishing invariant
$\PT(Q',m')$ with $Q'=0$ is $\PT(0,0)=1$, whenever 
$x>4$ and $\alpha\leq \frac38 x+\frac32$, the term $P_3$ includes a contribution proportional to the charge 2 D4-D2-D0 invariant,
namely
\be
P_3 = (-1)^{\chi_3(0,0)} \chi_3(0,0)\, \tOm_{2,Q}(m) + P_3',
\ee
where 
\be
P'_3 =  \sum_{Q'>0,m'}  (-1)^{\chi_3(Q',m')}\chi_3(Q',m')\,
\PT(Q',m')\, \tOm_{2,Q-Q'}(m-m'+2Q').
\ee
Assuming that $\chi_3(0,0)=m+2Q -\chi_{2\cD}$ is non-zero, we can thus compute the charge 2 D4-D2-D0 invariant via
\be
\label{WCF-rank2-main}
\tOm_{2,Q}(m)=\frac{(-1)^{m-\chi_{2\cD}}}{m+2Q -\chi_{2\cD}}\Bigl[\PT(Q,m) - P_1 -P_2 -P'_3\Bigr],
\ee
where the right-hand side involves PT invariants and D4-D2-D0 indices of lower degree. 

Moreover, as explained in \cite{Alexandrov:2023zjb} and elaborated in Remark \ref{rkvk} in the Appendix, we can start from the DT invariant $\tOm_{2,Q}(n)$ with arbitrary charges
and apply the spectral flow \eqref{specflowD6} with $k$ sufficiently large\footnote{Since the PT invariants $\PT(Q,m)$
are only known up to some limited degree $Q_{\rm max}$ (determined by the maximal genus $g_{\rm max}$ for which GV invariants are known
via $Q_{\rm max}=Q_{\rm min}(g_{\rm max})$), it is advisable
to use the smallest value $k_0$ such that the 
conditions \eqref{condk} are satisfied for $k\geq k_0$; occasionally, we observe 
(and prove in a few cases) 
that the formula in the theorem continues to hold for one unit less
of spectral flow; sometimes we also get non-trivial checks by applying the theorem with one extra unit of spectral flow, see e.g. \eqref{X8test}. \label{fook0}
} 
such that the resulting charges 
\be
Q_k=\mu+2\kappa k,
\qquad
m_k=n -k\mu -\kappa k(k+2)=\frac{\chi(2\cD)}{24} -\frac{Q_k^2}{4\kappa } - Q_k - \hq_0
\ee
satisfy the conditions 
\be
f_2(x_k)<\alpha_k\leq \frac38 x_k+\frac32\, ,
\label{condk}
\ee
so that $\tOm_{2,Q}(n)$ can be computed by
applying \eqref{WCF-rank2-main} for the shifted charges $(Q_k,m_k)$. This provides an efficient way of computing $\tOm_{2,Q}(n)$ (or equivalently $\bOm_{2,Q}(\hat q_0)$), provided the PT invariants
(or equivalently the GV invariants) are known to sufficient degree and Euler class (or genus). 

\subsection{Boundary conditions from mock modularity}
\label{sec_bcmock}

As explained in \cite[\S 3.2]{Alexandrov:2023zjb}, the topological
string partition function (hence the GV and PT invariants) can be computed
inductively by solving the holomorphic anomaly equations, provided sufficiently many conditions are known to determine the holomorphic
ambiguity at each genus. Currently, the known conditions include the usual conifold gap constraints, the Castelnuovo bound \eqref{defgmax}, 
and the value of GV invariants saturating the bound when $Q=0\mod \kappa$, 
\be
\GV_{\kappa d}^{(1+\frac12 \kappa d(d+1))}=(-1)^{\frac12 \kappa d(d-1)}
\chi_{\cD} \left( \chi_{\cD} + \frac12  \kappa d(d-1) \right)
\label{valGVkappa}
\ee
for any $d\geq 2$ \cite[(3.31)]{Alexandrov:2023zjb},
where $\chi_{\cD}$ was defined in \eqref{defL0}. These constraints 
are sufficient to fix the holomorphic ambiguity so long as
\be
\label{Deltag}
\Delta(g):= \left\lfloor \frac{2(g-1)}{\rho} \right\rfloor 
- \left\lfloor -\frac{\kappa}{2} + \frac12 \sqrt{\kappa(8(g-1)+\kappa)}\right\rfloor \leq 0,
\ee
where $\rho$ is the effective regulator ($\rho=d$ for the
hypersurfaces $X_{d}$ of degree $d=5,6,7,8$). In \eqref{Deltag}, the first term corresponds to the number of coefficients in the holomorphic ambiguity at genus $g$ taking into account the gap conditions and constraints \eqref{valGVkappa}, while the second term is the number of constraints from the Castelnuovo bound. Without
further conditions, the direct integration method can thus be applied
up to genus $g_{\rm integ}$, the highest value of $g$ such that $\Delta(g)\leq 0$. Ignoring the floor
function, we have $g_{\rm integ}\simeq \frac12\kappa \rho(\rho-1)+1$.

To push the direct integration method to higher genus, we need to provide $\Delta(g)$ additional
constraints at each genus $g$, for example the invariants $\GV^{(g)}_{Q_{\rm min}(g)+k}$
for $k=1,\dots,\Delta(g)$. The largest degree occurs for $k=\Delta(g)$ leading to 
\be
Q_\Delta(g)= Q_{\rm min}(g)+\Delta(g)=\left\lfloor \frac{2(g-1)}{\rho}\right\rfloor.
\label{relQg}
\ee
Let us now introduce a distance away from the Castelnuovo bound (normalized by $\kappa$)
\be
\delta(Q,g):=\frac{1}{\kappa}\,(g_{\rm max}(Q)-g),
\label{defdist}
\ee
where $g_{\rm max}$ is given in \eqref{defgmax}.
The distance corresponding to \eqref{relQg} and expressed in terms of the variable 
$x$ defined in \eqref{defxa} is found to be
\be
\label{deltag}
\delta_\Delta(x)\simeq \delta(\kappa x,\kappa \rho x/2+1)
\simeq 
\hf\, x(x-\rho+1).
\ee
On the other hand,
assuming that the generating series of Abelian D4-D2-D0 invariants are known, the PT invariants $\PT(Q,m)$, and hence the GV invariants $\GV^{(g\leq 1-m)}_{Q'\leq Q}$, 
can be computed using \cite[Thm 1]{Alexandrov:2023zjb} so long as 
$\alpha>f_1(x)$ with $\alpha$ as in \eqref{defxa}.
Let us rewrite this condition in terms of $\delta$.
To this end, we note that for fixed $Q$ and $m$, the distance \eqref{defdist} corresponding to the maximal reachable genus is
\be
\delta(\kappa x,1-m)\simeq \frac{m}{\kappa}+\frac{Q^2}{2\kappa^2}+\frac{Q}{2\kappa}
=-\frac23\, x \alpha + \frac12\, x(x+1).
\ee
Therefore, one arrives at the condition that 
\be
\label{deltax1}
\delta < \delta_1(x) := -\frac23\, x f_1(x)+\frac12\, x(x+1).
\ee
In the range $x>3$, one has $f_1(x)=\frac12 x+1$ so that $\delta_1(x)=\frac16 x(x-1)$.
Then it is easy to see that the curves $\delta_\Delta(x)$ 
and $\delta_1(x)$ intersect at $x=\frac32\rho-2$. Using \eqref{relQg}, this gives
\be
\label{gmod}
g_{\rm mod}^{(1)}\simeq\frac34\, \kappa \rho^2 - \kappa \rho+1, 
\ee
which is the maximal genus attainable with the results of our
previous work \cite{Alexandrov:2023zjb}. 

Similarly, if one assumes that the charge 2 D4-D2-D0 invariants are also known, then we
can compute the PT invariants $\PT(Q,m)$ using Theorem \ref{thm-rank2} so long as 
\be
\label{deltax2}
\delta < \delta_2(x) := -\frac23\, x f_2(x)+\frac12\, x(x+1).
\ee
In the range $x>9$, one has $f_2(x)=\frac13 x+\frac32$ so that $\delta_2(x)=\frac{1}{18}x(5x-9)$. The curves $\delta_\Delta(x)$ 
and $\delta_2(x)$ now intersect at $x=\frac94(\rho-2)$ which gives the higher value for the maximal attainable genus
\be
\label{gmmod}
g_{\rm mod}^{(2)}\simeq \frac98\, \kappa \rho^2 - \frac94\, \kappa \rho+1 .
\ee
This analysis ignores boundary effects due to the floor function 
appearing in $x=\frac{1}{\kappa}\left\lfloor \frac{2(g-1)}{\rho}\right\rfloor$
(hence the $\simeq$ sign in \eqref{gmod} and \eqref{gmmod}),
but nonetheless predicts the correct maximal genus attainable 
with the knowledge of the generating series of charge 2 D4-D2-D0 
invariants, except for two odd points at genus 70 and 77 for $X_6$, see Figure \ref{figXGV}.

It is interesting to ask whether computing D4-D2-D0 invariants at higher 
D4-brane charge might provide sufficiently many new boundary conditions to extend the direct integration method to arbitrary genus. Of course, if one knew all higher
rank D4-D2-D0 invariants, one could use \cite[Thm 2]{Feyzbakhsh:2020wvm,Feyzbakhsh:2021rcv} to compute any PT invariant, and hence any GV invariant. However, in practice one can only hope to determine
higher charge D4-D2-D0 invariants inductively, by using an analogue of Theorem \ref{thm-rank2} for computing the polar terms, and then exploiting mock modularity constraints to obtain the full generating series at charge $r$. 
As explained in Remark
\ref{rem_higherrk}, a formula expressing PT invariants in terms of 
D4-D2-D0 invariants up to charge $r$  is expected to hold if the values $b_1,b_2$, at which the BMT wall \eqref{BMTineq0} intersects the horizontal axis $a=0$, satisfy the inequalities $b_2-b_1>b_1$ and $r\leq b_1<r+1$ (see Remark
\ref{rem_higherrk} 
in the Appendix). These inequalities can be recast as 
\be
\begin{cases} 
\frac32\,\sqrt{x}  < \alpha \leq \frac{x}{r}+\frac{r}{2}  & \mbox{if} \ r^2< x < (r+1)^2, \\
\frac{x}{r+1} + \frac{r+1}{2} < \alpha \leq \frac{x}{r}+\frac{r}{2}  & \mbox{if}  \ x>(r+1)^2.
\end{cases}
\ee
Hence the formula would hold whenever $f_r(x)<\alpha$ where $f_r(x)=\frac{x}{r+1} + \frac{r+1}{2}$ in the range $x>(r+1)^2$. Running the same arguments as before, we find that the GV invariants would be determined up to 
\be
\label{gmmodr}
g_{\rm mod}^{(r)}\simeq \frac38\, \kappa \rho^2 (r+1) - \frac14\, \kappa \rho (r+1)^2+1 ,
\ee
reproducing the previous conditions for $r=1$ and $r=2$, respectively.
This formula peaks at a finite value $r+1=\frac34\rho$ leading to 
$g_{\rm mod}^{(r)}=1+\frac{9}{64}\,\kappa\rho^3$, which appears to be the maximal genus attainable by this method. For $X_{5,6,8,10}$, it evaluates to 
$89,92,145,141$, respectively.

\begin{figure}[h]
\begin{center}
\includegraphics[height=5cm]{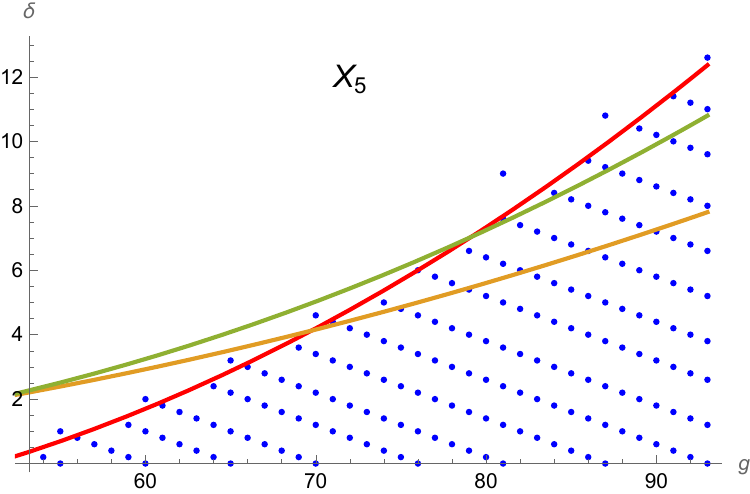}
\hspace*{5mm}
\includegraphics[height=5cm]{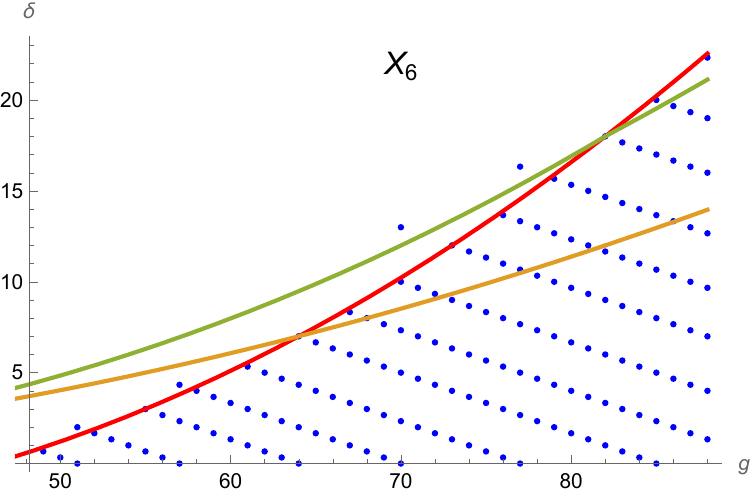}
\\
\includegraphics[height=5cm]{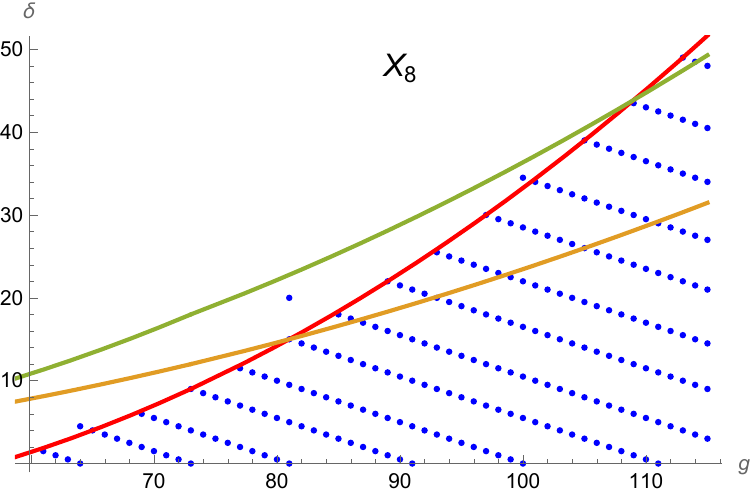}
\hspace*{5mm}
\includegraphics[height=5cm]{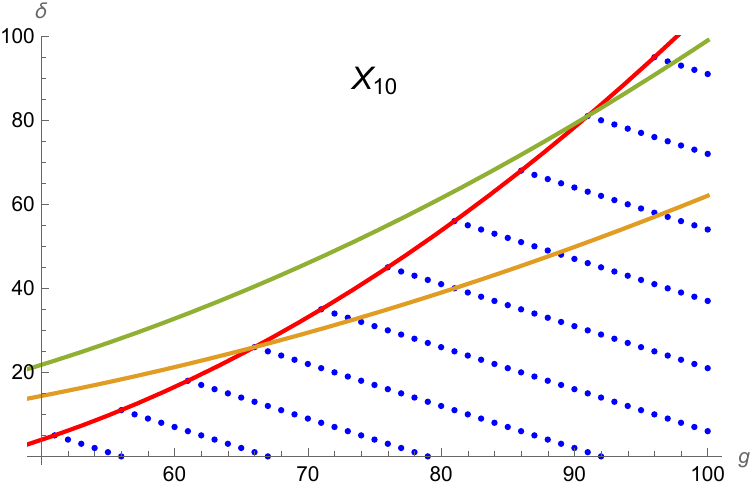}
\end{center}
\caption{Additional GV invariants needed to fix the holomorphic ambiguity for $X_5$, $X_6$, $X_8$ and $X_{10}$. The vertical axis corresponds to 
the distance away from the Castelnuovo bound normalized as in \eqref{defdist}.
The blue dots are pairs $(g,\delta(Q,g))$ 
where $Q=Q_{\rm min}(g)+k$ with $1\leq k\leq \Delta(g)$.
The red, orange and green lines correspond to 
$\delta_\Delta(x)$, 
$\delta_1(x)$ and $\delta_2(x)$ with $x=2(g-1)/(\kappa\rho)$.}
\label{figXGV}
\end{figure}

\section{Mock modular forms for $X_{10}$ and $X_8$}
\label{sec_X108}
In this section, we determine the first few terms in the generating series of D4-D2-D0 indices with two units of D4-brane charge for the simplest hypergeometric models, namely the decantic $X_{10}$ and the octic $X_8$, 
and for each of them find a unique mock modular form of depth one which
matches these coefficients.

\subsection{$X_{10}$}
\label{sec_X10}

We first consider the degree 10 hypersurface in weighted projective space 
$\IP^{5,2,1,1,1}$. In this case, the basic divisor $H$ has self-intersection $\kappa=H^3=1$. In \cite[\S 5.3]{Alexandrov:2023zjb}, using knowledge of the GV invariants up to genus 47 and assuming the validity of the BMT conjecture (which has not yet been established mathematically for this case), we computed the first twelve coefficients in the generating series of Abelian D4-D2-D0 indices, and found a perfect match with the 
weak modular form conjectured in \cite{VanHerck:2009ww} (correcting the earlier guess in \cite{Gaiotto:2007cd})
\be
\begin{split}
h_1=&\, \frac{203 E_4^4+445 E_4 E_6^2}{216\, \eta^{35}}
\\
=&\, \q^{-\frac{35}{24}}\,\Bigl(
\underline{3-575 \q}+271955 \q^2+206406410 \q^3+21593817025 \q^4
+\cdots\Bigr).
\end{split}
\label{resfunX10}
\ee
Here and below, we underline the polar coefficients. Using this result, 
we obtained additional boundary conditions for the direct integration method, which allow us to compute GV invariants in principle up to genus 70. In practice, due to limited computational power, we have currently been able to compute GV invariants up to genus 67. 

For two units of D4-brane charge, the generating series $h_{2,\mu}$ is now a two-dimensional vector with 7 polar terms. Applying the formula \eqref{WCF-rank2-main} with the minimal value of spectral flow satisfying the conditions \eqref{condk} (see footnote \ref{fook0}), we find that the leading terms are given by\footnote{In order to obtain the $\cO(\q^4)$ coefficient in $h_{2,1}$, we used \cite[Prop. 1]{Alexandrov:2023zjb} to compute the PT invariant 
$\PT(13,-65)=599652917084283118466718009161486$ which enters this coefficient.}
\be
\begin{split}
h_{2,0} =&\, \q^{-\frac{19}{6}} \left( \underline{7 -1728 \q +203778 \q^2 -13717632 \q^3}
+\dots \right),
\\
h_{2,1} =&\, \q^{-\frac{35}{12}}  \left( \underline{ -\tfrac{21}{4}+1430\q-\tfrac{4344943}{4}  \q^2}+
 208065204 \q^3
  -\tfrac{199146131237}{4}  \q^4
+ \dots\right),
\end{split}
\label{predictX10}
\ee
Here, the first four coefficients in $h_{2,0}$ arise by summing up 6, 16, 34 and 62 walls corresponding to 5+0+1, 15+0+1, 33+0+1 and 61+0+1 contributions in $P_1+P_2+P_3$, respectively. Similarly, the  first five coefficients in $h_{2,1}$ arise by summing up 9+1+1, 23+2+1, 46+3+1, 51+4+2 and 86+5+2 walls.\footnote{A Mathematica notebook performing these computations is available on the webpage \cite{CYdata}.} The chamber structure for the rank $-1$ charges used to extract the first terms 
in $h_{2,0}$ and $h_{2,1}$ is depicted in Figure \ref{figChamberX10}.

\begin{figure}[h]
\begin{center}
\includegraphics[height=6cm]{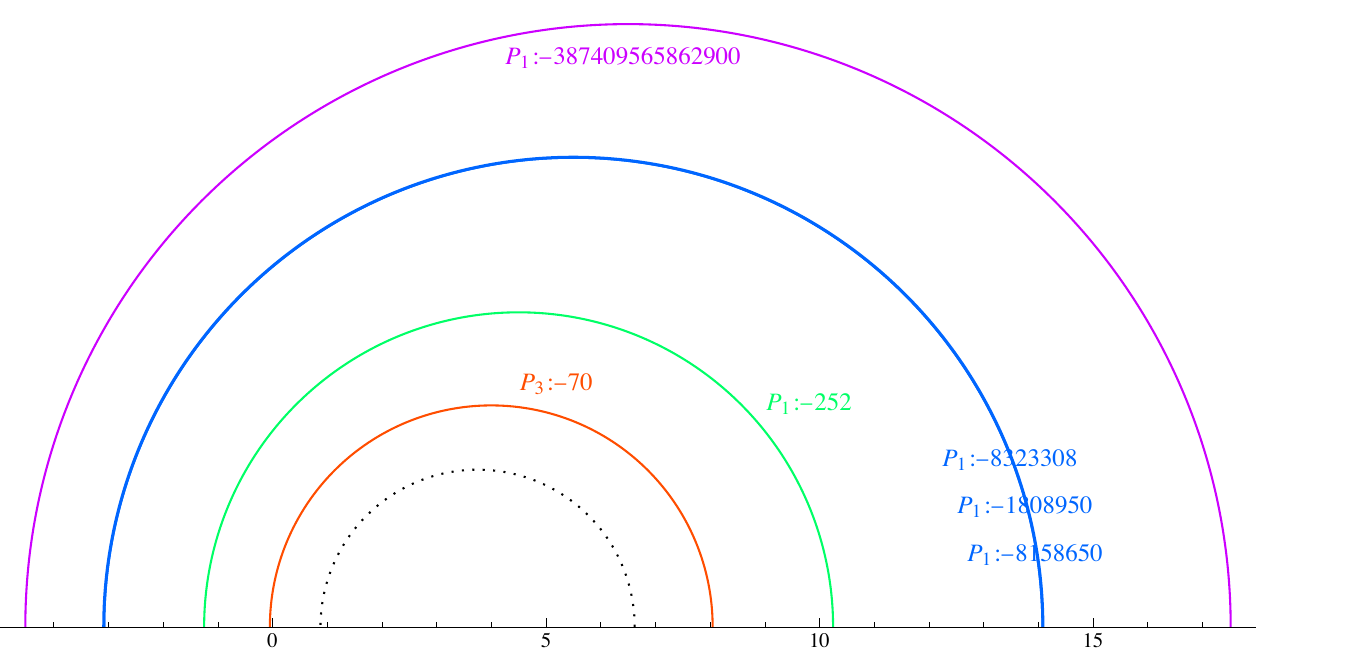}\\[5mm]
\includegraphics[height=6cm]{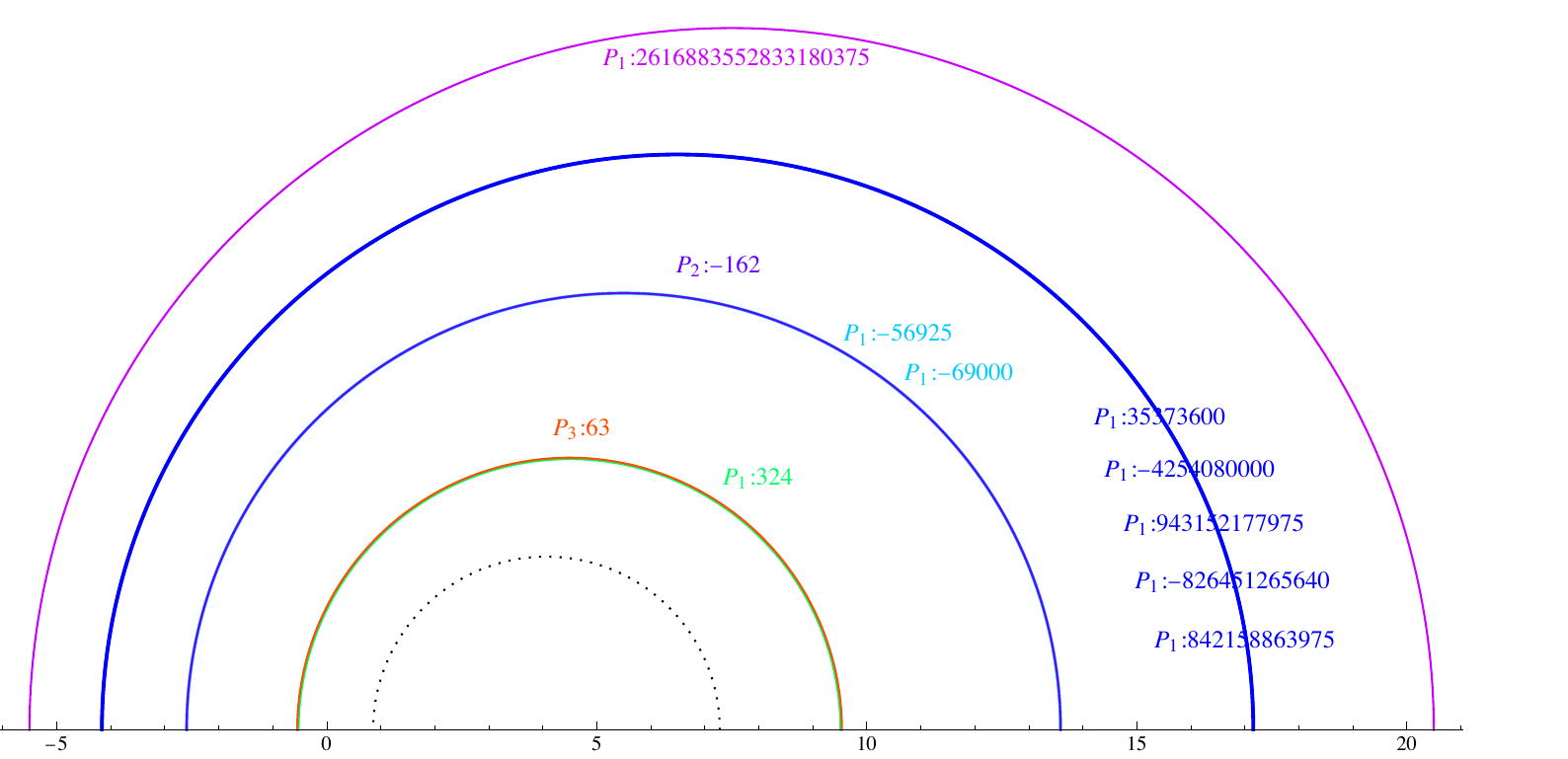}
\end{center}
\caption{Chamber structure for the charge vectors $\gamma=[-1,6,0,15]$ (top) and
$\gamma=[-1,7,0,19]$ (bottom) used to determine the first coefficients in $h_{2,0}$ and $h_{2,1}$ for $X_{10}$. 
The contributions of the walls add up to 
$\PT(6,-15)=-387409584154130$ and $\PT(7,-19)=2616884507474124585$, respectively. The dotted line delimits the empty region predicted by the BMT inequality. The orange wall (which overlaps with a $P_1$-type wall on the bottom graph) corresponds to the emission of a charge 2 D4-brane.}
\label{figChamberX10}
\end{figure}

While we used the chamber structure (in the space of weak stability conditions) of an anti-D6-brane charge with suitable D2 and D0 charges
to obtain  the coefficients \eqref{predictX10}, it is worth observing 
that the leading term in $h_{2,0}$ can be interpreted more directly as the contribution of a \DDb bound state $\gamma=\gamma_1+\gamma_2$ with $\gamma_1=\cO_{\CY}(-2H)[1]$ and $\gamma_2=\cO_{\CY}$, such that $\langle\gamma_1,\gamma_2\rangle=7$. The second coefficient $-1728$ 
also agrees with the naive ansatz $-6 \DT(0,1)$ from \cite{Alexandrov:2022pgd}, while the third coefficient differs from 
the naive ansatz $5 \DT(0,2)=203760$ by only 18 units. 
Similarly, the leading term
in $h_{2,1}$ can be interpreted as the contribution of a bound state $\gamma=2\gamma_1+2\gamma_2$, with  $\gamma_1=\cO_{\CY}(-2H)[1]$ and $\gamma_2=\cO_{\CY}(-H)$, such that $\langle\gamma_1,\gamma_2\rangle=3$; indeed, the rational index
 of a rank $(2,2)$ Kronecker quiver with 3 arrows\footnote{
 More generally, the rational index of a
rank $(2,2)$ Kronecker quiver with $m$ arrows is given by 
$\bOm_{K_m}(2,2)=-\frac12 m^3+m^2-\frac14 m$
\cite[(5.38)]{Beaujard:2019pkn}. \label{fookro}
} is $\bOm_{K_3}(2,2)=-\frac{21}{4}$. 
It would be very interesting to determine directly 
the chamber structure associated to the polar terms in \eqref{predictX10}.

Remarkably, there exists a unique mock modular form which matches the
expansion \eqref{predictX10}: 
\bea
h_{2,\mu}& =&
\frac{5397523 E_4^{12}+70149738 E_4^9 E_6^2-12112656 E_4^6 E_6^4
-61127530 E_4^3 E_6^6-2307075 E_6^8} {46438023168 \eta^{100}}
\tvths{1}_\mu
\nn\\
&&  +
\frac{-10826123 E_4^{10} E_6-14574207 E_4^7 E_6^3+20196255 E_4^4 E_6^5
+5204075 E_4 E_6^7}{1934917632 \eta^{100}}
D \tvths{1}_\mu 
\label{genX10mod}\\ 
&& + (-1)^{\mu+1} H_{\mu+1}(\tau) \, h_{1}(\tau)^2,
\nn
\eea
where $H_\mu$ is the generating series of Hurwitz class numbers \eqref{H01}.
This predicts the following  rational DT invariants:
\be
\begin{split}
h_{2,0} =&\, \q^{-\frac{19}{6}} \left( \underline{7 -1728 \q +203778 \q^2 -13717632 \q^3}
-23922034036 \q^4 
+ 5622937903758 \q^5\right. \\ & \left. 
+30869696301784635 \q^6 + 11647656420016054694 \q^7 + 
 1947303902022148176051 \q^8+ \dots \right),
\\
h_{2,1} =&\, \q^{-\frac{35}{12}}  \left( \underline{ -\tfrac{21}{4}+1430\q-\tfrac{4344943}{4}  \q^2}+
208065204 \q^3 -\tfrac{199146131237}{4}  \q^4 +168558189742220 \q^5+\right.\\
&\left. +\tfrac{385592585541997181}{2} \q^6 + 53445664437544786684 \q^7 
+ \tfrac{29798457431526946287313}{4} \q^8+  \dots \right)
\end{split}
\label{predictX10mod}
\ee
and the following integer DT invariants, which indeed appear to be integers at high order in $\q$,
\be
\begin{split}
\hint_{2,1} =&\, h_{2,1}(\tau) -\frac14\, h_1(2\tau) \\
=& \, \q^{-\frac{35}{12}}  \left( \underline{ -6+1430\q-1086092 \q^2}+ 208065204 \q^3
-49786600798 \q^4 +168558189742220  \q^5 \right. \\
& \left. +192796292719396988 \q^6 + 53445664437544786684 \q^7 + 
 7449614357876338117572 \q^8
+ \dots\right),
\end{split}
\ee
with $\hint_{2,0} = h_{2,0}(\tau)$. 

Given that the relevant space of vector-valued modular forms 
has dimension 7, 
the fact that we can reproduce two extra non-polar coefficients and 
that integer DT invariants are indeed integer provides very strong support 
for the generating function \eqref{genX10mod}. Additional checks come from the fact that the $\cO(\q^4)$, $\cO(\q^5)$ terms in $h_{2,0}$ and the $\cO(\q^4)$ term in $h_{2,1}$ can be obtained by applying \eqref{WCF-rank2-main}
with $k_0-1$ units of spectral flow, in the notation of footnote \ref{fook0} (this leads to the cases $(x,\alpha)=(12,\frac{11}{2})$, $(12,\frac{43}{8})$ and $(11,\frac{111}{22})$ covered by Lemma \ref{LemmaX6}).

\subsection{$X_8$}
\label{sec_X8}
We now consider the octic in $\IP^{4,1,1,1,1}$. In this case, the basic divisor $H$ has self-intersection $\kappa=H^3=2$. 
We note that 
a weaker version of the BMT conjecture was
established for this space in \cite{koseki2022stability}, which is sufficient for the proof of Theorem \ref{thm-rank2} to go through.
In \cite[\S B.2]{Alexandrov:2023zjb}, using knowledge of the GV invariants up to genus 48, we computed 18 coefficients in the generating series of Abelian D4-D2-D0 indices, and found a perfect match with the 
weak modular form conjectured in \cite{Gaiotto:2007cd}
\be
\label{hmuX8}
\begin{split}
h_{1,\mu}=&\, \frac{1}{\eta^{52}} \[\frac{103 E_4^6+1472 E_4^3 E_6^2+153 E_6^4}{5184}
+\frac{503 E_4^4 E_6+361 E_4 E_6^3}{108}\, D\]\vths{2}_{\mu}.
\end{split}
\ee
Using this result, 
we obtained additional boundary conditions for the direct integration method, which allow us to compute GV invariants in principle up to genus 84, in practice up to genus 64. 

For two units of D4-brane charge, the generating series $h_{2,\mu}$ is now 
a three-dimensional\footnote{More precisely, it is a four-dimensional vector, 
but one component is redundant since $h_{2,3}=h_{2,-1}=h_{2,1}$.} 
vector with 14 polar terms, with one modular constraint. 
Applying the formula \eqref{WCF-rank2-main}, we find that the leading terms are given by
\be
\label{X8h2predict}
\begin{split}
h_{2,0} =&\, \q^{-\frac{13}{3}} \left( \underline{-10 +2664 \q -344564 \q^2 + ? \q^3 +? \q^4}
+ \dots \right),
\\
h_{2,1} =&\, \q^{-\frac{101}{24}}  \left( \underline{0+0\q+206528\q^2 -51750016 \q^3 +?\q^4}
+ \dots\right),
\\
h_{2,2} =&\, \q^{-\frac{23}{6}}
\left(\underline{-17 +? \q  +? \q^2 + ? \q^3}
+ \dots\right).
\end{split}
\ee
Here, the first coefficients in $h_{2,0}$ arise by summing up 12+0+1, 45+0+1 and 51+0+1 walls, respectively. Similarly, the  first coefficients in $h_{2,1}$ arise by summing up 17+0+1, 57+0+1, 64+0+1 and 71+0+1 walls. The first coefficient in $h_{2,2}$
arises by summing up 26+1+1 walls (see Figure \ref{figChamberX8} for the chamber structure relevant for the first coefficients in $h_{2,0}$, $h_{2,1}$ and $h_{2,2}$). The question marks indicate coefficients which cannot be determined from our current knowledge of GV invariants 
upon using 
the minimal value $k_0$ of spectral flow satisfying the condition \eqref{condk}.

\begin{figure}[h]
\begin{center}
\includegraphics[height=6cm]{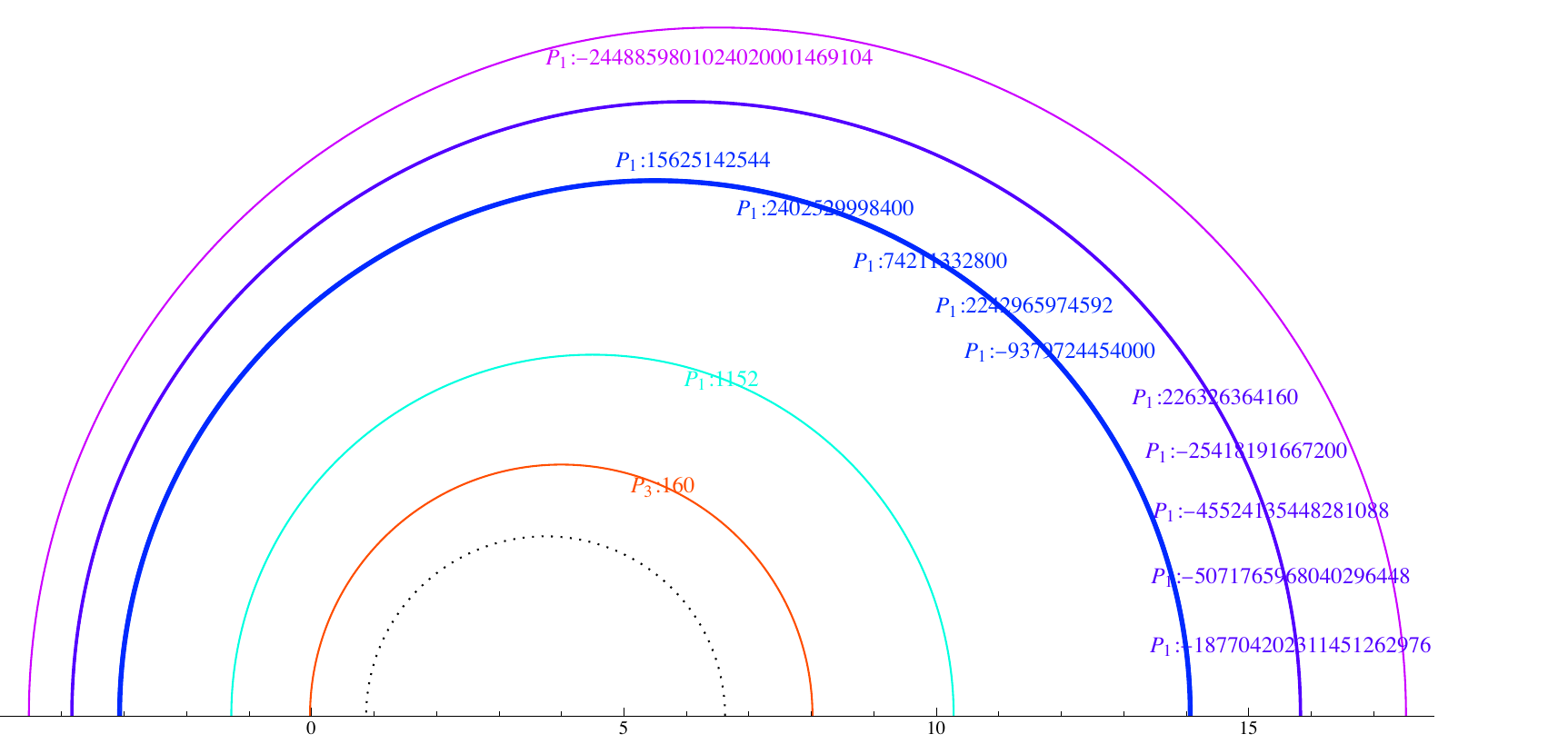}\\[3mm]
\includegraphics[height=6cm]{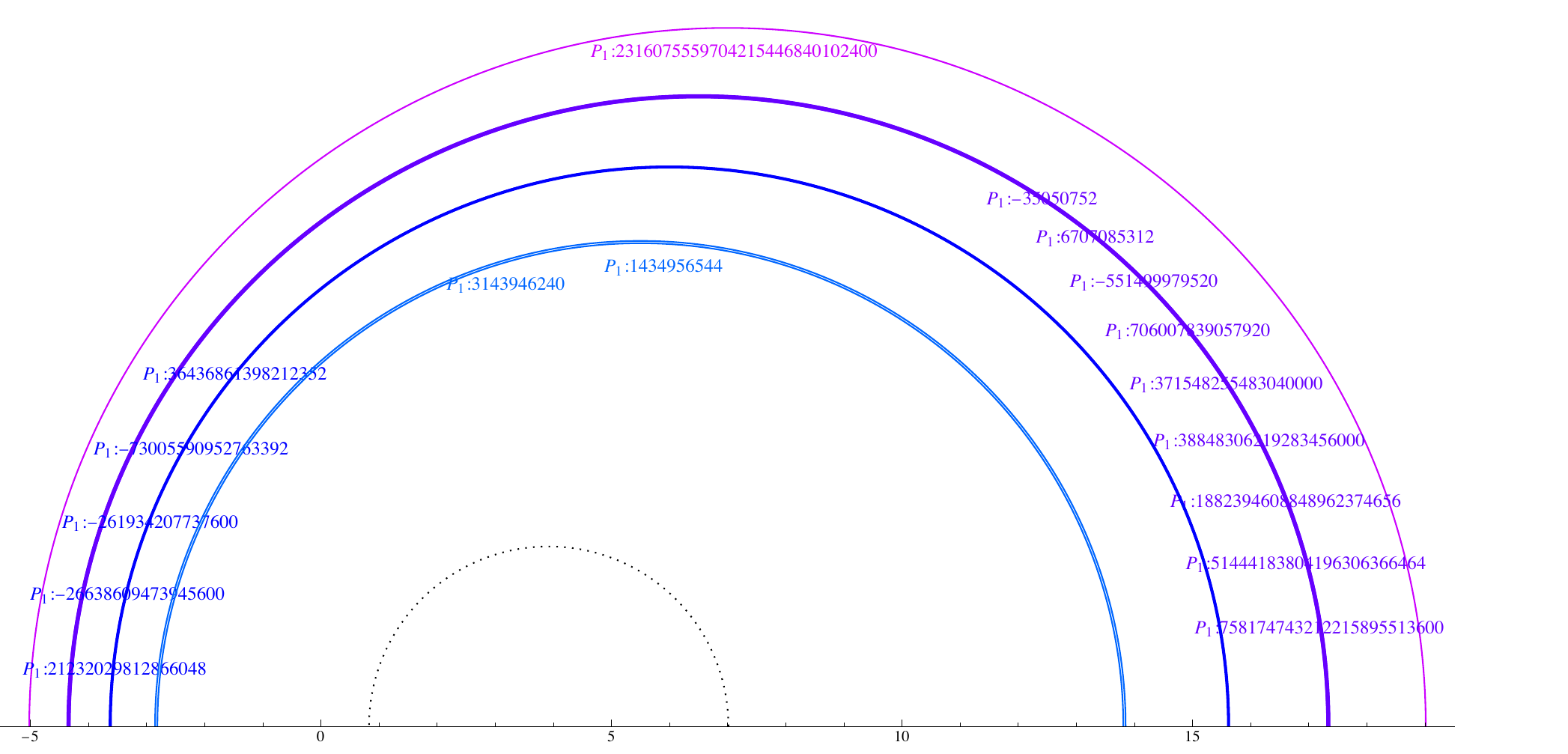}\\[3mm]
\includegraphics[height=6cm]{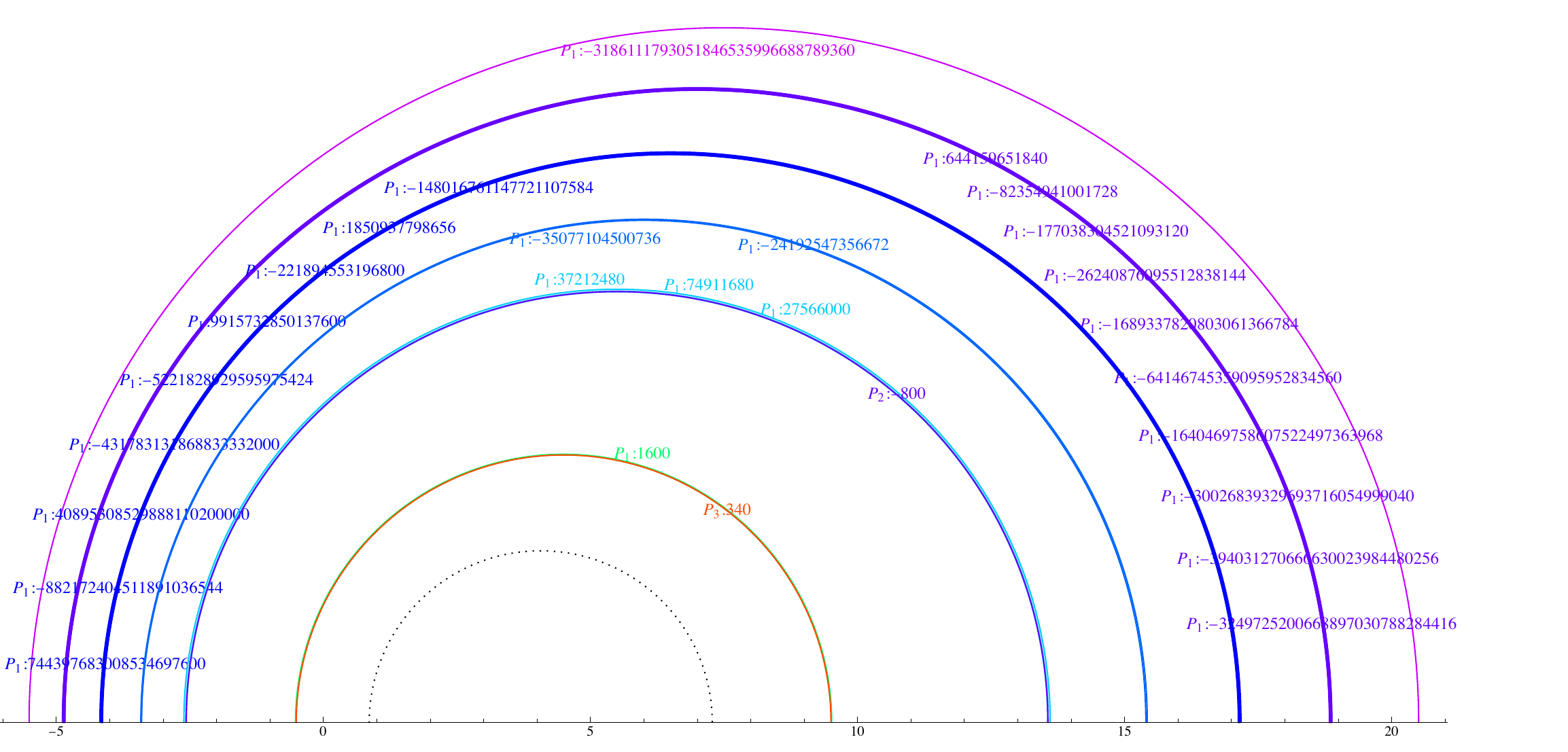}
\end{center}
\caption{Chamber structure for the charge vectors $\gamma=[-2,12,0,30]$,
$\gamma=[-2,13,0,34]$ (middle) and
$\gamma=[-2,14,0,38]$ (bottom) used to determine the first coefficients in $h_{2,0}$, $h_{2,1}$ and $h_{2,2}$ for $X_8$. The orange wall is absent on the middle graph, consistent with the vanishing of the first coefficient in $h_1$. 
The contributions of the walls add up to 
$\PT(12,-30)=-2449052622546271198617008$, $\PT(13,-34)=2316887100204163406937500672$ and $\PT(14,-38)=-3189787242522249817441517380140$, respectively.}
\label{figChamberX8}
\end{figure}

As for $X_{10}$, we observe that the leading term in $h_{2,0}$ can be interpreted as the contribution of a \DDb bound state $\gamma=\gamma_1+\gamma_2$ with $\gamma_1=\cO_{\CY}(-2H)[1]$ and $\gamma_2=\cO_{\CY}$, such that $\langle\gamma_1,\gamma_2\rangle=10$.
The second coefficient $2664$ 
also agrees with the naive ansatz $9 \DT(0,1)$ from \cite{Alexandrov:2022pgd}, while the third coefficient differs from 
the naive ansatz $8 \DT(0,2)=-344544$ by only 20 units. 
Similarly, the first non-zero coefficient in $h_{2,1}$
agrees with the naive ansatz $7 \DT(1,1)$, and the leading term
in $h_{2,2}$ can be interpreted as the contribution of bound state $\gamma=2\gamma_1+2\gamma_2$, with  $\gamma_1=\cO_{\CY}(-2H)[1]$ and $\gamma_2=\cO_{\CY}(-H)$, such that $\langle\gamma_1,\gamma_2\rangle=4$ (see footnote \ref{fookro}). It would be very interesting to have a similar interpretation of the other terms in \eqref{X8h2predict}, or even better in \eqref{exph2} below. 

While \eqref{WCF-rank2-main} based on Theorem \ref{thm-rank2} does not allow directly to compute the coefficients marked with $?$, 
Lemmas \ref{LemmaX81} and \ref{LemmaX82} show that for the $\cO(\q^3)$ coefficient in $h_{2,0}$,
the $\cO(\q^4)$ and $\cO(\q^5)$ coefficients in $h_{2,1}$, and
the $\cO(\q)$ coefficient in $h_{2,2}$, the formula \eqref{WCF-rank2-main} in fact continues to hold with $k_0-1$ units of spectral flow. 
Moreover, Lemma \ref{LemmaX83} states that the $\cO(\q^4)$
coefficient in $h_{2,0}$  can also be computed using $k_0-1$ units of spectral flow, provided one also includes contributions from bound states of the form
$\gamma=2\gamma_1+\gamma_2$ where $\gamma_1=\cO_{\CY}(3H)[1]$ and $\gamma_2=(1,0,-H^2,H^3)\otimes \cO_{\CY}(6H)$, with $\bOm(\gamma_1)=1$ and 
$\bOm(\gamma_2)=\DT(2,-2)=6$. Since $\langle\gamma_1,\gamma_2\rangle=16$ and
$\bOm_{K_{16}}(2,1)=120$, we find
\be
\tOm_{2,0}(4)=-\frac{29672563324}{11} +\frac{1}{22} \times 120 \times 6 =
-2697505724,
\ee
where the first term corresponds to the contribution given by \eqref{WCF-rank2-main}. 
Similarly, Lemma \ref{LemmaX84} asserts that the $\cO(\q^2)$
coefficient in $h_{2,2}$ can also be computed using $k_0-1$ units of spectral flow, provided one also includes contributions from bound states of the form $\gamma_0+\gamma_4$,
where $\gamma_4=\cO_{\CY}(5H)$ and
$\gamma_0=3\gamma_1+\gamma_2$, while $\gamma_1=\cO_{\CY}(3H)[1]$ and $\gamma_2=\cO_{\CY}(4H)$.
Since $\langle\gamma_1,\gamma_2\rangle=4$, $\langle \gamma_0,\gamma_4\rangle=26$ and
$\bOm_{K_{4}}(3,1)=-4$, we get
\be
\tOm_{2,2}(0)
=-\frac{9373102}{9} -\frac{1}{18} \times (-4) \times 26 =-1041450,
\ee
where the first term again corresponds to the contribution from \eqref{WCF-rank2-main}.
Using these results, we arrive at
\be
\begin{split}
h_{2,0} =&\, \q^{-\frac{13}{3}} \left( \underline{-10 +2664 \q  -344564\,\q^2 + 28739232 \q^3-2697505724\, \q^4}+\dots \right),
\\
h_{2,1} =&\, \q^{-\frac{101}{24}}  \left( \underline{0+0\q+206528\q^2 -51750016 \q^3 +6162707008 \q^4}
+ 3811912230976 \q^5 + \dots\right),
\\
h_{2,2} =&\, \q^{-\frac{23}{6}}
\left(\underline{-17+6216 \q -1041450\,\q^2 + ?\, \q^3}
+ \dots\right).
\end{split}
\ee
Remarkably, there exists a unique vector-valued mock modular solution which matches the
14 coefficients which we have computed:
\be
\begin{split}
h_{2,\mu} =&\, - \frac{E_4^2 E_6}{69657034752 \eta^{104}}\,
   \Bigl(27489243085 E_4^9+189114116937 E_4^6 E_6^2 
\\
&\, +177632766471 E_4^3
   E_6^4+21771164387 E_6^6\Bigr) \, \tvths{2}_\mu   
\\
&\,+\frac{1}
   {5804752896\eta^{104}}\,
   \Bigl(4132634222 E_4^{12}+90332924791 E_4^9 E_6^2+173271553815
   E_4^6 E_6^4
\\
&\, +50628642229 E_4^3 E_6^6+895654223
   E_6^8\Bigr) \, D  \tvths{2}_\mu
\\
&\, -\frac{ E_4 E_6}{241864704  \eta^{104}}\,
   \Bigl(4768083767 E_4^9+25127775147
   E_4^6 E_6^2+17062071045 E_4^3 E_6^4 
\\
&\, +1415010841
   E_6^6\Bigr) \, D^2   \tvths{2}_\mu
   + (-1)^{\mu} G_{\mu-2\mu_1+2}(\tau) \, h_{1}(\tau)^2,
\end{split}
\label{genX8mod}
\ee
where $G_\mu=\frac12\, V_2[H]_\mu$ is the image of the generating series of Hurwitz class numbers \eqref{H01} 
under the Hecke-type operator $V_2$ \eqref{defHecke-exp}.
This predicts the following rational DT invariants (which turn out to be integer in this case)
\be
\begin{split}
h_{2,0} =&\, \q^{-\frac{13}{3}} \left( \underline{-10 + 2664 \q - 344564 \q^2 + 28739232 \q^3 - 2697505724 \q^4} + 228996956288 \q^5  \right. 
\\
& \left. - 3674883925530040 \q^6 +767158958736950208 \q^7
+ 7316365166650368250826 \q^8 + \dots\right),
\\
h_{2,1} =&\, \q^{-\frac{101}{24}}  \left( \underline{0+0\q+206528 q^2 - 51750016 \q^3 + 6162707008 \q^4} + 3811912230976 \q^5 \right. \\
& \left. -  1398583666309568 \q^6+5593557858350659840 \q^7
+ 15312663983997350103488 \q^8+ \dots\right),
\\
h_{2,2} =&\, \q^{-\frac{23}{6}}
\left(\underline{-17 + 6216 \q - 1041450 \q^2 + 748974840 \q^3} - 161127538287 \q^4 \right.
\\
&\left. +
 52570589013672 \q^5 + 18286285510708098 \q^6++200036575607918035704 \q^7+ \dots\right),
\end{split}
\label{exph2}
\ee
while the generating series of integer DT invariants are given by
\be
\begin{split}
\hint_{2,0} =&\, \q^{-\frac{13}{3}} \left( \underline{-10 + 2664 \q - 344564 \q^2 + 28739232 \q^3
-2697490972 \q^4} + 228996956288 \q^5 \right. \\
& \left.- 3674883927683832 \q^6 +
 767158958736950208 \q^7 + 7316365166645010675082 \q^8
+ \dots\right),
\\
\hint_{2,1} =&\, \q^{-\frac{101}{24}}  \left( \underline{0+0\q+206528 q^2 - 51750016 \q^3 + 6162707008 \q^4} + 3811912230976 \q^5 \right. \\
& \left. -  1398583666309568 \q^6+5593557858350659840 \q^7
+ 15312663983997350103488 \q^8+ \dots\right),
\\
\hint_{2,2} =&\, \q^{-\frac{23}{6}}
\left(\underline{
-16 + 6216 \q - 1041672 \q^2 + 748974840 \q^3} - 161127516752 \q^4 \right.\\
&\left. +
 52570589013672 \q^5 + 18286285477473064 \q^6 +
 200036575607918035704 \q^7+ \dots\right).
\end{split}
\ee

One can make one additional check. Comparing the coefficients of $\q$, $\q^2$ and $\q^3$ in the expansion of $h_{2,2}$ \eqref{exph2}
with predictions of \eqref{WCF-rank2-main}, one finds the following PT invariants
\be
\begin{split}
\PT(18, -66) =&\, -593492954425169689587370336688497520,
\\
\PT(18, -65) =&\, 3400776365374405985628183925346634920,
\\
\PT(18, -64) =&\,  -18783627947594252131273789893202153760.
\end{split}
\label{PT18}
\ee
These invariants in fact agree with the prediction from \cite[Thm. 1]{Alexandrov:2023zjb}, even though the condition $f(x)<\alpha$ is not satisfied (this condition is saturated in the first case, and 
so \cite[Prop. 1]{Alexandrov:2023zjb} ensures that the formula in the theorem continues to hold).
On the other hand, if one computes the very first term in the expansion of $h_{2,2}$
with spectral flow parameter $k_0+1$, one gets 
\be
\label{X8test}
\begin{split}
&\,
-698844287571739765077573369634297775026427919017 
\\
&\, 
- \tfrac{2129617568925}{2}\, \PT(18, -66) + 18732246000 \PT(18, -65) -169487850 \PT(18, -64).
\end{split}
\ee
Upon substituting \eqref{PT18}, this gives back the correct value $-17$.
This is very strong evidence that the generating series \eqref{genX8mod} is correct.

\section{Discussion}
\label{sec_disc}

In this work we obtained generating series of D4-D2-D0 indices 
(or rank 0 DT invariants) with $r=2$ units 
of D4-brane charge for two CY threefolds with one K\"ahler modulus, 
the octic $X_{8}$ and decantic $X_{10}$.
This was done by generalizing the theorem from \cite{Alexandrov:2023zjb} which expresses 
PT invariants close to the Castelnuovo bound in terms of 
rank 0 DT invariants 
and PT invariants with lower charges. The new theorem
has weaker conditions of applicability, which 
allows, on the one hand, to go further away from the Castelnuovo bound 
and, on the other hand, to get additional contributions involving D4-D2-D0 indices
with $r=2$. The resulting formula can then be used to express these indices
in terms of PT invariants or, by the GV/PT correspondence, in terms of GV invariants. The generating series are then found by computing a sufficient number of D4-D2-D0 indices and applying modularity constrains.

A new feature of the generating functions for $r=2$ compared to 
the Abelian case $r=1$ is that they are (mixed) {\it mock} modular forms.
To our knowledge, this is the first time where mock modular forms arise 
for CY threefolds without any additional structure such as an elliptic or K3 fibration. While physically mock modularity follows from the interplay of S-duality and wall-crossing in string theory compactified 
on this threefold, its mathematical origin remains mysterious.

Since the generating series found in this way provide an infinite number of new D4-D2-D0 indices, the same theorem can be used to predict infinite families of GV invariants, which provide new boundary conditions for fixing the holomorphic ambiguity 
in the direct integration method. More specifically, our results allow to push the maximal genus
of GV invariants computable by direct integration from 70 to 95 for $X_{10}$ and from 84 to 112
for $X_8$.

There are several obvious directions to try and extend our results.
First, it is natural to apply the same strategy
to find generating functions of D4-D2-D0 indices for other CY threefolds.
For example, for $r=1$ we succeeded to do this for 11 of the 13 hypergeometric models \cite{Alexandrov:2023zjb}.
The difficulty however is that this requires the knowledge of GV invariants at very high genus,
and in all cases that we analyzed, our current knowledge is not sufficient.
Thus, one has to either find a new source of boundary conditions (e.g. a better understanding of the structure of 
topological string free energies at singular points in moduli space)
or further relax the conditions of the theorem underlying our computation 
(e.g. in the way presented in \S\ref{subsec-generThm}, which was already indispensable in the case of $X_8$).

To illustrate the situation, let us consider the example of the CY threefold $X_{6,6}$
which is distinguished by the fact that the generating function $h_{2,\mu}$
has only 5 polar terms,  the smallest number among all hypergeometric models. Due to our lack of understanding of the behavior of topological string free energies at the K-point 
in K\"ahler moduli space,  the direct integration method 
allows to determine GV invariants only up to $g=21$, even with the input 
of Abelian D4-D2-D0 indices. This unfortunately only allows to 
determine the leading polar coefficient in $h_{2,0}$. It may be possible to extend Theorem \ref{thm-rank2}
so that two or three more coefficients could be found. But this is not enough yet
to fix the generating function $h_{2,\mu}$ uniquely. Thus, even this simplest case
seems to require new insights.

A second obvious direction is to try to go to higher D4-brane charge $r>2$.
In this case the generating functions $h_{r,\mu}$ are actually mock modular forms 
of depth $r-1$, which have a more complicated modular anomaly \cite{Alexandrov:2018lgp}. 
Therefore, the first necessary step is to find a solution of the modular anomaly 
equation \cite{AB-in-progress}. This would reduce the problem again to computing 
the polar terms. On one hand, this requires a generalization of Theorem \ref{thm-rank2}
with even more relaxed conditions allowing for contributions of D4-D2-D0 indices 
with $r$ units of D4-brane charge, as discussed in Remark \ref{rem_higherrk}.
On the other hand, higher D4-brane charges require GV invariants of higher and higher genus.
Fortunately, the knowledge of generating functions for $r'<r$ furnishes additional 
boundary conditions for direct integration so that one may hope that recursively 
they will be sufficient to get the required polar terms.
However, this is far from evident, as already demonstrated by the $r=2$ case for 
most of one-parameter CYs.
In fact, our estimations in \S\ref{sec_bcmock} suggest that, starting from some $r$, 
the number of new boundary condition produced by
the generating function $h_{r,\mu}$ may not be enough to push 
the direct integration even to the next genus.

Finally, a very interesting open problem is to understand the microscopic 
origin of the D4-D2-D0 indices, which we computed indirectly by studying
the chamber structure of an anti-D6-brane with suitable D2 and D0 charges in the space of weak stability conditions.
As apparent from Figures \ref{figChamberX10} and \ref{figChamberX8}, the latter is quite complicated, and the contribution from the emission of a
D4-D2-D0 brane is typically responsible for a very small fraction of the PT invariant at large volume. We expect that at least for polar charges, the chamber structure of  D4-D2-D0 should be much simpler, with a few walls arising
from emission of \DDb bound states. Unfortunately, we currently have little control on these walls, and even the first subleading polar term in $h_1$
for $X_{10}$ \eqref{resfunX10} remains puzzling from this point of view.

\appendix

\section{by S. Feyzbakhsh}
\label{sec_proofs}
As in \cite[\S A]{Alexandrov:2023zjb}, let 
$(\CY,H)$ be a smooth polarized  CY threefold with $\Pic(\CY)=\IZ.H$
satisfying the BMT conjecture \eqref{BMTineq0}. We denote by $U=\{(b,w), w>\frac12 b^2\}$
the domain of weak stability conditions
$\nu_{b,w}$ with $w=\frac12(a^2+b^2)$ (see \S\ref{sec_weak} and footnote \ref{fooabw}), and 
$\partial U=\{(b,w), w=\frac12 b^2\}$ its boundary.

\subsection{Statement of the Theorem}
\label{subsec-thm}

By studying wall-crossing for the rank $-1$ 
class 
\be
\label{defv}
\v = (-1, 0, \beta, -m),
\ee
we shall prove the following:  

\begin{Thm}\label{thm-rank2}
Fix $\beta \in H_2(\CY, \Z)$ and $m \in \Z$ such that the line $\ell_f$ of equation 
$w = -\frac{3m}{2\beta.H} b - \frac{\beta.H}{H^3}$ intersects $\partial U$ 
at two points with $b$-values $0< b_1 < b_2$ satisfying $ b_1 < 3$ and $b_2-b_1 > b_1$. 
Then 
\begin{equation}\label{WCF-rank2}
\mathrm{P}_{m, \beta} = P_1 +P_2 +P_3,
\end{equation}
where each term $P_i$ is as follows:
	
\bigskip
	
(1) The first term $P_1$ is given by
\begin{equation*}
P_1 = \sum_{(m_1, \beta_1) \in M_1}  (-1)^{\chi_{m_1, \beta_1}}\, \chi_{m_1, \beta_1} \mathrm{P}^{w(\beta_1)}_{m_1, \beta_1}\; \bOm_{\infty}\left(0,\ H, \ \frac{1}{2}\,H^2 -\beta_1 +\beta\ , \ \frac{1}{6}\,H^3 +m_1-m -\beta_1.H \right),
\end{equation*}
where	
\bea
\chi_{m_1, \beta_1} &=& m+\beta.H -\frac{H^3}{6}- \frac{1}{12}\,c_2(\CY).H -m_1+\beta_1.H ,
\\
\mathrm{P}^{w(\beta_1)}_{m_1, \beta_1} &=& \lim_{\epsilon\to 0^+} \bOm_{b=2,\, w(\beta_1)+ \epsilon}(-1, 0, \beta_1, -m_1) \quad \text{for} \quad w(\beta_1) = \frac{2\beta.H}{H^3} - \frac{3\beta_1.H}{H^3} - 1.   \label{defptw}
\eea
Moreover $M_1$ consists of pairs $(m_1, \beta_1)$ such that 
\be
\begin{split}
\label{aa-13}
0 &\,\leq\, \frac{\beta_1.H}{H^3} \leq\; \frac{3m}{2\beta.H} + \frac{\beta.H}{H^3} + \frac{1}{2}\, ,
\\
-	w(\beta_1)\,\frac{\beta_1.H}{3} - \frac{(\beta_1.H)^2}{3H^3} 
& \leq \; \ m_1 \; \ \leq\;  \frac{1}{2H^3}\, (\beta.H -\beta_1.H)^2 + \frac{1}{2}\,(\beta.H +\beta_1.H) +m.
\end{split}
\ee

\bigskip

(2) The second term $P_2$ is given by 
\begin{align*}
P_2 = \sum_{\substack{(m_1,m_0, \beta_1, \beta_0) \in M_2
}}
& \ \frac{1}{2}\ (-1)^{\chi_{m_1, \beta_1} + 	\chi'_{m_1, \beta_1} +1 }\, 
\chi_{m_1, \beta_1} \, \chi'_{m_1, \beta_1}\, 
\mathrm{P}_{m_1, \beta_1} \bOm_{\infty}(0, H, \beta_0, m_0)
\\
& \times \bOm_{\infty}\left(0,\ H,\ \beta -\beta_0 -\beta_1 +2H^2, \ -m-m_0 +m_1-2\beta_1.H + \frac{4}{3}\,H^3  \right),
\end{align*}
where 
\be
\begin{split}
\chi_{m_1, \beta_1} &= m_0-2\beta_0.H-\beta_1.H+2H^3 + \frac{1}{12}\,c_2(\CY).H ,
\\
\chi'_{m_1, \beta_1} &= -m-m_0+m_1-2\beta.H +2\beta_0.H-\beta_1.H -\frac{2}{3}\,H^3 + \frac{1}{12}\,c_2(\CY).H
\end{split}
\ee
and $M_2$ consists of quadruples $(m_1,m_0, \beta_1, \beta_0)$ such that 
\be
\begin{split}
\label{cc-125}
0 \leq\; \frac{\beta_1.H}{H^3}\; \leq\; 2 + \frac{3m}{ \beta.H} + \frac{\beta.H}{H^3} \, , 
&\qquad 
\frac{\beta_0.H}{H^3} = 1 + \frac{1}{2H^3}(\beta.H -\beta_1.H),
\\
-\frac{(\beta_1.H)^2}{2H^3} - \frac{\beta_1.H}{2}
\leq \; m_1, 
&\qquad  
m_0\; \leq\; \frac{(\beta_0.H)^2}{2H^3} + \frac{H^3}{24}\, , 
\\
-m-m_0 +m_1-2\beta_1.H + \frac{4}{3}\,H^3\; & \leq\; \frac{(\beta_0.H)^2}{2H^3}  + \frac{H^3}{24}\,. 
\end{split}
\ee
		
\bigskip 
		
(3) The third term $P_3$ is given by 

\begin{equation*}
P_3 = \sum_{(m_1, \beta_1) \in M_3}  (-1)^{\chi_{m_1, \beta_1}}\, 
\chi_{m_1, \beta_1}\, \mathrm{P}_{m_1, \beta_1}\, 
\bOm_{\infty}\left(0,\ 2H,\ \beta -\beta_1 +2H^2, \ -m +m_1 -2\beta_1.H + \frac{4}{3}\, H^3 \right),
\end{equation*}
where 
\begin{equation}
\chi_{m_1, \beta_1} = m-m_1+2\beta_1.H +2\beta.H -\frac{4}{3}\,H^3 - \frac{1}{6}\,c_2(\CY).H
\end{equation}
and $M_3$ consists of pairs $(m_1, \beta_1)$ such that 
\be
\begin{split}
\label{bb-12}
0 & \leq\, \frac{\beta_1.H}{H^3}  \leq\; 2 + \frac{3m}{ \beta.H} + \frac{\beta.H}{H^3}
\\
-\frac{(\beta_1.H)^2}{2H^3} - \frac{\beta_1.H}{2}
& \leq \;\ m_1\; \ \leq\; \frac{1}{4H^3}\,(\beta.H-\beta_1.H +2H^3)^2 - H^3 +m +2\beta_1.H. 
\end{split}
\ee
		
\end{Thm}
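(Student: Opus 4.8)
The plan is to adapt the wall-crossing argument of \cite[Thm~1]{Alexandrov:2023zjb} (itself built on \cite{Feyzbakhsh:2020wvm,Feyzbakhsh:2021rcv}) to the present weaker hypotheses. The starting point is that for $w\gg 0$ and $b$ slightly positive a $\nu_{b,w}$-semistable object of class $\v=(-1,0,\beta,-m)$ is exactly a stable pair, so its invariant equals $\mathrm{P}_{m,\beta}$, while the BMT inequality \eqref{BMTineq0}, whose wall for $\v$ is the semicircle meeting $\partial U$ at $b_1<b_2$, forces $\bOm_{b,w}(\v)=0$ once $(b,w)$ lies below that semicircle. Walking $(b,w)$ down the half-line $\ell_f$ from large volume to $\partial U$, the jump from $\mathrm{P}_{m,\beta}$ to $0$ is the sum over the finitely many genuine walls crossed in between, each governed by the Joyce--Kontsevich--Soibelman wall-crossing formula \cite{ks,Joyce:2008pc}.

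The first task is to enumerate those walls. A wall for $\v$ inside $U$ corresponds to a decomposition $\v=\v_0+\sum_i\gamma_i$ in which $\v_0$ carries the rank $-1$ and the $\gamma_i$ are rank $0$ (the emitted D4-D2-D0 fragments, whose invariants are their large-volume/MSW indices $\bOm_\infty$), all central charges $Z_{b,w}$ being aligned along the wall; after a suitable spectral flow \eqref{specflowD6} the surviving rank $-1$ object is again a stable pair of strictly smaller $\beta.H$, which is why PT invariants of lower degree reappear on the right. The hypotheses $b_1<3$ and $b_2-b_1>b_1$ are exactly what bounds the \emph{total} D4-brane charge shed along the descent by $2$: the first inequality, via support bounds and the classical and conjectural Bogomolov--Gieseker--type inequalities for the constituents, rules out any fragment (or accumulated family of fragments) of D4-charge $\geq 3$, while $b_2-b_1>b_1$ ensures the BMT wall is wide enough that the leftover reaches large volume before one runs out of room. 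One is thus left with exactly three combinatorial types: a single charge-$1$ fragment ($P_1$), two charge-$1$ fragments ($P_2$), and a single charge-$2$ fragment ($P_3$). The index sets $M_1,M_2,M_3$ then follow from requiring that the associated semicircle genuinely lies in $U$ strictly between $\partial U$ and the BMT wall, together with the Castelnuovo-type bound \eqref{CastPT} and the bound \eqref{qmax} on the reduced charge $\hq_0$ of each rank $0$ constituent.

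With the walls in hand, one applies the wall-crossing formula wall by wall. For $P_1$ the decomposition is into two stable objects, so the primitive wall-crossing formula of \cite{Denef:2007vg} applies and produces the factor $(-1)^{\chi}\chi$ with $\chi=\langle\v_0,\gamma_1\rangle$; the subtlety is that when $b_1\geq 2$ the rank $-1$ leftover of charge $(m_1,\beta_1)$ is not yet at large volume, so its invariant is the tilt-stable one $\mathrm{P}^{w(\beta_1)}_{m_1,\beta_1}$ at $(b,w)=(2,w(\beta_1))$, which must be re-expressed in terms of standard PT and Abelian D4-D2-D0 invariants by running the same descent one step lower --- this is Proposition~\ref{lem.rank -1}. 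For $P_2$ and $P_3$ the local configuration is a small quiver, of Kronecker type on the two charge-$1$ slots, so the full Joyce--Kontsevich--Soibelman formula is needed; the output is a product of two $(-1)^{\chi}\chi$ factors, with the symmetry factor $\tfrac12$ in $P_2$ coming from the interchange of the two identical charge-$1$ fragments, exactly as in the Kronecker-quiver indices recalled in footnote~\ref{fookro}. Translating the $(p^0,p^1,q_1,q_0)$-data of the fragments into the $\tOm_{r,\mu}(n)$ notation via \eqref{def-tOm} and using \eqref{prop-tOm2} then yields \eqref{WCF-rank2}.

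The step I expect to be the main obstacle is the wall enumeration: proving rigorously that, under $b_1<3$ and $b_2-b_1>b_1$, no wall with total emitted D4-charge $\geq 3$ --- nor a leftover destabilised in a way not captured by Proposition~\ref{lem.rank -1} --- can occur before the descent terminates. This requires sharp control of where every candidate destabilising semicircle meets $\partial U$ relative to $b_1$ and $b_2$, which is essentially the case analysis encoded in the piecewise definition of $f_2$ in \eqref{deffx}, combined with the Bogomolov--Gieseker--type bounds to exclude the unwanted constituents. A secondary but still delicate point is to verify that each index on the right is genuinely the $a\to\infty$ limit $\bOm_\infty$ (equivalently, the Gieseker invariant) of the corresponding rank $0$ class, which uses the absence of further walls for rank $0$ classes between the last wall crossed and $a=\infty$.
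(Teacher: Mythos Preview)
Your outline follows the paper's approach closely: wall-cross the class $\v=(-1,0,\beta,-m)$ from large volume (where the invariant is $\mathrm{P}_{m,\beta}$) down to the empty region guaranteed by the BMT inequality, classify the walls, and apply Joyce's formula. The identification of the main obstacle --- bounding the D4-charge of the rank-zero factors --- is also correct, and is exactly the content of Lemma~\ref{lem.destabilising objects}.

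There are, however, a few points where your description does not match what actually happens. First, the roles of the two hypotheses are essentially reversed. The condition $b_2-b_1>b_1$ is what forces the rank $r'$ of the destabilising subobject to vanish (from $r'(b_2-b_1)\le b_1$), so that the emitted piece is a genuine rank-zero D4--D2--D0 object; only \emph{then} does $b_1<3$ bound its D4-charge by~$2$ (since $c\le b_1<3$ and $c$ is integral). The piecewise function $f_2$ is merely the translation of these two geometric conditions into the $(x,\alpha)$ variables, not a source of additional case analysis.

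Second, your treatment of $P_3$ is incorrect. The term $P_3$ comes from a \emph{two}-factor wall --- the rank~$-1$ leftover and a single rank-zero class with $\ch_1=2H$ --- so the primitive formula applies and there is a single $(-1)^\chi\chi$ factor, exactly as in $P_1$. Only $P_2$ involves three stable factors on the same wall (the rank~$-1$ piece and two rank-zero classes with $\ch_1=H$ each), and only there is the full Joyce formula needed.

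Third, the $\tfrac12$ in $P_2$ is not simply a symmetry factor for identical fragments. The two rank-zero classes $\v_0,\v_0'$ have the same $\ch_1=H$ but generically different $\ch_2,\ch_3$. The paper computes the combinatorial constant by explicitly summing Joyce's $U$-functions over the six orderings of $(\v_0,\v_0',\v_1)$, obtaining $k=4$ (or $k=2$ when $\v_0=\v_0'$); the uniform $\tfrac12$ in the statement then accounts for the double-counting of the pair $(\v_0,\v_0')$ in the sum over $M_2$. The Kronecker-quiver formula in footnote~\ref{fookro} is used elsewhere in the paper to interpret polar coefficients, not in this proof.

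Finally, note that in cases (2) and (3) the rank~$-1$ leftover $\v_1=\cO_{\CY}(2H)\otimes(-1,0,\beta_1,-m_1)$ is shown, using that no wall for $\v_1$ crosses the vertical line $b=3$, to have invariant equal to the genuine large-volume $\mathrm{P}_{m_1,\beta_1}$; it is only in case (1), where $\v_1$ is twisted by $\cO_{\CY}(H)$, that one must stop at $(b,w)=(2,w(\beta_1))$ and invoke Proposition~\ref{lem.rank -1}.
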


Furthermore, the invariant $\mathrm{P}^w_{m_1, \beta_1}$ 
defined in \eqref{defptw} can be computed by the following: 
\begin{Prop}\label{lem.rank -1}
 Let $\ell_{\v_1}(w)$ be the line passing through $\Pi(\v_1) = \left(0, -\frac{\beta_1.H}{H^3}\right)$ and $(b=2, w)$.
If $\ell_{\v_1}(w)$ intersects $\partial U$ at two points with $b$-values $b_1 <b_2$ such that $b_2 -b_1 > \max\{b_1 , 1\}$ and $0 < b_1 < 2$, then
\begin{align*}
\mathrm{P}^w_{m_1, \beta_1} = &\ \mathrm{P}_{m_1, \beta_1}\ -
\\
& \sum_{(m', \beta') \,\in\, M^w} (-1)^{\chi_{m', \beta'}}\, \chi_{m', \beta'}\, 
\mathrm{P}_{m', \beta'}\; \bOm_{\infty}\left(0,\ H, \ \frac{1}{2}\,H^2 -\beta' +\beta_1\ , \ \frac{1}{6}\,H^3 +m'-m_1 -\beta'.H \right),
\end{align*} 	
where 
\begin{equation}
\chi_{m', \beta'} = m_1+\beta_1.H -\frac{1}{6}\, H^3- \frac{1}{12}\,c_2(\CY).H -m'+\beta'.H, 
\end{equation}
and the set $M^w$ consists of pairs $(m', \beta')$ such that
\be
\begin{split}
\label{T1}
0 & \leq \beta'.H  <\frac{H^3}{2} (1-w) + \frac{\beta_1.H}{2}
\\
-\frac{(\beta'.H)^2}{2H^3}  - \frac{\beta'.H}{2}
& \leq \ m' \ \leq  \frac{1}{2H^3}\, (\beta_1.H -\beta'.H)^2 + \frac{1}{2}\,(\beta_1.H +\beta'.H) +m_1. 
\end{split}
\ee
\end{Prop}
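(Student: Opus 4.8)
The plan is to realise both invariants in the statement as values of $\bOm_{b,w}(\v_1)$ for the rank $-1$ class $\v_1=(-1,0,\beta_1,-m_1)$ along the vertical ray $b=2$: the standard PT invariant $\mathrm{P}_{m_1,\beta_1}$ is the value in the chamber adjacent to $w=+\infty$, while $\mathrm{P}^w_{m_1,\beta_1}$ is the value just above the numerical wall at $(b,w)=(2,w)$. The proposition then follows by integrating the wall-crossing formula of \cite{Joyce:2008pc,ks} along the segment $\{b=2,\ w<w'<+\infty\}$ and collecting the contributions of the finitely many walls it meets. This is the same mechanism as in the proof of Theorem \ref{thm-rank2}, the only structural difference being that the path runs down a vertical ray to a finite height rather than along the line $\ell_f$ all the way to $\partial U$.

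First I would recall the chamber geometry. The numerical walls for $\v_1$ inside $U$ are the lines through the fixed point $\Pi(\v_1)=(0,-\beta_1.H/H^3)$, of which only finitely many are active; $\ell_{\v_1}(w)$ is the particular such line through $(2,w)$, and by hypothesis it meets $\partial U$ at $0<b_1<b_2$ with $b_2-b_1>\max\{b_1,1\}$ and $0<b_1<2$. These inequalities are exactly what is needed to pin down which walls lie in the segment and to control the destabilising sequences on them: on each such wall the destabilising sequence is a two-term extension $0\to E'\to E\to \w\to 0$ (or its rotation) in the tilted heart $\cA_2$, with $\ch(E')+\ch(\w)=\ch(E)$; the bound $b_2-b_1>b_1$ forces the rank-$0$ factor $\w$ to carry D4-brane charge exactly $H$ (a higher D4-charge would require a wall closer to $\partial U$), and $b_1<2$ together with $b_2-b_1>1$ guarantees that, after twisting by $\cO_\CY(-H)$ to restore $\ch_1=0$, the rank-$(-1)$ factor $E'$ is large-volume (Gieseker/PT) stable throughout the segment. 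Hence $\w$ contributes an Abelian D4-D2-D0 index of the form $\bOm_\infty(0,H,\cdot,\cdot)$, while $E'$ contributes a standard PT invariant $\mathrm{P}_{m',\beta'}$ for suitable $(m',\beta')$.

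Next I would extract the data of each wall by a Riemann--Roch computation, using the quantisation and spectral-flow formulae of \S\ref{sec_review}. For the rank-$(-1)$ factor of charge $(-1,0,\beta',-m')$ this shows that the D4-D2-D0 partner is exactly $\bOm_\infty\!\left(0,\ H,\ \tfrac12 H^2-\beta'+\beta_1,\ \tfrac16 H^3+m'-m_1-\beta'.H\right)$ and that the Euler pairing $\chi(E',\w)$ equals $\pm\chi_{m',\beta'}$ with $\chi_{m',\beta'}$ as in the statement. Since each wall admits only this primitive two-term decomposition, the general wall-crossing formula collapses to the corresponding Denef-type primitive jump \cite{Denef:2007vg}, with coefficient $(-1)^{\chi_{m',\beta'}}\chi_{m',\beta'}$. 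It then remains to check that the walls met by the segment are in bijection with $M^w$: the condition $0\le \beta'.H<\tfrac{H^3}{2}(1-w)+\tfrac{\beta_1.H}{2}$ encodes ``the wall for the pair $(E',\w)$ crosses $b=2$ above height $w$'', while the bounds on $m'$ encode effectivity of $\w$ and the existence of the wall inside $U$. Collecting the jumps along the segment and comparing with the large-volume value $\mathrm{P}_{m_1,\beta_1}$ then yields the stated formula.

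The main obstacle, exactly as in Theorem \ref{thm-rank2}, is the uniform control of the destabilising objects along the whole segment: one must show (i) that the rank-$(-1)$ factor is genuinely large-volume stable all the way down to $(2,w)$, so that the honest PT invariant $\mathrm{P}_{m',\beta'}$ and not some intermediate tilt invariant enters --- this uses the empty chamber forced by the BMT inequality \eqref{BMTineq0} for the relevant ``small'' charges; (ii) that no rank-$0$ factor of D4-charge $\ge 2H$ ever appears, which is where $b_2-b_1>b_1$ enters; and (iii) that the Harder--Narasimhan multiplicities on each wall equal one, making the wall-crossing primitive. These follow from the same case analysis on the position of the walls relative to $b_1$, $b_2$ and $\partial U$ that underpins the proof of the main theorem; granting them, the rest is bookkeeping with Riemann--Roch and the chamber geometry.
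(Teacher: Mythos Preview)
Your proposal is correct and takes essentially the same approach as the paper: both argue by wall-crossing for the class $\v_1=(-1,0,\beta_1,-m_1)$ between the large-volume chamber and the chamber just above $(b,w)=(2,w)$, invoking the same structural input as \cite[Thm~1]{Alexandrov:2023zjb} (equivalently Lemma~\ref{lem.destabilising objects} with $r=1$) to see that only Abelian D4-factors occur and that the wall-crossing is primitive. The paper phrases this as ``walls above the line $\ell_{\v_1}(w)$'' while you walk down the vertical ray $b=2$; since all walls for $\v_1$ pass through $\Pi(\v_1)$, these descriptions are equivalent, and your reading of the constraint $\beta'.H<\tfrac{H^3}{2}(1-w)+\tfrac{\beta_1.H}{2}$ as ``the wall crosses $b=2$ above height $w$'' is exactly the paper's condition ``$\Pi(v')$ lies above $\ell_{\v_1}(w)$''.

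One small correction: your point (i), that the rank $-1$ factor $E'$ contributes the honest $\mathrm{P}_{m',\beta'}$ rather than an intermediate tilt invariant, does \emph{not} follow from the BMT empty-chamber mechanism for ``small'' charges. The relevant argument (implicit in \cite[Thm~1]{Alexandrov:2023zjb}) is that the wall in question crosses the vertical line $b=2$, and no wall for the shifted class $E'\otimes\cO_\CY(-H)$ of $\ch_1=0$ can cross an integer vertical line, by the integrality argument of \cite[Lemma~3.5]{feyz:effective-restriction-theorem}; hence $E'$ is already large-volume stable at the wall. The BMT inequality instead enters in bounding $m'$ (via the Castelnuovo bound for $\mathrm{P}_{m',\beta'}$ and \cite[Lemma~B.3]{Feyzbakhsh:2021rcv} for the rank-$0$ factor), not in establishing stability of $E'$.
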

\begin{proof}[Proof of Proposition \ref{lem.rank -1}]
One only need to apply the wall-crossing formulae of \cite{Joyce:2008pc} for the class $\v_1 = (-1, 0, \beta_1, -m_1)$ along all possible walls above the line $\ell_{\v_1}(w)$. The only difference to  \cite[Thm 1]{Alexandrov:2023zjb} is that the projection $\Pi(v')$ of the destabilising factor $v' = (-1, 0, \beta', -m') \otimes \cO_{\CY}(H)$  
must lie above the line $\ell_{\v_1}(w)$, i.e. 
\begin{equation}
\frac{w}{2} - \frac{\beta_1.H}{2H^3} <
\frac{1}{2} -\frac{\beta'.H}{H^3}\, ,
\end{equation}
as claimed in \eqref{T1}. 
\end{proof}

\bigskip 

\subsection{Proof of Theorem \ref{thm-rank2}}

As in \cite[\S A]{Alexandrov:2023zjb}, the proof proceeds by studying wall-crossing for the rank $-1$ class $\v = (-1, 0, \beta, -m)$ in the space of weak stability conditions
$\nu_{b,w}$. As before, we denote by $\bOm_{b,w}(\v)$ the (rational) Donaldson-Thomas
invariant associated to the moduli stack of $\nu_{b,w}$-semistable 
objects of class $\v$. 

In the domain $U$ parametrized by $(b,w)$, the BMT inequality \eqref{BMTineq0} implies the linear inequality
\begin{equation}\label{BMTineq}
L_{b,w}(\v) \coloneqq w(2H^3\beta.H) + 3b(-H^3)(-m) +2(\beta.H)^2 \geq 0\,.
\end{equation}
Thus, all walls for class $\v$ lie above or on the line $\ell_f$ of equation 
\begin{equation}\label{final line}
	w = -\frac{3m}{2\beta.H}\, b - \frac{\beta.H}{H^3}\, .  
\end{equation}	
Let $b_1< b_2$ be the values of $b$ at the intersection points of the line $\ell_f$ with the boundary $\partial U=\{(b,w), w = \frac{b^2}{2}\}$. 
Let $E' \rightarrow E \rightarrow E''$ be a destabilising sequence along a wall $\ell$ for an object $E$ of class $\v$. Applying the same argument as in Lemma \cite[Lemma 1]{Alexandrov:2023zjb} implies the following:  
\begin{lemma}\label{lem.destabilising objects}
Take an integer $r \geq 1$. Suppose
\begin{enumerate}
\item[(i)] $0 < b_1 < r+1$,
\item[(ii)] $b_1 < b_2-b_1$.
\end{enumerate}
 Then $b_1 \geq 1$ and there is an ordering $E_0,E_1$ of $E',E''$ such that
\begin{itemize}
\item 
$E_0$ is a rank zero slope-semistable sheaf with $0 < \ch_1(E_0).H^2 \leq r H^3$,   

\item 
$E_1$ is a rank $-1$ object such that $\cH^{-1}(E_1)$ is of rank one. 
\end{itemize}
Moreover, there is no $\nu_{b,w}$-semistable object of class $\v$ for $(b,w) \in U$ below $\ell_f$. 
\end{lemma}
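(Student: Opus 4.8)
The plan is to run the argument of \cite[Lemma 1]{Alexandrov:2023zjb} essentially verbatim: that statement is precisely the case $r=1$ of the present lemma, and the only place where the rank bound intervenes is the replacement of the threshold $b_1<2$ used there by $b_1<r+1$ here. So I would not reprove anything from scratch, but rather transcribe the argument and track this single numeral.

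The \emph{moreover} clause needs nothing beyond the inequality \eqref{BMTineq}: the condition $L_{b,w}(\v)\geq 0$ is, by inspection, equivalent to $(b,w)$ lying on or above the line $\ell_f$ of \eqref{final line}, and it holds for every $\nu_{b,w}$-semistable object of class $\v$ with $(b,w)\in U$ because it is a consequence of the BMT conjecture applied to $\v$. Hence no such object exists strictly below $\ell_f$, so no wall for $\v$ can descend there. Since $\ell_f$ has negative intercept $w=-\beta.H/H^3<0$ at $b=0$ while $\partial U$ passes through the origin, the two intersections of $\ell_f$ with $\partial U$ satisfy $0<b_1<b_2$, every wall for $\v$ is trapped in the compact lens between $\ell_f$ and $\partial U$, and each wall meets $\partial U$ at two points with $b$-coordinates in $[b_1,b_2]$.

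For the structure of the factors I would take a destabilising triangle $E'\to E\to E''$ along a wall $\ell$, with $E',E''\in\cA_b$ both $\nu_{b,w}$-semistable of the same tilt-slope and $\rk E'+\rk E''=\rk E=-1$, and run the cohomological case analysis of loc. cit.: using that objects of $\cA_b$ have $\cH^{-1}$ torsion-free of $\mu$-slope $\leq b$ and $\cH^0$ of $\mu$-slope $>b$, that for the class $\v=(-1,0,\beta,-m)$ the sheaf $\cH^{-1}(E)$ has rank one, and that along $\ell$ one has $b\leq b_2$ with $b_2-b_1>b_1$, one rules out every rank splitting except the one in which, after relabelling $E',E''$ as $E_0,E_1$, the object $E_0$ is an honest sheaf of rank $0$ — automatically slope-semistable, being a $\nu_{b,w}$-semistable rank-$0$ object — and $E_1$ has rank $-1$ with $\cH^{-1}(E_1)$ of rank one. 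Writing $\ch_1(E_0)=dH$ with $d\in\IZ_{>0}$, the long exact cohomology sequence then forces $\mu(\cH^{-1}(E_1))\geq d$, while membership $E_1\in\cA_b$ demands $\mu(\cH^{-1}(E_1))\leq b$ all along $\ell$; hence $b\geq d$ on $\ell$, so $b_1\geq d\geq 1$, which yields $b_1\geq 1$, and then hypothesis (i), $b_1<r+1$, gives $d\leq r$, i.e. $0<\ch_1(E_0).H^2\leq r H^3$.

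I expect the only genuinely delicate step to be the cohomological case analysis that excludes the splittings with $\rk E_0\geq 1$ (equivalently $\rk\cH^{-1}(E_0)\geq 1$), since this is where the interplay of the slope bounds $\mu(\cH^{-1}(-))\leq b\leq b_2$, the hypothesis $b_2-b_1>b_1$, and the rank-one property of $\cH^{-1}(E)$ has to be used carefully. But this is exactly the content of \cite[Lemma 1]{Alexandrov:2023zjb}, and nothing in that analysis is affected by replacing the constant $2$ by $r+1$ throughout: the constant enters only as the upper cutoff on $b_1$, which bounds the $H$-degree of the emitted rank-$0$ factor by $r$ rather than by $1$.
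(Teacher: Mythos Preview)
Your proposal is correct and follows the same route as the paper: both transcribe the argument of \cite[Lemma 1]{Alexandrov:2023zjb}, with the single change that the threshold $b_1<2$ becomes $b_1<r+1$, which only enters when bounding the $H$-degree of the rank-zero factor.

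One cosmetic point: where you write ``the long exact cohomology sequence then forces $\mu(\cH^{-1}(E_1))\geq d$'', the paper instead works directly with the heart constraint $\Im Z_{b,w}(E_1)\geq 0$, i.e.\ $(1+r')b\geq c'$, along the wall. Combined with $\Im Z_{b,w}(E_0)\geq 0$, i.e.\ $c'\geq b r'$, and the fact that the wall lies above $\ell_f$ (so its $b$-range contains $[b_1,b_2]$), this yields the clean numerical inequality $b_2 r'\leq c'\leq b_1(1+r')$, from which $r'=0$ (by~(ii)), then $1\leq c'\leq b_1<r+1$ (by~(i)), hence $c'\leq r$ and $b_1\geq 1$, all follow in one line. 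Your detour through $\mu(\cH^{-1}(E_1))$ reaches the same conclusion (once $r_1=1$ is known, $\cH^0(E_1)$ is torsion so $\ch_1(\cH^0(E_1))\geq 0$, giving $\ch_1(\cH^{-1}(E_1))\geq dH$), but this is an effectivity argument rather than an LES argument, and the direct $\Im Z\geq 0$ route is cleaner.
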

\begin{proof}
 As in the proof of \cite[Lemma 1]{Alexandrov:2023zjb}, write $\ch_{\leq 1}(E_0) = (r', c'H)$ and $\ch_{\leq 1}(E_1) = (-1-r', -c'H)$. Then moving along the wall $\ell$ gives 
\begin{equation}\label{A1}
b_2 r' \leq c \leq b_1(1+r').
\end{equation}
Thus $r'(b_2-b_1) \leq b_1$ and so $r'=0$ by assumption (ii). 
Therefore, we get $c' \leq b_1 < b_2-b_1$. 
Then assumption (i) gives $c\leq r$ and the same argument as in the proof of \cite[Lemma 1]{Alexandrov:2023zjb} implies that there is no wall for $E_0$ when we move up from the wall $\ell$ to the large volume limit. Hence $E_0$ is $\nu_{b,w}$-semistable for $w \gg 1$, so is a slope-semistable sheaf by \cite[Lemma 2.7(c)]{bayer2016space}.
	
Finally, set $r_1:=\rk\!\(\cH^{-1}(E_1)\)\ge0$. By definition of the heart $\cA_b$ and using the fact that $\ell$ lies above or on $\ell_f$, we get
\begin{equation}
\ch_1\(\cH^{-1}(E_1)\).H^2 \le b_1\, r_1H^3 \qquad \text{and} \qquad  \ch_1\(\cH^0(E_1)\).H^2 \ge b_2\,(r_1 -1)H^3. 
\end{equation}
Then if $r_1 \geq 2$, we get $-c' \geq b_2 -2b_1 >0$ which makes a contradiction.  
\end{proof}
Employing Lemma \ref{lem.destabilising objects} for $r=3$ shows that there are only three potential semistable factors in the wall-crossing formula for any wall for the class $\v$:
\begin{enumerate*}
\item[(1)] $\v_1 = \cO_{\CY}(H)\otimes (-1, 0, \beta_1, -m_1)$ and 
$$
\v_0 = \left( 0,\ H,\ \beta-\beta_1 + \frac{1}{2}\, H^2, \ -m+m_1-\beta_1.H + \frac{1}{6}\, H^3  \right);
$$

\item [(2)] $\v_1= \cO_{\CY}(2H)\otimes (-1, 0, \beta_1, -m_1)$, $\v_0 = (0, H, \beta_0, m_0) $ and 
$$
\v_0' = \left(0,\ H,\ \beta -\beta_0 -\beta_1 +2H^2, \ -m-m_0 +m_1-2\beta_1.H + \frac{4}{3}\,H^3  \right);
$$  

\item [(3)] $\v_1 = \cO_{\CY}(2H) \otimes (-1, 0, \beta_1, -m_1)$ and 
$$
\v_0 = \left(0,\ 2H,\ \beta -\beta_1 +2H^2, \ -m +m_1 -2\beta_1.H + \frac{4}{3}\,H^3     \right).
$$ 
\end{enumerate*} 
	
\bigskip
	
\textbf{Case (1):} In this case, the wall-crossing formula of \cite{Joyce:2008pc} gives 
\begin{equation}
\bOm_{b,w^-}(\v) = \bOm_{b,w^+}(\v) + (-1)^{\chi(\v_1, \v_0) +1} \,\chi(\v_1, \v_0) \,\bOm_{b,w^+}(\v_0)\,\bOm_{b,w^+}(\v_1). 
\end{equation}
We know that $\Pi(\v_1) = \left(1, \ -\frac{\beta_1.H}{H^3} + \frac{1}{2}  \right)$ lies above $\ell_f$ which implies 
\begin{equation}\label{a-1}
0 \leq \frac{\beta_1.H}{H^3}\ \leq\ \frac{3m}{2\beta.H} + \frac{\beta.H}{H^3} + \frac{1}{2}\, .
\end{equation}
The equation of the wall $\ell$ passing through $\Pi(\v)$ and $\Pi(\v_1)$ is 
\begin{equation}
w = b \left(\frac{\beta.H}{H^3} - \frac{\beta_1.H}{H^3} + \frac{1}{2} \right) - \frac{\beta.H}{H^3}\, .
\end{equation}
Let $(3,w_3)$ be the intersection of $\ell$ with the vertical line	$b=3$.
Since, upon tensoring with $\cO_\CY(-H)$, 
$\bOm_{b,w^+}(\v_1) = \bOm_{b-1,\ w^+-b+\frac{1}{2}}(\v_1(-H))$, we get 
\begin{equation}
\bOm_{b =3,w_3^+}(\v_1) = \mathrm{P}^{w(\beta_1)}_{m_1, \beta_1}\, ,
\end{equation}
where $w(\beta_1) \coloneqq  w_3 -3 + \frac{1}{2} $, i.e. 
\begin{equation}
w(\beta_1) = \frac{2\beta.H}{H^3} - \frac{3\beta_1.H}{H^3} - 1. 
\end{equation} 
Applying the BMT inequality \eqref{BMTineq0} for $\v_1(-H)$ at $(b=2, w(\beta_1))$ implies 
\begin{equation}\label{a-2.5}
w(\beta_1)\,\frac{2\beta_1.H}{H^3} + \frac{6m_1}{H^3} + 2 \left( \frac{\beta_1.H}{H^3}\right)^2 \geq 0\, .
\end{equation}
Moreover, we have $\bOm_{b,w^+}(\v_0) = \bOm_{\infty}(\v_0)$ by Lemma \ref{lem.destabilising objects}, so \cite[Lemma B.3]{Feyzbakhsh:2021rcv} implies    
\begin{equation}\label{a-3}
m_1 \ \leq\  \frac{1}{2H^3} (\beta.H -\beta_1.H)^2 + \frac{1}{2}(\beta.H +\beta_1.H) +m.
\end{equation}
	
\bigskip
\textbf{Case (2):} Since $\chi(\v_0, \v_0') = 0$, the wall-crossing formula of \cite{Joyce:2008pc} implies that the contribution of type (2) walls is given by
\begin{equation}
\bOm_{b,w^-}(\v) = \bOm_{b,w^+}(\v) + \frac{k}{4} (-1)^{\chi(\v_0+ \v_0',\ \v_1) } \,\chi( \v_0, \ \v_1)\,\chi( \v_0', \ \v_1) \,\bOm_{b,w^+}(\v_0') \,\bOm_{b,w^+}(\v_0) \,\bOm_{b,w^+}(\v_1)
\end{equation}
for some constant $k$. To compute $k$, we need to sum over all the following 6 possibilities in \cite[Equation (5.13)]{Joyce:2008pc} if $\v_0 \neq \v_0'$:
\begin{enumerate*}
\item $-U(\v_0, \v_1, \v_0') = -S(\v_0, \v_1, \v_0') =1$,

\item $U(\v_0, \v_0', \v_1) = \frac{1}{2}S(\v_0+\v_0', \v_1) +S(\v_0, \v_0', \v_1) = -\frac{1}{2} +1 = \frac{1}{2}$,

\item $-U(\v_0', \v_1, \v_0) = 1$,

\item $U(\v_0', \v_0, \v_1) = \frac{1}{2}S(\v_0+\v_0', \v_1) +S(\v_0', \v_0, \v_1) = \frac{1}{2}$,

\item $U(\v_1, \v_0, \v_0') = \frac{1}{2}S(\v_1, \v_0+\v_0') +S(\v_1, \v_0, \v_0') = \frac{1}{2} +0 = \frac{1}{2}$,

\item $U(\v_1, \v_0', \v_0) = \frac{1}{2}S(\v_1, \v_0+\v_0') +S(\v_1, \v_0', \v_0) = \frac{1}{2} +0 = \frac{1}{2}$,
\end{enumerate*}  
which implies $k=4$.  If $\v_0 = \v_0'$, then there are only 3 possibilities: 
\begin{enumerate*}
\item $-U(\v_0, \v_1, \v_0') = -S(\v_0, \v_1, \v_0') =1$,

\item $U(\v_0', \v_0, \v_1) = \frac{1}{2}S(\v_0+\v_0', \v_1) +S(\v_0', \v_0, \v_1) = \frac{1}{2}$,

\item $U(\v_1, \v_0, \v_0') = \frac{1}{2}S(\v_1, \v_0+\v_0') +S(\v_1, \v_0, \v_0') = \frac{1}{2} +0 = \frac{1}{2}$,
\end{enumerate*}  
which implies $k=2$. 
We know that $\v_1 = \cO_{\CY}(2H) \otimes (-1, 0, \beta_1, -m_1)$ 
and $\Pi(\v_1) = (2, 2- \frac{\beta_1.H}{H^3})$ lies above $\ell_f$, i.e. 
\begin{equation}
-\frac{3m}{\beta.H}  - \frac{\beta.H}{H^3} \leq 2- \frac{\beta_1.H}{H^3}\, ,
\end{equation}
or equivalently 
\begin{equation}\label{c-1}
0 \leq\ \frac{\beta_1.H}{H^3}\ \leq\ 2 + \frac{3m}{ \beta.H} + \frac{\beta.H}{H^3}. 
\end{equation}
Then \cite[Lemma 3.5]{feyz:effective-restriction-theorem} implies that there is no wall for class $\v_1$ crossing the vertical line $b=3$.\footnote{Since $b \in \mathbb{Z}$ and $\Pic(\CY) = \Z .H$, we have that $H^3$ divides $H^2.\ch_1 -b H^3.\ch_0  \in \Z$.} 
Thus $\bOm_{b,w^+}(\v_1) = \mathrm{P}_{m_1, \beta_1}$ and the Castelnuovo bound \cite[Thm 2]{Alexandrov:2023zjb} gives 
\begin{equation}\label{c-2}
-\frac{2(\beta_1.H)^2}{3H^3}  - \frac{\beta_1.H}{3}
\leq \ m_1. 
\end{equation}
Since $\v_0 = (0, H, \beta_0, m_0)$, the slope of the wall must be equal to $\frac{\beta_0.H}{H^3}$, i.e.
\begin{equation}\label{c-3}
\frac{\beta_0.H}{H^3} = 1 + \frac{1}{2H^3}(\beta.H -\beta_1.H)
\end{equation}
and 
\begin{equation}\label{c-4}
m_0\ \leq\ \frac{(\beta_0.H)^2}{2H^3} + \frac{H^3}{24}. 
\end{equation}
Similarly, for the rank 0 class $\v_0' = \v -\v_1-\v_0$ we get 
\begin{equation}\label{c-5}
-m-m_0 +m_1-2\beta_1.H + \frac{4}{3}\, H^3\ \leq\ \frac{(\beta_0.H)^2}{2H^3}  + \frac{H^3}{24}\,. 
\end{equation}
Note that by \eqref{c-3}, we have $(\beta -\beta_0 -\beta_1 +2H^2).H = \beta_0.H$. Finally, we have 
\begin{align*}
\chi(\v_0+ \v_0',\ \v_1) = -m +m_1 + \frac{4}{3}\,H^3 + \frac{1}{6}\,c_2(\CY).H -2\beta_1.H -2\beta.H.
\end{align*}

\bigskip
	
\textbf{Case (3):} The contribution of type (3) walls is given by 
\begin{equation}
\bOm_{b,w^-}(\v) = \bOm_{b,w^+}(\v) + (-1)^{\chi(\v_1, \v_0) +1} \,\chi(\v_1, \v_0)\, \bOm_{b,w^+}(\v_0)\,\bOm_{b,w^+}(\v_1). 
\end{equation}
By a similar argument as in Case (2), there are no walls for both $\v_0$ and $\v_1$ till the large volume limit, so $\bOm_{b,w^+}(\v_0) = \bOm_{\infty}(\v_0)$ and $\bOm_{b,w^+}(\v_1) =\bOm_{\infty}(\v_1) = \mathrm{P}_{m_1, \beta_1}$. 
We know that $\Pi(\v_1) = \left(2, 2 - \frac{\beta_1.H}{H^3} \right)$ lies above $\ell_f$, which implies \begin{equation}\label{b-1}
0 \leq\ \frac{\beta_1.H}{H^3}\ \leq\ 2 + \frac{3m}{ \beta.H} + \frac{\beta.H}{H^3}\, . 
\end{equation}
Moreover, the Castelnuovo bound \cite[Thm 2]{Alexandrov:2023zjb} and \cite[Lemma B.3]{Feyzbakhsh:2021rcv} give
\begin{equation}
\begin{split}
    \label{b-2}
-\frac{2(\beta_1.H)^2}{3H^3} - \frac{\beta_1.H}{3}
& \leq \; m_1\, , 
\\
-m +m_1 -2\beta_1.H + \frac{4}{3}\,H^3 & \leq \frac{1}{4H^3}(\beta.H-\beta_1.H +2H^3)^2   + \frac{1}{3}\, H^3. 
\end{split}
\end{equation} 
Finally, we have 
\begin{equation}
\chi(\v_1, \v_0) = m-m_1+2\beta_1.H +2\beta.H -\frac{4}{3}\,H^3 - \frac{1}{6}\,c_2(\CY).H\, ,
\end{equation}
which ends the proof.

\subsection{Remarks}

\begin{remark}
First, we rephrase Theorem \ref{thm-rank2} in terms of the variables 
\begin{equation}
x= \frac{\beta.H}{H^3} \qquad \text{and} \qquad \alpha = -\frac{3m}{2 \beta.H}\, ,
\end{equation}
as in \eqref{defxa}. These variables are chosen such that 
the equation of the BMT wall $\ell_f$ is $w = \alpha b -x$.
Assuming that $\alpha\geq \sqrt{2x}$, $\ell_f$ intersects the parabola
$w=\frac12 b^2$ at $b_1, b_2 = \alpha \pm \sqrt{\alpha^2 -2x}$. The condition $b_2-b_1 >b_1$ is equivalent to 
\begin{equation}
3\sqrt{\alpha^2-2x} > \alpha \qquad \text{i.e.} \qquad \alpha> \frac32 \sqrt{x}\, .
\end{equation}
The condition $b_1 < 3$ is equivalent to 
\begin{equation}
\alpha - \sqrt{\alpha^2-2x} < 3 \qquad \text{i.e.} \qquad \sqrt{2x}<\alpha <3 \qquad \text{or} \qquad \frac{3}{2} + \frac{x}{3} < \alpha.
\end{equation}
The conjunction of the two conditions is therefore $\alpha>\frac32 \sqrt{x}$ when $1<x<9$ and $\alpha>\frac{3}{2} + \frac{x}{3}$ when $x>9$. In the range $1<x<4$, we have the stronger condition $b_1<2$,
so \cite[Thm 1]{Alexandrov:2023zjb} applies. Thus, we can rephrase Theorem \ref{thm-rank2} as in \S\ref{sec_main}, by replacing the function 
$f(x)$ in \cite[Thm 1]{Alexandrov:2023zjb} by the function $f_2(x)$
defined in \eqref{deffx}.
\end{remark}

\begin{remark} \label{rkvk}
Consider an arbitrary rank zero class $\v = (0, 2H, cH^2, dH^3)$ with $c,d\in\IZ$. For any $k>0$, take a section $\cO_{\CY}(-kH) \to F$, then the cone is of class $(-1, (k+2)H, (c- \frac{k^2}{2})H^2, (d+ \frac{k^3}{6})H^3 )$. Tensoring it by $\cO_{\CY}((k+2)H)$ gives the class 
\begin{equation}
\v_k = \left(-1,\ 0,\ (c+2k+2)H^2, \ d+c(k+2)+k^2+4k+ \frac{8}{3}\right). 
\end{equation}
We claim that if $k$ is large enough, then 
\begin{enumerate*}
\item 
we can apply Theorem \ref{thm-rank2} to $\v_k$, i.e. the final wall $\ell_f$ intersects $\partial U$ at two points with $b$-values $2 \leq b_1 <3$ and $b_2-b_1 > b_1$, 

\item 
the difference of the right and left hand side of the last equation in \eqref{bb-12} is negative, i.e. 
\begin{equation}
\frac{1}{4H^3}\,(\beta.H-\beta_1.H +2H^3)^2   - H^3 +m +2\beta_1.H - \left( 	-\frac{2(\beta_1.H)^2}{3H^3} - \frac{\beta_1.H}{3}  \right) <0
\end{equation}	
for all non-zero possible values of $\beta_1$ satisfying the inequalities in the first line of \eqref{bb-12}.  
\end{enumerate*}
Clearly, these two conditions determine an explicit lower bound for $k$. If (1) and (2) hold, the only contribution to $P_3$ in \eqref{WCF-rank2} is from $\v_1 = [\cO_{\CY}(2H)[1]]$ and $\v_0 = \v$.   
\end{remark}

\begin{remark}\label{rem_higherrk}
By applying Lemma \ref{lem.destabilising objects}, one can show that Theorem \ref{thm-rank2} can be extended to include contributions from higher-rank D4-D2-D0 invariants. Specifically, suppose that $b_2 - b_1 > b_1$ and $b_1 \in [r, r+1)$ for $r\geq 1$. Then one can express $\mathrm{P}_{m,\beta}$ in terms of $\mathrm{P}_{m', \beta'}$ when $\beta'.H < \beta.H$ and rank 0 DT invariants $\bOm_{\infty}(0, r'H, \beta', m')$ for $1 \leq r' \leq r$.
\end{remark}

\subsection{Generalisations of Theorem \ref{thm-rank2}}
\label{subsec-generThm}

We now show that the formula \eqref{WCF-rank2} continues to hold in some cases even when the hypotheses of Theorem \ref{thm-rank2} are not satisfied, 
or that it can be corrected by including contributions of further walls.

\subsubsection{$\kappa =1$.} We first consider CY 3-folds $X$ with $\kappa =H^3 =1$.

\begin{Lem}
\label{LemmaX6}
If $H^3=1$, then Theorem \ref{thm-rank2} is still valid for 
\begin{enumerate}
\item[(i)] $\beta = 10H^2$ and $m= -32$ (corresponding to $\cO(\q^3)$ term in $h_{2,0}$ for $X_{10}$),  

\item[(ii)] $\beta=12H^2$ and $m=-44$ (corresponding to $\cO(\q^4)$ term in $h_{2,0}$ for $X_{10}$),
		
\item [(iii)] $\beta = 12 H^2$ and $m= -43$ (corresponding to $\cO(\q^5)$ term in $h_{2,0}$ for $X_{10}$),
		
\item [(iv)] $\beta = 11H^2$ and $m =-37$ (corresponding to $\cO(\q^4)$ term in $h_{2,1}$ for $X_{10}$).
\end{enumerate}
\end{Lem}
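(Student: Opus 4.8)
The plan is to re-run the proof of Theorem~\ref{thm-rank2}, relaxing the hypothesis $b_1<3$ to $b_1<4$. First I would verify, for each of the four pairs $(\beta,m)$, that the line $\ell_f\colon w=\alpha b-x$ (with $x=\beta.H/H^3$ and $\alpha=-3m/(2\beta.H)$) meets $\partial U$ at $b$-values with $3\le b_1<4$ and $b_2-b_1>b_1$: for $(\beta,m)=(12H^2,-44)$ one gets $(b_1,b_2)=(3,8)$, and in cases (i), (iii), (iv) one gets $b_1\in(3,4)$ and $b_2-b_1>b_1$ by an elementary computation. (As elsewhere in this section, we assume the BMT inequality for $X_{10}$.) Consequently Lemma~\ref{lem.destabilising objects} applies with $r=3$, so along any wall for $\v=(-1,0,\beta,-m)$ the destabilizing object splits into one rank $-1$ factor (with $\mathcal H^{-1}$ of rank one) and rank-zero slope-semistable sheaves of total D4-brane charge at most $3$, the latter remaining semistable up to the large-volume limit. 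When this total charge is $1$ or $2$ the analysis is word for word that of Theorem~\ref{thm-rank2} --- none of the arguments there (the identification of $\mathrm P^w_{m_1,\beta_1}$ via Proposition~\ref{lem.rank -1}, the restriction lemmas, etc.) uses $b_1<3$ beyond $b_1<4$ --- and produces the terms $P_1,P_2,P_3$. So the only new thing is to rule out walls on which the rank-zero part has total D4-brane charge $3$.

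On such a wall the rank $-1$ factor has the form $\v_1=\cO_\CY(3H)\otimes(-1,0,\beta_1,-m_1)$, with projection $\Pi(\v_1)=\bigl(3,\tfrac92-\beta_1.H/H^3\bigr)$; the requirement that $\Pi(\v_1)$ lie above $\ell_f$ (as in the proof of Theorem~\ref{thm-rank2}) forces $\beta_1.H/H^3\le\tfrac92-3\alpha+x$. The right-hand side evaluates to $\tfrac1{10}$, $0$, $\tfrac38$, $\tfrac4{11}$ in the four cases, so in each of them $\beta_1.H=0$, whence $\mathrm P_{m_1,\beta_1}=\PT(0,m_1)$ is nonzero only for $m_1=0$, i.e. $\v_1=\cO_\CY(3H)[1]$. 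It then remains to see that the corresponding term vanishes. The rank-zero part is $\v_0=\v-\v_1$, of D4-brane charge $3$, and a short computation of its Mukai vector gives reduced D0-brane charge $\hq_0(\v_0)=\tfrac{137}{24},\tfrac{177}{24},\tfrac{153}{24},\tfrac{137}{24}$ in cases (i)--(iv). Since all semistable factors on a wall have the same $\nu_{b,w}$-slope, $\hq_0$ is additive over any decomposition $\v_0=\sum_i\v_0^{(i)}$ into rank-zero factors: $\hq_0(\v_0)=\sum_i\hq_0(\v_0^{(i)})$. Thus, if every factor had nonzero index $\bOm_\infty$, then \eqref{qmax} together with the super-additivity $\sum_i\chi(\cD_{r_i})\le\chi(\cD_{\sum_i r_i})=\chi(\cD_3)$ would give $\hq_0(\v_0)\le\tfrac1{24}\chi(\cD_3)=\tfrac{129}{24}$, contradicting the values above. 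Hence some factor has vanishing index, the charge-$3$ contribution is zero, and \eqref{WCF-rank2} holds with $P_1+P_2+P_3$ as written.

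The step requiring the most care is the elimination of the charge-$3$ walls: one must make sure that the rank $-1$ factor can only be $\cO_\CY(kH)\otimes(-1,0,\beta_1,-m_1)$ with $k\le 3$ (again because $\Pi(\v_1)$ must lie above $\ell_f$ and $b_1<4$), and that the proof of Theorem~\ref{thm-rank2} really goes through verbatim for the charge-$\le2$ part once $b_1<3$ is replaced by $b_1<4$. Everything else is a finite numerical check --- the values of $b_1,b_2$, the quantities $\tfrac92-3\alpha+x$, and the reduced charges $\hq_0(\v_0)$ --- carried out in the notebook~\cite{CYdata}; the assertion of the lemma is precisely that these checks come out in our favour.
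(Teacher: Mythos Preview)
Your overall strategy matches the paper's: invoke Lemma~\ref{lem.destabilising objects} with $r=3$ (since $b_1\in[3,4)$ and $b_2-b_1>b_1$ in all four cases), handle D4-charge $1$ and $2$ via the analysis of Theorem~\ref{thm-rank2}, and rule out D4-charge $3$. Your elimination of the charge-$3$ walls --- forcing $\v_1=\cO_\CY(3H)[1]$ from the constraint that $\Pi(\v_1)$ lies above $\ell_f$, then killing the remaining class $\v_0$ via the bound \eqref{qmax} --- is correct and essentially equivalent to the paper's, which applies \cite[Lemma~B.3]{Feyzbakhsh:2021rcv} directly to the single slope-semistable sheaf $E_0$ of class $(0,3H,\ldots)$. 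Your additivity-of-$\hat q_0$ remark (valid because equal tilt-slope for rank-zero classes means equal ratio $\ch_2.H/\ch_1.H^2$) is a tidy way to treat all possible splittings of $\v_0$ at once.

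The genuine gap is in your treatment of charges $1$ and $2$. The claim that ``none of the arguments there uses $b_1<3$ beyond $b_1<4$'' is not correct as stated. In Case~(1) of the proof of Theorem~\ref{thm-rank2}, the identification $\bOm_{b,w^+}(\v_1)=\PT^{w(\beta_1)}_{m_1,\beta_1}$ is made by evaluating at the point $(3,w_3)$ on the wall $\ell$ and tensoring by $\cO_\CY(-H)$; this step requires $\ell$ to meet the line $b=3$ inside $\bar U$, i.e.\ the wall's own left endpoint must satisfy $b_1^{(\ell)}\le 3$. When $\ell_f$ has $b_1\in[3,4)$ this is \emph{not} automatic for the lowest-lying walls, and it is precisely what the paper verifies case by case: for instance in (i) the lowest charge-$1$ wall is $w=\tfrac{11}{2}b-10$ with $b_1^{(\ell)}<3$, and the lowest charge-$2$ wall $w=5b-10$ meets $b=3$ at $w=5>\tfrac92$; in (iii) the lowest wall $w=\tfrac{11}{2}b-12$ passes through $(3,\tfrac92)\in\partial U$, so the argument just barely goes through. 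These checks are short arithmetic, but they are the content of the lemma for charges $\le 2$; you should perform them rather than assert that the proof is verbatim.
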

\begin{proof}
	One can easily check that Lemma \ref{lem.destabilising objects} is valid in all the above cases. Thus there are only three possibilities for the Chern character of $E_0$ and $E_1$: 
	\begin{enumerate}
		\item[(a)] $\ch_{\leq 2}(E_1) = (-1, -H, -\frac{1}{2}H^2 + x_1H^2)$ and $\ch_{\leq 1}(E_0) = (0, H)$ for some integer $0 \leq x_1$.
		\item[(b)]  $\ch_{\leq 2}(E_1) = (-1, -2H, -2H^2 + x_2H^2 )$ for some integer $0 \leq x_2$ 
		and $\ch_{\leq 1}(E_0) =(0, 2H)$.  
		\item[(c)]  $\ch(E_1)= (-1, -3H, -\frac{9}{2}H^2 , -\frac{9}{2}H^3 +x_3 )$ for some integer $x_3 \leq 0$ and $\ch_{\leq 1}(E_0) = (0, 3H)$.
	\end{enumerate}
First consider case (i). In  option (a), the wall lies above or on $w = \frac{11}{2} b -10$, so the conditions of Theorem \ref{thm-rank2} are satisfied. In (b), the wall lies above or on the line $w = 5b -10$ which intersects the vertical line $b=3$ at a point inside $U$, so the rank $-1$ object is stable in the large volume limit as required in the proof of Theorem \ref{thm-rank2}. Finally, we claim that option (c) cannot happen because applying \cite[Lemma B.3]{Feyzbakhsh:2021rcv} 
(or equivalently \eqref{qmax}) for $E_0$ implies that 	
\begin{equation}
\ch_3(E_0) = 32 +\frac{9}{2} -x_3 \  \leq \  \frac{(14.5)^2}{6} + \frac{27}{24}\, ,
\end{equation}
which gives $x_3 > 0$, a contradiction.

The other cases (ii), (iii) and (iv) follow via the same argument as in case (i). 
Note that in case (iii), the lowest wall where option (a) or (b) can happen is of equation $w = \frac{11}{2} b -12$ passing through $(3,\frac92)$, so the proof of Theorem \ref{thm-rank2} still works. Also in case (iv), the lowest wall for (a) and (b) is of equation $w = \frac{11}{2} b -11$, so again the proof of Theorem \ref{thm-rank2} is valid.   
	
\end{proof}

\subsubsection{$\kappa =2$.} In this part, we consider CY 3-fold $\CY= X_8$ with $\kappa =H^3 =2$. Then we can apply Koseki's improvement of the classical Bogomolov inequality \cite{koseki2022stability}. In other words, we may replace the parabola $w = \frac{b^2}{2}$ with the
curve $w=\Gamma(b)$ where the function $\Gamma \colon \mathbb{R} \to \mathbb{R}$ satisfies 
\be
\Gamma(b)\ \coloneqq \ \left\{\!\!\!\begin{array}{cc} 
b^2-b, & b \in [0, \frac{1}{4}), 
\vspace{.2 cm}
\\
\frac{3}{4}b-\frac{3}{8}\, , & b \in [\frac{1}{4}, \frac{1}{2}), 
\vspace{.2 cm}
\\
\frac{1}{4}b -\frac{1}{8}\, , & b \in [\frac{1}{2}, \frac{3}{4}), 
\vspace{.2 cm}
\\
b^2-\frac{1}{2}\, , & b \in [\frac{3}{4}, 1),
\end{array}\right.
\ee
and 
\begin{equation}
\Gamma(b) = nb -\frac{n^2}{2} + \Gamma(b-n), \qquad b \in [n, n+1).
\end{equation}
We denote by $b_1' <b_2'$ the $b$-values of the intersection points of the final line $\ell_f$ \eqref{final line} with the curve $w=\Gamma(b)$.    

\begin{Lem}
\label{LemmaX81}
	If $\CY=X_8$, then Theorem \ref{thm-rank2} is still valid for 
	\begin{enumerate}
		\item[(i)] $\beta = 8H^2$ and $m =-45$, corresponding to $\cO(\q^3)$ term in $h_{2,0}$,
		\item[(ii)] $\beta = 7H^2$ and $m= -37$, corresponding to $\cO(\q)$ term in $h_{2,2}$,
		\item[(iii)] $\beta = \frac{17}{2}H^2$ and $m = -49$, corresponding to $\cO(\q^4)$ term in $h_{2,1}$.
	\end{enumerate}
\end{Lem}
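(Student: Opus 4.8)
The plan is to repeat the proof of Lemma~\ref{LemmaX6} almost word for word, the only structural change being that for $X_8$ one is allowed, by \cite{koseki2022stability}, to replace the Bogomolov parabola $w=\tfrac12 b^2$ by Koseki's curve $w=\Gamma(b)$ recalled above. So for each of the three pairs $(\beta,m)$ in (i)--(iii) I would first check that the $\Gamma$-version of Lemma~\ref{lem.destabilising objects} (with $r=3$) still applies: one locates the intersection points $b_1'<b_2'$ of the final line $\ell_f$ of \eqref{final line} with $w=\Gamma(b)$ and verifies the analogues of the hypotheses of Theorem~\ref{thm-rank2}. This is exactly the point where Koseki's improvement is needed: in cases (i)--(iii) the parabola-based quantity $\alpha$ of \eqref{defxa} violates the bound $f_2(x)<\alpha$ only by a small margin, but since $\Gamma(b)\le\tfrac12 b^2$ for $b$ away from the integers, the interval $(b_1',b_2')$ on which $\ell_f$ lies above $w=\Gamma(b)$ strictly contains $(b_1,b_2)$, and one checks that this enlargement just suffices. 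Granting this, Lemma~\ref{lem.destabilising objects} shows that along any wall for $\v=(-1,0,\beta,-m)$ the only destabilising factors $(E_0,E_1)$ fall into the three types (a) $\ch_1(E_0)=H$, (b) $\ch_1(E_0)=2H$, (c) $\ch_1(E_0)=3H$, exactly as in the proof of Lemma~\ref{LemmaX6}, with $E_1$ of rank $-1$ and $\mathcal H^{-1}(E_1)$ of rank one, the Chern characters being the obvious ones for $H^3=2$.

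With this trichotomy, the argument proceeds case by case as for Lemma~\ref{LemmaX6}. For option (a) one bounds the lowest possible wall through $\Pi(\v)$ and $\Pi(\v_1)$ (an explicit line through $\Pi(\v)$ whose slope is extremal in the allowed range of $\v_1$) and checks that it is high enough that the hypotheses used in Case~(1) of the proof of Theorem~\ref{thm-rank2}, and in Proposition~\ref{lem.rank -1}, hold for the smaller class $\v_1$. For option (b) one checks that this lowest wall, continued to the vertical line $b=3$, lies on or inside the region bounded by $w=\Gamma(b)$, so that the rank $-1$ object $E_1$ is tilt-semistable up to the large-volume limit; the Castelnuovo bound \cite[Thm 2]{Alexandrov:2023zjb} then applies to $E_1$ and the $\hat q_0$-bound \cite[Lemma B.3]{Feyzbakhsh:2021rcv} (equivalently \eqref{qmax}) to the rank $0$ factors, as required in Cases~(2) and~(3) of the proof of Theorem~\ref{thm-rank2}. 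Finally, option (c) is excluded: applying \eqref{qmax} to the rank $0$ sheaf $E_0$, which is supported on a divisor in class $3H$ and hence has $\hat q_0^{\rm max}=\tfrac1{24}\chi(\cD_3)$ with $\chi(\cD_3)=27\kappa+3c_2$, bounds $\ch_3(E_0)$ above, and in each of (i)--(iii) this translates into the strict inequality $x_3>0$, contradicting $x_3\le 0$. Once (a), (b), (c) are dealt with, the identity \eqref{WCF-rank2} for the three pairs follows from the wall-crossing formula of \cite{Joyce:2008pc} exactly as in Theorem~\ref{thm-rank2}.

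The main obstacle is that each $(\beta,m)$ sits right at the edge of applicability, so every inequality must be verified against the piecewise function $\Gamma$ rather than a single smooth curve: one has to identify which branch of $\Gamma$ is relevant near $b_1'$, near $b_2'$, and at $b=3$, and the gain of Koseki's curve over the parabola, though real, is modest, so these checks are tight (it is plausible that in case (iii) one must argue, as in the proof of Lemma~\ref{LemmaX6}, directly that the lowest wall for options (a)--(b) passes through a point on $w=\Gamma(b)$). A second delicate point arises in option (b): one must rule out a wall for $\v_0$ or for $\v_1$ between the wall $\ell$ and the large-volume limit, which requires the $\Gamma$-adapted version of the restriction-theorem argument \cite[Lemma 3.5]{feyz:effective-restriction-theorem}. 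Once these points are settled, no destabilising factor beyond (a), (b), (c) can occur, the remaining walls are precisely those producing $P_1$, $P_2$ and $P_3$ with no extra contributions, and the lemma follows.
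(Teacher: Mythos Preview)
Your plan is sound for cases (i) and (ii), but it breaks down in case (iii). Computing the intersections of $\ell_f$ with Koseki's curve $w=\Gamma(b)$ gives $b_1'\simeq 2.73<b_2'\simeq 5.69$ in (i) and $b_1'\simeq 2.55<b_2'\simeq 5.39$ in (ii), so $b_1'<3$ and $b_2'-b_1'>b_1'$ hold and Theorem~\ref{thm-rank2} applies directly --- no trichotomy analysis is needed. In case (iii), however, one finds $b_1'\simeq 3.01$ and $b_2'\simeq 5.77$, so that \emph{both} hypotheses $b_1<3$ and $b_2-b_1>b_1$ fail even with the improved curve. Your assertion that Lemma~\ref{lem.destabilising objects} (with $r=3$) still forces the rank of $E_0$ to vanish is therefore unjustified: the raw inequality \eqref{A1} now reads $5.77\,r'\le c'\le 3.01(1+r')$, which admits the rank-one solution $(r',c')=(1,6)$ in addition to the rank-zero ones $(0,1),(0,2),(0,3)$. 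The paper deals with $(r',c')=(1,6)$ by a separate argument combining the Castelnuovo bound for $E_0(-6H)$ with the BMT inequality for $E_1$ at $b=3+\epsilon$ and Koseki's inequality \cite[Theorem~1.1]{koseki2022stability}; your outline contains no analogue of this step.

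There is a second gap in case (iii), option $(r',c')=(0,3)$. Here $E_1\cong\cO_\CY(3H)[1]$ forces $\ch(E_0)=(0,3H,13H^2,29H^3)$, and a direct computation shows $\hat q_0(E_0)\simeq 7.17$ while $\hat q_0^{\rm max}=\tfrac1{24}\chi(\cD_3)=\tfrac{186}{24}\simeq 7.75$, so the bound \eqref{qmax} you invoke is \emph{satisfied} and does not by itself rule out this configuration. The paper instead proves a dedicated Lemma~\ref{lem-general} excluding any $\nu_{b,w}$-semistable object of this class, by running a further wall-crossing analysis for the rank-zero class itself (again using Koseki's curve to constrain the destabilising pairs). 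Without this ingredient your argument for (iii) is incomplete.
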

\begin{proof}
	In case (i), we have $b_1'= \frac{172}{63}\sim 2.73 < b_2'= \frac{188}{33}\sim 5.69$, and in case (ii) we have $b_1'= \frac{245}{96} < 
	2.55 < b_2'= 5.39$, so the conditions of Theorem \ref{thm-rank2} are satisfied. But in case (iii), we have $b_1' = 3.01 < b_2' = 5.77$. Then using the same notations as in Lemma \ref{lem.destabilising objects}, we get 
	\begin{equation}
		5.77 r \leq c \leq 3.01 (1+r). 
	\end{equation}
	Then there are 4 possibilities: 
\begin{enumerate}	
	\item[(a)] If $(r,c)=(1,6)$, then $\ch_{2}(E_0) = 18 H^2$ or $\frac{35}{2}H^2$. In the first case, we get $\ch_{\leq 2}(E_1) = (-2, -6H, -\frac{19}{2}H^2)$ which is not possible by \cite[Theorem 1.1]{koseki2022stability}. 
	In the second,
	$\ch(E_0) = (1, 6H, \frac{35}{2} H^2, mH^3)$ and $\ch(E_1) = \(-2, -6H, -9H^2, (-m + \frac{49}{2})H^3\)$. Since the wall intersects the vertical line $b=5$ at a point above the curve $\Gamma$, by \cite[Lemma 3.5]{feyz:effective-restriction-theorem} there is no wall for $E_0$ up to the large volume limit. Then applying the Castelnuovo
	bound \eqref{CastPT} for $E_0(-6H)$ implies that 
	\begin{equation}\label{cond.1}
	\frac{\ch_3(E_0(-6H))}{H^3} = -33+m \leq \frac{3}{8}\, . 
	\end{equation}
	Moreover, the BMT inequality \eqref{BMTineq0} for $E_1$ at the point $(b = 3+ \epsilon,w)$ along the wall implies that $m-\frac{49}{2} \geq 9$,
	which is not possible by \eqref{cond.1}. 
	
	\item[(b)] If $(r,c)=(0,3)$, then $\ch_{\leq 2}(E_1) = (-1, -3H, -\frac{9}{2}H^2)$. Thus $E_1$ is the derived dual of a stable pair, so we must have $\ch_3(E_1) = -\frac{9}{2}H^3$ and $\ch(E_0) = (0, 3H, \frac{26}{2}H^2, \frac{58}{2}H^3)$, which is not possible by Lemma \ref{lem-general}. 

	\item[(c)] If $(r,c)=(0, 1)$, then the lowest wall is of equation  $w= 5b - \frac{17}{2}$ and the proof of Theorem \ref{thm-rank2} is valid. 
        
 	\item[(d)] If $(r, c)=(0, 2)$  then the lowest wall is of equation $w= \frac{19}{4} b - \frac{17}{2}$ and the proof of Theorem \ref{thm-rank2} is again valid. 
 \end{enumerate}

\end{proof}

\begin{Lem}
\label{LemmaX82}
	If $\CY= X_8$ and $\mathrm{P}_{m=0,\, \beta = \frac{1}{2}H^2} = 0$ \footnote{By the GV/PT correspondence, 
	one has $\mathrm{P}_{m=0,\, \beta = \frac{1}{2}H^2}=\GV_1^{(1)}$. The corresponding entry in \cite[Table 5]{Huang:2006hq} vanishes. We expect
	that this vanishing can be shown rigorously using arguments similar to \cite{li2009genus}. },
	then  Theorem \ref{thm-rank2} continues to hold for $\beta = \frac{17}{2}H^2$ and $m = -48$, corresponding to $\cO(\q^5)$ in $h_{2,1}$. 
\end{Lem}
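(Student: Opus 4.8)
The plan is to follow the proof of Lemma~\ref{LemmaX81}(iii) almost verbatim, and to isolate the single place where the extra input $\mathrm{P}_{m=0,\,\beta=\frac{1}{2}H^2}=\GV_1^{(1)}=0$ becomes necessary. For $\CY=X_8$ we have $\kappa=H^3=2$, and for $\beta=\frac{17}{2}H^2$, $m=-48$ the parameters of \eqref{defxa} are $x=\beta.H/H^3=\tfrac{17}{2}$ and $\alpha=-\tfrac{3m}{2\beta.H}=\tfrac{72}{17}$. Since $\tfrac{17}{2}\in[4,9)$ one has $f_2(x)=\tfrac{3}{2}\sqrt{x}>\alpha$ (see \eqref{deffx}), so Theorem~\ref{thm-rank2} does not apply in the form of \S\ref{subsec-mainth}; instead we work with the BMT wall directly, using Koseki's improved bound $w=\Gamma(b)$ in place of the parabola. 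A short computation shows that the final line $\ell_f$ of \eqref{final line} meets $w=\Gamma(b)$ at $b_1'<b_2'$ with $b_1'\in(3,4)$ and $b_2'<2b_1'$ (numerically $b_1'\approx 3.14$, $b_2'\approx 5.36$). In particular hypothesis (ii) of Lemma~\ref{lem.destabilising objects} fails, so — exactly as in Lemma~\ref{LemmaX81}(iii) — one has to rerun the wall analysis by hand rather than quoting that lemma.

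I would then repeat the computation inside the proof of Lemma~\ref{lem.destabilising objects} with $r=3$. For a wall with destabilising factors $E_0$ of rank $r'=\ch_0(E_0)\geq 0$ (slope-semistable) and $E_1$ of rank $-1$, writing $\ch_{\leq 1}(E_0)=(r',cH)$, moving along the wall gives $b_2'\,r'\leq c\leq b_1'(1+r')$; hence $r'(b_2'-b_1')\leq b_1'$ forces $r'\in\{0,1\}$, with $c\in\{1,2,3\}$ when $r'=0$ and $c=6$ when $r'=1$. The subcases $(r',c)=(0,1)$ and $(0,2)$ are the D4-charge one and D4-charge two configurations already present in Theorem~\ref{thm-rank2}: one checks that the lowest wall on which each can occur passes through $\bigl(3,\Gamma(3)\bigr)=\bigl(3,\tfrac{9}{2}\bigr)$ or lies above it, so the rank $-1$ factor stays semistable up to large volume and the proof of Theorem~\ref{thm-rank2} goes through unchanged, producing $P_1,P_2,P_3$ as in \eqref{WCF-rank2}. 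It then remains to kill the two new subcases $(r',c)=(1,6)$ and $(r',c)=(0,3)$.

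For $(r',c)=(1,6)$ the argument is that of Lemma~\ref{LemmaX81}(iii)(a), applied verbatim: the quantization conditions leave two possible values of $\ch_2(E_0)$, one of which is impossible for a semistable sheaf by Koseki's Theorem~1.1~\cite{koseki2022stability}, while the other yields a contradiction between the Castelnuovo bound \eqref{CastPT} for $E_0(-6H)$ (there being no wall for $E_0$ up to large volume by \cite[Lemma~3.5]{feyz:effective-restriction-theorem}) and the BMT inequality \eqref{BMTineq0} for $E_1$ at the point $(b=3+\epsilon,w)$ on the wall. For $(r',c)=(0,3)$ the rank $-1$ factor has $\ch_{\leq 2}(E_1)=(-1,-3H,-\tfrac{9}{2}H^2)$, which forces $E_1=\cO_\CY(3H)[1]$ (so that $\bOm_{b,w^+}(E_1)=1$), and then $E_0=\v-[\cO_\CY(3H)[1]]$ is a rank $0$ sheaf with $\ch_{\leq 2}(E_0)=(0,3H,13H^2)$ and $\ch_3(E_0)$ fixed by $m$. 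For $m=-49$ this class was excluded outright by Lemma~\ref{lem-general}; for $m=-48$ its Chern character falls just outside that range, so a different argument is needed, namely showing that the associated wall-crossing term vanishes, i.e. $\bOm_\infty(E_0)=0$. The idea is that such an $E_0$ cannot itself be $\nu_{b,w^+}$-semistable: decomposing it along the remaining walls and bounding the reduced charges of the factors via \eqref{qmax}, the only potentially non-zero piece is proportional to the stable pair invariant $\mathrm{P}_{m=0,\,\beta=\frac{1}{2}H^2}$, which equals $\GV_1^{(1)}$ by the GV/PT correspondence and vanishes by hypothesis. Hence this last wall contributes nothing, and \eqref{WCF-rank2} is unchanged.

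I expect the genuinely delicate step to be precisely this last identification: tracking the rank $0$ class $(0,3H,13H^2,\,\cdot\,)$ through the residual wall-crossing and the spectral-flow tensorings, and recognising that the only surviving invariant is exactly $\mathrm{P}_{m=0,\,\beta=\frac{1}{2}H^2}$. Everything preceding it is the same mechanical wall-by-wall bookkeeping as in Lemma~\ref{LemmaX81}, relying on the same three tools — Koseki's inequality, the Castelnuovo bound, and the BMT inequality. A secondary but genuinely necessary computational point is to pin down the intersection locations $b_1',b_2'$ of the BMT wall with the Koseki curve $\Gamma$, since the whole case distinction rests on $b_1'\in(3,4)$ together with $b_2'<2b_1'$.
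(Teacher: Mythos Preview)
Your overall strategy is right and matches the paper's: run the inequality $b_2'\,r\le c\le b_1'(1+r)$ (with $b_1'\approx 3.14$, $b_2'\approx 5.36$) to reduce to $(r,c)\in\{(0,1),(0,2),(0,3),(1,6)\}$, then kill the extra cases beyond those already handled by Theorem~\ref{thm-rank2}. However, you have mislocated where the hypothesis $\mathrm{P}_{0,\frac12 H^2}=0$ is actually needed, and in doing so you have mishandled three of the four cases.

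First, the case $(r,c)=(0,3)$ does \emph{not} fall outside Lemma~\ref{lem-general}. With $E_1\cong\cO_\CY(3H)[1]$ one finds $\ch_3(E_0)=57$, i.e.\ $s=\tfrac{57}{2}$, which is exactly the threshold $s\ge\tfrac{57}{2}$ in Lemma~\ref{lem-general}(i); so this case is excluded outright, just as for $m=-49$. Second, the case $(r,c)=(1,6)$ is where the extra input is actually used, and it has \emph{three} sub-cases for $\ch_2(E_0)$, not two: besides $18H^2$ (killed by Koseki) and $\tfrac{35}{2}H^2$, the lower position of $\ell_f$ for $m=-48$ also allows $17H^2$, which must be excluded separately (by comparing the BMT wall for $E_1$ with the actual wall). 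In the sub-case $\ch_2(E_0)=\tfrac{35}{2}H^2$ the Castelnuovo bound for $E_0(-6H)=(1,0,-\tfrac12 H^2,-mH^3)$ and the BMT inequality for $E_1$ at $b=3+\epsilon$ now only give $m\ge 0$ and $m\le 0$ respectively, so one is left with $m=0$; \emph{this} is where $\mathrm{P}_{0,\frac12 H^2}=0$ enters to eliminate the boundary case. Third, $(r,c)=(0,2)$ is not entirely routine either: when $\ch_2(E_1)=0$ the wall coincides with $\ell_f$, and one needs Lemma~\ref{lem-general}(ii) to exclude the resulting rank $0$ class for $E_0$. (Also, when $r=1$ the factor $E_1$ has rank $-2$, not $-1$.)
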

\begin{proof}
We have $b_1' \sim 3.14 < b_2' \sim 5.36$. Then using again the same notations as in Lemma \ref{lem.destabilising objects}, we get 
\begin{equation}
5.36\, r \leq c \leq 3.14\, (1+r). 
\end{equation}
	
\textbf{Case (a)} If $r=1$ and $c=6$, then we have the following cases: 
\begin{enumerate}
		\item [(a1)] $\ch_{\leq 2}(E_0) = (1, 6H, 18 H^2)$ and $\ch_{\leq 2}(E_1) = (-2, -6H, -\frac{19}{2}H^2)$. This is not possible by \cite[Theorem 1.1]{koseki2022stability}. 
		\vspace{.2 cm}
		\item [(a2)] $\ch(E_0) = (1, 6H, \frac{35}{2} H^2, (33-m)H^3)$ and $\ch(E_1) = (-2, -6H, -9H^2, (-9+m)H^3)$. Applying the BMT inequality \eqref{BMTineq0} for $E_1$ at the point $(b = 3+ \epsilon,w)$ 
		with $0<\epsilon\ll 1$ shows that $m\leq 0$. Since $\ch(E_0(-6H)) = (1, 0, -\frac{1}{2}H^2, -mH^3)$, the Castelnuovo
	bound \eqref{CastPT} and the vanishing $\mathrm{P}_{m=0,\, \beta = \frac{1}{2}H^2} = 0$ imply $m>0$, a contradiction.   
		\vspace{.2 cm}

		\item [(a3)] $\ch(E_0) = (1, 6H, 17H^2, mH^3) = (1, 0, -H^2, (m-30)H^3)\otimes \cO_{\CY}(6H)$ and \\
		$\ch(E_1) = (-2, -6H, -\frac{17}{2}H^2, (-m+24)H^3)$. Moreover, the equation of the wall is $w = \frac{51}{12} b - \frac{17}{2}$. By the Castelnuovo bound \eqref{CastPT}, we get $m-30  \leq 1$. Thus the final wall for $E_1$ given by the BMT inequality \eqref{BMTineq0} lies above or on the line with equation $w = \frac{9}{2} b -\frac{37}{4}$, which lies above our wall. This means there is no semistable object along the wall of class $\ch(E_2)$, a contradiction.  
\end{enumerate}

\bigskip 
	
\textbf{Case (b)} If $r=0$ and $c=3$, then $\ch_{\leq 2}(E_1) = (-1, -3H, -\frac{9}{2}H^2)$ and $\ch_{\leq 2}(E_0) = (0, 3H, \frac{26}{2}H^2)$. As $E_1$ is stable in the large volume by \cite[Corollary 3.11]{bayer2016space}, we must have $E_1 \cong \cO_{\CY}(3H)[1]$, and so $\ch_3(E_0) = 57$ which is not possible by Lemma \ref{lem-general}.

\bigskip
	
\textbf{Case (c)} If $r=0$ and $c=2$, then $\ch(E_1) = (-1, -2H, mH^2)$ for $m \leq 0$. If $m < 0$, the proof of Theorem \ref{thm-rank2} is valid. If $m=0$, then the wall is of equation $w= \frac{17}{4} b -\frac{17}{2}$. Thus $\ch(E_1) = (-1, 0, 2H^2, sH^3) \otimes \cO_{\CY}(2H)$ and $\ch(E_0) = (0, 2H, \frac{17}{2}H^2, (\frac{64}{3} -s )H^3)$. We know that $E_0$ is slope-semistable, so 
\begin{equation}
\frac{128}{3} -2 s \leq \frac{17^2}{8} +\frac{8}{12}\, ,
\end{equation}  
which implies $s \geq 3$. But Lemma \ref{lem-general} shows that $s \neq 3$ which is not possible by the Castelnuovo bound \eqref{CastPT} for $E_1$. 

\bigskip
	
\textbf{Case (d)} If $r=0$ and $c=1$, then $\ch(E_1) = (-1, -H, mH^2)$ for $m \leq 4$, so the wall lies above or on the line of equation $w= 9/2 b -17/2$ where the proof of Theorem \ref{thm-rank2} is valid. 
\end{proof}

\begin{Lem}\label{lem-general}
	If $\CY= X_8$, there is no $\nu_{b,w}$-semistable object $E$ of class 
	\begin{enumerate}
	    \item [(i)] $\ch(E) = (0, 3H, 13 H^2, sH^3)$ with $s \geq \frac{57}{2}$, or 
	\item [(ii)] $\ch(E) = (0, 2H, \frac{17}{2}H^2, \frac{55}{3}H^3)$. 
	\end{enumerate}
\end{Lem}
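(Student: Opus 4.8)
The plan is to suppose there is a $\nu_{b,w}$-semistable object $E$ of the stated class (for some $(b,w)\in U$) and to eliminate it in three stages: push $E$ to the large-volume chamber, cap its reduced $\ch_3$, and then obstruct it by a Riemann--Roch analysis of torsion sheaves on divisors of class $rH$, with $r=3$ in case (i) and $r=2$ in case (ii).

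\emph{Step 1: push $E$ to large volume.} As in Lemma~\ref{lem.destabilising objects} and the case analyses of Lemmas~\ref{LemmaX81}--\ref{LemmaX82}, I would show the class of $E$ has no wall above a fixed height: a destabilising sequence $A\hookrightarrow E\twoheadrightarrow B$ along such a wall would have $\cH^{\bullet}(A),\cH^{\bullet}(B)$ of controlled rank, one factor being a rank-zero slope-semistable sheaf with $0<\ch_1.H^2\le rH^3$; feeding its admissible Chern characters into Koseki's improved Bogomolov--Gieseker inequality for $X_8$ \cite{koseki2022stability} (the curve $w=\Gamma(b)$), together with the tight numerics of the class, excludes each one. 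Hence $E$ stays semistable as $w\to\infty$, so it is a Gieseker-semistable pure two-dimensional sheaf $F$ whose scheme-theoretic support is an effective divisor $D$ of class $rH$.

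\emph{Step 2: cap $\ch_3$.} By \cite[Lemma B.3]{Feyzbakhsh:2021rcv}, equivalently \eqref{qmax}, $\hq_0(F)\le\frac1{24}\chi(\cD_r)$. In case~(i), $\chi(\cD_3)=186$ and $\hq_0=2s-\frac{305}{6}$, so this forces $s<\frac{59}{2}$, leaving only the finitely many values $s\in\{\frac{57}{2},29\}$; in case~(ii) the class has $\hq_0=\frac{101}{24}<\frac{104}{24}=\frac1{24}\chi(\cD_2)$, so the bound alone does not conclude and the single class survives.

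\emph{Step 3: a Riemann--Roch/lattice obstruction.} Now $F$ restricts to a rank-one torsion-free sheaf on each component of $D$. When $D$ is integral and its general member $S\in|rH|$ has $\Pic S=\IZ\,(H|_S)$ (Noether--Lefschetz), pushforward gives $\ch_2(i_*F)=(c_1(F){\cdot}h)\tfrac{H^2}{\kappa}-\tfrac{r^2}{2}H^2$ with $c_1(F){\cdot}h\in r\kappa\,\IZ$, so $\ch_2(i_*F)$ lies in a fixed coset of $rH^2\IZ$ that does not contain the prescribed value; when $\Pic S$ is larger (for instance $S$ contains a line $C$, for which adjunction on $S$ forces $C^2=-2-r$), the same Riemann--Roch computation on $S$, now using the intersection form of $\Pic S$, shows $\ch_3(i_*F)$ is constrained to a parity class (a half-integer, via the odd self-intersections) incompatible with the integral values of $\ch_3$ in the statement, and the $\ch_3$-cap from Step~2 turns this into a finite check; reducible and non-reduced $D$ are handled by matching reduced Hilbert polynomials of the restrictions of $F$ to the components of $D$, which forces the components to be suitably balanced and reduces again to the integral analysis.

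\emph{Main obstacle.} The crux is Step~3: the clean bound \eqref{qmax} is far from tight in case~(i) with $s=\frac{57}{2}$ and misses by $\frac18$ in case~(ii), so one genuinely needs the Riemann--Roch/lattice obstruction, and the real difficulty is to carry it out uniformly over all admissible supports $D$ of class $rH$ -- integral with various Picard lattices, reducible, and non-reduced -- verifying in each case that Gieseker semistability together with the intersection theory of the support really does rule out the prescribed Chern character.
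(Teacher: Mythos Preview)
Your approach is genuinely different from the paper's, and considerably more complicated. The paper does \emph{not} push $E$ to large volume and analyze it as a torsion sheaf on a divisor; instead it goes in the opposite direction, toward the empty region guaranteed by the BMT inequality.

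Concretely, for a rank-zero class all walls are lines of slope $\nu(\v)=\ch_2.H/(\ch_1.H^2)$, and the BMT inequality \eqref{BMTineq0} gives a parallel line $\ell_f$ below which no semistable object of this class exists. One then enumerates all possible destabilising sequences along any wall (necessarily above or on $\ell_f$): writing $\ch_{\leq 1}(E_1)=(r,cH)$ and $\ch_{\leq 1}(E_{-1})=(-r,(\ch_1-c)H)$ and intersecting $\ell_f$ with Koseki's curve $\Gamma$ at $b_1'<b_2'$, one gets $b_2'r\le c\le b_1'r+\ch_1.H^2/H^3$. In case~(ii) ($b_1'\sim 3.1$, $b_2'\sim 5.3$) this has no solution with $r\ge 1$, so there are no walls at all. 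In case~(i) ($b_1'\sim 3.31$, $b_2'\sim 5.34$) the only solution is $(r,c)=(1,6)$; the wall location then forces $\ch_{\leq 2}(E_1)=(1,6H,17H^2)$, $\ch_{\leq 2}(E_{-1})=(-1,-3H,-4H^2)$, and the Castelnuovo bound \eqref{CastPT} applied to both factors gives incompatible constraints on $\ch_3$. Hence again no walls. Since semistability is then constant on $U$ and BMT forbids it below $\ell_f$, there is no semistable object anywhere. No geometry of the support is needed.

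Your Step~1, if carried out carefully, would essentially reproduce this wall enumeration --- but you throw away the punchline by heading up to large volume rather than down to the empty region. The serious gap is then your Step~3: the Noether--Lefschetz argument only handles the generic divisor in $|rH|$, and there is no a~priori control on how special the support of a given Gieseker-semistable sheaf can be. Carrying the Riemann--Roch/lattice obstruction uniformly over all integral $S$ with arbitrary Picard lattice, and over all reducible and non-reduced $D$, is exactly the ``main obstacle'' you flag, and it is not clear it can be completed. The paper's downward wall-crossing argument sidesteps this difficulty entirely.
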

\begin{proof}
    First, assume that there is such an object as in case (i). Then the final wall given by BMT inequality \eqref{BMTineq0} lies above or on the line $\ell_f$ with equation $w = \frac{13}{3}b -\frac{163}{18}$. This line intersects the curve $\Gamma$ at two points with $b$-values $b_1' \sim 3.31 < b_2' \sim 5.34$. Let the destabilising factors $E_1$ and $E_{-1}$ be of classes $\ch_{\leq 1}(E_1) = (r, cH)$ and $\ch_{\leq 1}(E_{-1}) = (-r, (-c+3)H)$. Then the same argument as in Lemma \ref{lem.destabilising objects} gives 
    \begin{equation*}
        5.34 r \leq c \leq 3.31 r+3,
    \end{equation*}
    which implies $(r, c) = (1, 6)$. Let $\ch_2(E_1) = (18-m)H^2$, then $\ch_2(E_{-1}) = (-5+m)H^2$. Since the wall lies above or on $\ell_f$, we must have $\ch_{\leq 2}(E_1) = (1, 6H, 17H^2)$ and so $\ch_{\leq 2}(E_{-1}) = (-1, -3H, -4H^2)$. Then applying the Castelnuov bound \eqref{CastPT} to $E_1$ and $E_{-1}$ leads to a contradiction. 
    
    In case (ii), the final wall is of equation $w = \frac{17}{4}b - \frac{69}{8}$ which intersects the curve $\Gamma$ at $b_1' \sim 3.1 < b_2' \sim 5.3$. Then as in case (i), let the destabilising classes be $\ch_{\leq 1}(E_1) = (r, cH)$ and $\ch_{\leq 1}(E_{-1}) = (-r, (-c+2)H)$. Consequently, we must have $5.3 r \leq c \leq 3.1 r+2$ which is not possible. 
\end{proof}

\begin{Lem}
\label{LemmaX83}
	If $\CY=X_8$, $\beta = 8H^2$ and $m =-44$, 
	corresponding to $\cO(q^4)$ term in $h_{2,0}$,
	then the only additional wall $\ell$, compared to Theorem \ref{thm-rank2}, is formed by the classes $\v_1= (1, 0, -H^2, H^3) \otimes \cO_{\CY}(6H)$ and $\v_2=\v_3 =[\cO_{\CY}(3H)[1]]$.  Moreover, if the point $(b_0,w_0)$ lies just above the wall $\ell$, then 
	$\bOm_{b_0,w_0}(\v_1) = \DT(\kappa,-\kappa)$ and   
	\begin{equation}\label{c.1}
		\bOm_{b_0,w_0}([\cO_{\CY}(3H)]-\v_1) = (-1)^{\chi(\v_1, [\cO_{\CY}(3H)]) -1} \chi(\v_1, [\cO_{\CY}(3H)]) \DT(\kappa,-\kappa). 
	\end{equation}
\end{Lem}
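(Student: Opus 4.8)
The plan is to follow the scheme of Lemmas~\ref{LemmaX81}--\ref{LemmaX82}, the new feature being that here $b_1'>3$, so the clean hypotheses of Lemma~\ref{lem.destabilising objects} fail. For the class $\v$ of~\eqref{defv} with $\beta=8H^2$ and $m=-44$ (so $H^3=2$), the final line~\eqref{final line} is $w=\tfrac{33}{8}b-8$, and a short computation shows it meets the Koseki curve $\Gamma$ at $b_1'=\tfrac{65-\sqrt{257}}{16}\approx 3.06$, on the branch of $\Gamma$ over $[3,\tfrac{13}{4})$, and at $b_2'=\tfrac{69}{13}\approx 5.31$. Since $b_2'-b_1'<b_1'$, Theorem~\ref{thm-rank2} does not apply directly, and the wall analysis has to be redone by hand, keeping track of which walls are genuinely new.

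First I would bound the destabilising factors. For a wall $\ell$ of $\v$ lying above $\ell_f$, write $\ch_{\le 1}(E_0)=(r',c'H)$, $\ch_{\le 1}(E_1)=(-1-r',-c'H)$ with $r'\ge 0$ (relabelling $E_0,E_1$ if necessary); moving along $\ell$ gives, exactly as in Lemma~\ref{lem.destabilising objects}, $b_2'r'\le c'\le b_1'(1+r')$, hence $r'(b_2'-b_1')\le b_1'$ and $r'\in\{0,1\}$. For $r'=0$ one has $1\le c'\le b_1'<4$, and the estimate $\ch_1(\cH^{-1}(E_1)).H^2\le b_1'r_1H^3$ together with $\ch_1(\cH^0(E_1)).H^2\ge b_2'(r_1-1)H^3$ forces $\rk\,\cH^{-1}(E_1)=1$ (rank $\ge 2$ would give $c'\le 2b_1'-b_2'<1$). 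For $c'\in\{1,2\}$ this is the situation of Theorem~\ref{thm-rank2}; for $c'=3$ the wall lies above $\ell_f$ only if $\ch_2(E_0)\ge\tfrac{25}{2}H^2$, while $\ch_2(E_1)=\ch_2(\v)-\ch_2(E_0)\ge -\tfrac{9}{2}H^2$ because $\cH^{-1}(E_1)=I_W\otimes\cO_\CY(3H)$ with $W$ of codimension $\ge 2$; hence $\ch_2(E_0)=\tfrac{25}{2}H^2$, and this is the same wall $\ell$ as the remaining case $r'=1$. For $r'=1$: $b_2'\le c'\le 2b_1'$ gives $c'=6$, so $\ch_{\le 1}(E_0)=(1,6H)$ and $\ch_{\le 1}(E_1)=(-2,-6H)$, with $\rk\,\cH^{-1}(E_1)=2$ and $\cH^0(E_1)$ supported in codimension $\ge 2$ (rank $\ge 3$ for $\cH^{-1}(E_1)$ being excluded, since it would require $b_2'\le 3b_1'-6$).

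The heart of the argument is to pin down the remaining Chern classes in the $r'=1$ case. Since $E_0$ is stable at large volume it is a Gieseker-semistable sheaf $E_0=I_Z\otimes\cO_\CY(6H)$, so $\ch_2(E_0)=(18-q)H^2$ with $q=\beta_Z.H/H^3\ge 0$; the requirement that $\ell$ lie above $\ell_f$ (the BMT inequality~\eqref{BMTineq0} for $\v$) gives $q<\tfrac54$, and $q=0$ is excluded because then $\ch_{\le 2}(E_1)=(-2,-6H,-10H^2)$ violates the Bogomolov--Gieseker inequality (a fortiori Koseki's inequality \cite[Theorem~1.1]{koseki2022stability}) for $\cH^{-1}(E_1)$, exactly as in case (a1) of Lemma~\ref{LemmaX82}. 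Hence $q=1$, $\ch_{\le 2}(E_0)=(1,6H,17H^2)$ and $\ch_{\le 2}(E_1)=(-2,-6H,-9H^2)$; saturation of Bogomolov then forces $\cH^{-1}(E_1)$ to be a sum of twisted finite-colength ideals $I_{W_i}\otimes\cO_\CY(3H)$ (so $C_3(\cH^{-1}(E_1))\le 18$), while the BMT inequality~\eqref{BMTineq0} for $E_1$, evaluated on the part of $\ell$ inside $U$ --- which has $b>3$, since $\ell$ touches $\partial U$ at $b=3$ --- gives $C_3(E_1)\le -18$. Together with $C_3(E_1)=-C_3(\cH^{-1}(E_1))+C_3(\cH^0(E_1))\ge -18$ this forces equality: all the $W_i$ are empty, $\cH^0(E_1)=0$, so $\cH^{-1}(E_1)=\cO_\CY(3H)^{\oplus 2}$ (there being no nontrivial self-extension as $H^1(\cO_\CY)=0$), $\ch_3(E_0)=31H^3$, and $I_Z$ has Chern character $1-H^2+H^3$ --- the ideal sheaf that saturates the Castelnuovo bound~\eqref{CastPT} and is counted by $\DT(\kappa,-\kappa)$. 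This identifies $\ell$ as the wall formed by $\v_1=(1,0,-H^2,H^3)\otimes\cO_\CY(6H)$ and $\v_2=\v_3=[\cO_\CY(3H)[1]]$, and shows it is the unique extra wall.

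It then remains to evaluate the two invariants at a point $(b_0,w_0)$ just above $\ell$. One has $\bOm_\infty(\v_1)=\DT(\kappa,-\kappa)$, and by \cite[Lemma~3.5]{feyz:effective-restriction-theorem} the rank-one class $\v_1$ has no wall between $\ell$ and the large-volume limit (none crossing the vertical lines $b\in\IZ$ in the relevant range), so $\bOm_{b_0,w_0}(\v_1)=\DT(\kappa,-\kappa)$. For the rank-zero class $[\cO_\CY(3H)]-\v_1=-\bigl(\v_1+[\cO_\CY(3H)[1]]\bigr)$, one checks that $\bOm$ vanishes just below $\ell$ --- it occurs only as the bound state $\v_1\oplus\cO_\CY(3H)[1]$, which forms along $\ell$ --- so crossing $\ell$ is a primitive wall-crossing of a rank-$(1,1)$ configuration, and the primitive wall-crossing formula \cite{Denef:2007vg}, together with $\bOm_{b_0,w_0}(\v_1)=\DT(\kappa,-\kappa)$ and $\bOm(\cO_\CY(3H)[1])=1$, yields \eqref{c.1}. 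The main obstacle is the third step: with the hypotheses of Lemma~\ref{lem.destabilising objects} dropped, one must carefully rule out every stray destabilising configuration and extract the exact Chern character of $E_0$ from the interplay of Koseki's inequality, the Castelnuovo bound and the BMT inequality --- much as in Lemmas~\ref{LemmaX82} and~\ref{lem-general}.
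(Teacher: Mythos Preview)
Your proof follows essentially the same approach as the paper's: enumerate the possible pairs $(r',c')$ for the destabilising factors via the inequality $b_2' r' \le c' \le b_1'(1+r')$, dispose of $(0,1)$ and $(0,2)$ as already covered by Theorem~\ref{thm-rank2}, and analyse the remaining cases to isolate the single extra wall $\ell$ with stable factors $\v_1$ and two copies of $[\cO_\CY(3H)[1]]$.

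A few minor differences and one small gap are worth noting. For $(r',c')=(0,3)$ the paper claims the case is vacuous via \cite[Lemma~B.3]{Feyzbakhsh:2021rcv}; you instead observe that it lands on the same line $\ell$ as the $(1,6)$ case, so its stable factors are already $\v_1$ and $[\cO_\CY(3H)[1]]$. This is a cleaner way to handle it, and in fact the $\hat q_0$-bound does not obviously give a contradiction here, so your route is the safer one. In the $(1,6)$ analysis you write $q<\tfrac54$ and jump to $q\in\{0,1\}$, but since $q=\beta_Z.H/H^3$ with $H^3=2$, the value $q=\tfrac12$ is a priori allowed; however $\ch_{\le 2}(E_1)=(-2,-6H,-\tfrac{19}{2}H^2)$ then violates Bogomolov--Gieseker exactly as in case~(a1) of Lemma~\ref{LemmaX82}, so the conclusion $q=1$ stands (your reference to that case fits $q=\tfrac12$ rather than $q=0$). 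For the final claim~\eqref{c.1}, the paper simply cites \cite[Theorem~1.1]{Feyzbakhsh:2022ydn}; your primitive wall-crossing sketch is correct in spirit but you should justify that the class $\v_1+[\cO_\CY(3H)[1]]$ has $\ell$ as its only wall and is empty below it, which is precisely what that reference provides.
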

\begin{proof}
	In this case, we know $b_1' \sim 3.06 < b_2' \sim 5.31$, so using the same notations as in Lemma \ref{lem.destabilising objects},
	\begin{equation}
		5.3 r \leq c \leq 3.06 (1+r). 
	\end{equation}
	
	\textbf{Case (a1).} If $r=0$ and $c=3$, then we must have $\ch(E_1) = (-1, \ -3H, \ -\frac{9}{2}H^2, \ -\frac{9}{2}H^3)$ and $\ch(E_0) = (0,\ 3H, \  \frac{25}{2}H^2, \  \frac{53}{2}H^3  )$. But applying  \cite[Lemma B.3]{Feyzbakhsh:2021rcv} for $E_0$ leads to a contradiction.     
	
	\textbf{Case (a2)} If $r=0$ and $c=2$, then $\ch(E_1) = (-1, -2H, \frac{x}{2}H^2)$ where $x \in \Z$ and $-4 \leq x \leq -1$. Consequently, the wall lies above or on the line with the equation $w= \frac{17}{4}b-8$. One can easily verify that the proof of Theorem \ref{thm-rank2} remains valid.  
	
	\textbf{Case (a3)}  If $r=0$ and $c=1$, then $\ch(E_1) = (-1, -H, \frac{x}{2}H^2)$ where $x \in \Z$ and $-1 \leq x \leq 7$. Then the wall lies above or on the line with equation $w=\frac92 b-8$. Again, the proof of Theorem \ref{thm-rank2} holds true. 
	
	\textbf{Case (b).} If $r=1$, then $c= 6$, so $\ch_{\leq 1}(E_0) = (1, 6H)$ and $\ch_{\leq 1}(E_{1}) = (-2, -6H)$.  The location of the final wall forces $\ch_2(E_1) = -9H^2$ and $\ch_2(E_0) =  17H^2$. Applying the BMT inequality \eqref{BMTineq0} for $E_1$ at the point $(b = 3+ \epsilon,w)$ on the wall, where $0 < \epsilon \ll 1$, implies that $\ch_3(E_1) \leq -9H^3$. Let $(b_0, w_0)$ be a point just above this wall. Since the wall intersects the vertical line $b=6$, we get $\bOm_{b_0, w_0}(\ch(E_0)) = \bOm_{b_0, \infty}(\ch(E_0))$. Thus the Castelnuovo bound \eqref{CastPT} implies that $\ch_3(E_1) = -9H^3$, so $\ch(E_1) = \ch(\cO_{\CY}(3H)^{\oplus 2}[1])$ and $\ch(E_0) = (1, \ 0 ,\ - H^2,\  H^3) \otimes \cO_{\CY}(6H) =\v_1 $ as claimed in Lemma. The final claim \eqref{c.1} follows from \cite[Theorem 1.1]{Feyzbakhsh:2022ydn}. 
\end{proof}

\begin{Lem}
\label{LemmaX84}
	If $\CY= X_8$, $\beta = 7H^2$ and $m =-36$, corresponding to $\cO(q^2)$ term in $h_{2,2}$, 
	then  the only additional wall $\ell$, compared to Theorem \ref{thm-rank2}, is made by the classes $[ \cO_{\CY}(5H)]$ and $v = (-2, -5H, -\frac{11}{2}H^2, -\frac{17}{6}H^3)$.  Moreover, if the point $(b_0,w_0)$ with $b_0> \frac{5}{2}$ lies just above the wall $\ell$, then $\bOm_{b_0, w_0}(v) = \bOm_{b_0, w \to \infty}(v)$. 
	\newline
	When we move from the large volume limit for class $v$ to the empty region (given by the BMT inequality \eqref{BMTineq}), we encounter only one wall for class $v$, formed by the classes $\v_1=\v_2= \v_3 = [\cO_{\CY}(3H)[1]]$ and $\v_4= [ \cO_{\CY}(4H)]$. Therefore one can compute $\bOm_{b_0, w \to \infty}(v)$ using the wall-crossing formula of \cite{Joyce:2008pc}. 
\end{Lem}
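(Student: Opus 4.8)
The plan is to mimic the proofs of Lemmas~\ref{LemmaX81}--\ref{LemmaX83}. Write $\v=\left(-1,0,7H^2,18H^3\right)$ for the rank $-1$ class whose wall-crossing computes the desired index, equivalently the PT class $\PT(14,-36)$. By the BMT inequality~\eqref{BMTineq} every wall for $\v$ lies above the line $\ell_f\colon w=\tfrac{27}{7}b-7$, and a short computation shows that $\ell_f$ meets the Koseki curve $w=\Gamma(b)$ at $b_1'=\tfrac{49}{18}\sim 2.72$ and $b_2'=\tfrac{55+\sqrt{183}}{14}\sim 4.89$. In particular $b_2'-b_1'<b_1'$, so hypothesis~(ii) of Lemma~\ref{lem.destabilising objects} fails and Theorem~\ref{thm-rank2} does not apply verbatim; nevertheless the Chern-character bound in that lemma still holds, and for a destabilising factor $E_0$ with $\ch_{\le 1}(E_0)=(r',c'H)$ it gives $b_2'r'\le c'\le b_1'(1+r')$, which together with $r'\ge 0$ leaves only $(r',c')\in\{(0,1),(0,2),(1,5)\}$, the case $c'=0$ being trivial.

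For $(r',c')\in\{(0,1),(0,2)\}$ the destabilising configurations are precisely those producing the walls of types~(1), (2) and~(3) in Theorem~\ref{thm-rank2}, so the first task is to check, case by case exactly as in Lemmas~\ref{LemmaX81}--\ref{LemmaX83}, that the proof of the Theorem survives the failure of~(ii): these walls lie above suitable lines meeting the vertical line $b=3$ or $b=5$ above $\Gamma$, on which the ``no sub-wall up to the large volume limit'' statements follow from \cite[Lemma 3.5]{feyz:effective-restriction-theorem} and Koseki's inequality \cite[Theorem 1.1]{koseki2022stability}, possibly after invoking an auxiliary non-existence statement in the spirit of Lemma~\ref{lem-general} if a borderline rank-$0$ factor occurs. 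The genuinely new case is $(r',c')=(1,5)$, where $E_0$ has rank $1$ with $\ch_1(E_0)=5H$ and $E_1=\v-E_0$ has rank $-2$ with $\ch_1(E_1)=-5H$. Since $\Pi(E_0)$ lies on the wall and hence above $\ell_f$ one gets $\ch_2(E_0).H\ge\tfrac{86}{7}H^3$; combined with the classical Bogomolov--Gieseker bound $\ch_2(E_0).H\le\tfrac{25}{2}H^3$ and the quantization of $\ch_2$, this forces $\ch_2(E_0)=\tfrac{25}{2}H^2$, so that $E_0=\cO_\CY(5H)\otimes I_Z$ for a zero-dimensional $Z$ and the wall $\ell$ is the line $w=\tfrac{39}{10}b-7$ through $\Pi(\v)=(0,-7)$ and $\Pi(\cO_\CY(5H))=(5,\tfrac{25}{2})$. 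The Castelnuovo bound~\eqref{CastPT} for $E_0(-5H)$ gives $\ch_3(E_0)\le\tfrac{125}{6}H^3$, while applying the BMT inequality to $E_1$ at a point of $\ell$ with $b$ close to $5$ forces $\ch_3(E_1)=-\tfrac{17}{6}H^3$; hence $Z$ is empty, $E_0=\cO_\CY(5H)$ and $E_1=v$, which proves the first assertion.

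Next I would establish the third assertion. For the class $v$ the BMT inequality~\eqref{BMTineq} reads $w\ge\tfrac72 b-6$, a line meeting $w=\Gamma(b)$ at $b\sim 2.7$ and $b\sim 4.3$; running the Chern-character argument of Lemma~\ref{lem.destabilising objects} for this rank $-2$ class together with Koseki's inequality shows that the only Jordan--H\"older factors of $v$ along any wall down to this line have $\ch_{\le 1}$ proportional to that of $\cO_\CY(3H)$ or of $\cO_\CY(4H)$; their $\ch_{\le 2}$ is then pinned down by the requirement of lying on the wall and their $\ch_3$ by the Castelnuovo bound~\eqref{CastPT} and a non-existence statement of the type of Lemma~\ref{lem-general}, so that matching total rank and $\ch_1$ with those of $v$ forces the factors to be three copies of $[\cO_\CY(3H)[1]]$ and one copy of $[\cO_\CY(4H)]$, occurring on the single wall $w=\tfrac72 b-6$, which is exactly the BMT line of $v$. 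Since below this line the moduli space of $v$ is empty, crossing this one wall with the wall-crossing formula of \cite{Joyce:2008pc} -- whose combinatorics is that of a Kronecker quiver $K_4$ with dimension vector $(3,1)$, giving the value $\bOm_{K_4}(3,1)=-4$ -- computes $\bOm_{b_0,w\to\infty}(v)$. The second assertion then follows at once: the unique wall of $v$ is $w=\tfrac72 b-6$, and for $b_0>\tfrac52$ one has $\tfrac72 b_0-6<\tfrac{39}{10}b_0-7$, so a point just above $\ell$ lies above all walls of $v$, whence $\bOm_{b_0,w_0}(v)=\bOm_{b_0,w\to\infty}(v)$.

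I expect the main obstacle to be the exhaustive ruling-out in the two case analyses -- in particular the one for the rank $-2$ class $v$, where more Jordan--H\"older patterns are a priori possible than for a rank $-1$ class and where the multiplicities must be determined -- together with verifying, for the $(0,1)$ and $(0,2)$ configurations, that the argument of Theorem~\ref{thm-rank2} genuinely goes through despite the failure of hypothesis~(ii); as in Lemmas~\ref{LemmaX81}--\ref{LemmaX83}, this is where the Koseki-improved boundary curve $\Gamma$, and possibly one more lemma in the spirit of Lemma~\ref{lem-general}, do the real work.
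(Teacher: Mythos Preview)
Your proposal is correct and follows essentially the same route as the paper. The paper likewise reduces to the cases $(r,c)\in\{(0,1),(0,2),(1,5)\}$, dismisses the first two as in the Theorem, and in the $(1,5)$ case pins down $\ch_2(E_0)=\tfrac{25}{2}H^2$ from the position of the wall and then $\ch_3$ from the Castelnuovo bound for $E_0$ together with BMT for $E_1$; the second claim and the JH analysis for the rank $-2$ class $v$ are also obtained by a short case analysis, exactly as you outline.

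Two small technical differences are worth noting. First, for the bound on $\ch_3(E_1)$ the paper does not evaluate BMT on the wall $\ell$ near $b=5$, but instead observes (via \cite[Lemma~3.5]{feyz:effective-restriction-theorem}) that no wall for $E_1$ crosses the vertical line $b=3$, and applies BMT at $(3,\tfrac{9}{2}+\epsilon)$; this directly yields $\ch_3(E_1)\le-\tfrac{17}{6}H^3$, after which equality follows from $\ch_3(E_0)+\ch_3(E_1)=18H^3$ and the Castelnuovo bound for $E_0$. Second, for the rank $-2$ class $v$ the paper works with the standard parabola $w=\tfrac{b^2}{2}$ rather than the Koseki curve $\Gamma$: the BMT line $\ell_v$ meets it exactly at $b=3$ and $b=4$, and combining this with ``no wall for $v$ crosses $b=3$'' already forces $\ell_v$ to be the unique wall; the JH analysis then reduces to the three possibilities $(r',c')\in\{(-1,-3),(0,1),(1,4)\}$, and in the key case $(-1,-3)$ the paper finishes by invoking \cite[Theorem~1.1]{Feyzbakhsh:2022ydn} and \cite[Lemma~1]{Alexandrov:2023zjb} on the residual rank $-1$ factor $(-1,0,H^2,H^3)\otimes\cO_\CY(2H)$, rather than a Lemma~\ref{lem-general}-type nonexistence statement. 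Your slope comparison $\tfrac{7}{2}b_0-6<\tfrac{39}{10}b_0-7\Leftrightarrow b_0>\tfrac{5}{2}$ for the second assertion is a clean alternative to the paper's implicit use of the same ``no wall crosses $b=3$'' fact.
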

\begin{proof}
	We have $b_1' = 2.72 < b_2' = 4.89$ and $4.8 r \leq c \leq 2.7 (1+r)$. Thus either (a) $r=0$ and $c=1$ or $2$, or (b) $r=1$ and $c=5$. One can easily check that the proof of Theorem \ref{thm-rank2} is valid for case (a), so we only need to consider case (b). 
	
	When $(r, c) = (1, 5)$, the location of the final wall dictates that $\ch_{\leq 2}(E_0) = (1, 5H, \frac{25}{2}H^2)$ and $\ch_{\leq 2}(E_1) = (-2, -5H, -\frac{11}{2}H^2)$. Since there is no wall for $E_0$ crossing the vertical line $b=4$, $E_0$ is slope-stable and so $\ch_3(E_0) \leq \frac{125}{6}H^3$. On the other hand, there is no wall for $E_1$ crossing the vertical line $b=3$, thus applying the BMT inequality \eqref{BMTineq} at the point $(b, w) =(3, \frac{9}{2}+\epsilon)$ implies that $\ch_3(E_1) \leq -\frac{17}{6}$. Given that $\ch_3(E_0) + \ch_3(E_1) = 18H^2$, we conclude that $E_0 = \cO_{\CY}(5H)$ and $\ch(E_1) = v = (-2, -5H, -\frac{11}{2}H^2, - \frac{17}{6}H^3)$, completing the proof of the first part of the Lemma. 
	
	Now, we perform wall-crossing for the rank $-2$ class $v$. The final wall $\ell_{v}:w= \frac72 b-6$ for class $v$, as determined by the BMT inequality \eqref{BMTineq}, intersects the boundary $w = \frac{b^2}{2}$ at two points with $b$-values $3$ and $4$. Since there is no wall for class $v$ crossing the vertical line $b=3$, $\ell_v$ is the only wall that affects class $v$. Let $F_1$ and $F_2$ be the destabilizing objects along the wall for an object $F$ of class $v$. Writing $\ch_{\leq 1}(F_1) = (r', c'H)$, we find that $\ch_{\leq 1}(F_2) =(-r'-2, (-c'-5)H)$. For any $b \in (3, 4)$, we must have
	\begin{equation*}
	   br' \leq c' \leq -5 +b(r'+2). 
	\end{equation*}
Then, up to reordering the factors, the only possibilities for the pair $(r', c')$ are: (i) $(-1, -3)$, (ii) $(0, 1)$, and (iii) $(1, 4)$. In case (i), the location of the wall compels $\frac{1}{H^3}\ch_2(F_1).H = -\frac{9}{2}$. Consequently, $F_1 = \cO_{\CY}(3H)[1]$, and $\ch(F_2) = (-1, -2H, -H^2, \frac{5}{3}H^3) = (-1, 0, H^2, H^3) \otimes \cO_{\CY}(2H)$. Combining \cite[Theorem 1.1]{Feyzbakhsh:2022ydn} with \cite[Lemma 1]{Alexandrov:2023zjb} demonstrates that $\ell_v$ is the sole wall for $F_2$ with destabilizing factors of classes $[\cO_{\CY}(3H)^{\oplus 2}[1]]$ and $[\cO_{\CY}(4H)]$. Applying the same argument for cases (ii) and (iii) reveals that $\cO_{\CY}(4H)$ and three copies of $[\cO_{\CY}(3H)[1]]$ are all possible stable factors along the wall $\ell_v$ for class $v$.         
	
\end{proof}

\providecommand{\href}[2]{#2}\begingroup\raggedright\endgroup


\end{document}